\newenvironment{proof}[1][Proof]{\begin{trivlist}
\item[\hskip \labelsep {\bfseries #1}]}{\end{trivlist}}
\newcommand{\qed}{\nobreak \ifvmode \relax \else
      \ifdim\lastskip<1.5em \hskip-\lastskip
      \hskip1.5em plus0em minus0.5em \fi \nobreak
      \vrule height0.40em width0.6em depth0.25em\fi}
\newtheorem{lemma}{Lemma}
\newtheorem{theorem}{Theorem}
\newtheorem{proposition}{Proposition}
\begin{document}
%
\title{Efficient QP-ADMM Decoder for Binary LDPC Codes and Its Performance Analysis}

\author{Jing Bai,
        Yongchao Wang,
        Qingjiang Shi}

%

\markboth{}%
{}

\maketitle

\begin{abstract}
This paper presents an efficient quadratic programming (QP) decoder via the alternating direction method of multipliers (ADMM) technique, called QP-ADMM, for binary low-density parity-check (LDPC) codes. Its main contents are as follows: first, we relax maximum likelihood (ML) decoding problem to a non-convex quadratic program. Then, we develop an ADMM solving algorithm for the formulated non-convex QP decoding model.
In the proposed QP-ADMM decoder, complex Euclidean projections onto the check polytope are eliminated and variables in each updated step can be solved analytically in parallel.
Moreover, it is proved that the proposed ADMM algorithm converges to a stationary point of the non-convex QP problem under the assumption of sequence convergence.
We also verify that the proposed decoder satisfies the favorable property of the \emph{all-zeros assumption}.
Furthermore, by exploiting the inside structures of the QP model, the complexity of the proposed algorithm in each iteration is shown to be linear in terms of LDPC code length. Simulation results demonstrate the effectiveness of the proposed QP-ADMM decoder.
\end{abstract}

\begin{IEEEkeywords}
Quadratic programming (QP) decoding, alternating direction method of multipliers (ADMM), low-density parity-check (LDPC) codes, linear complexity.
\end{IEEEkeywords}

\IEEEpeerreviewmaketitle

\section{Introduction}
Low-density parity-check (LDPC) codes were first introduced by Gallager in 1962 \cite{Gallager}.
Later, owing to the rediscovery of LDPC codes by MacKay and Neal in the 1990s, significant attention was drawn to LDPC codes due to their excellent near-Shannon performance \cite{MacKay-1} \cite{MacKay-2}.
{The decoders most used for LDPC codes are based on belief propagation (BP) algorithm and its variants \cite{BP-decoder1,channel-code,BP-decoder,BP-implement-TSP1,BP-implement-TSP2,BP-implement-TSP3}.}
However, BP decoding usually suffers from {\it error floor effects} in high signal-to-noise ratio (SNR) regions.
Moreover, analyzing the performance of general LDPC codes using BP decoding in theory is very challenging because of its message update rules and the graphical structure of the code such as stopping sets
\cite{stopping-sets}, trapping sets \cite{trapping-sets} and graph-cover pseudo-codewords \cite{graph-cover-pseudocodewords}.

In the past decade, mathematical programming (MP) techniques, such as linear programming (LP), have attracted increasing attention for decoding LDPC codes due to their strong theoretically-guaranteed
decoding performance \cite{FeldmanLP} \cite{MP-decoding}. In MP decoding, the maximum-likelihood (ML) decoding problem is modeled as a linear integer programm and then relaxed to an MP problem, which can be solved by free MP solvers \cite{Sedumi} \cite{CVX} in polynomial computational complexity.
Therefore, MP decoding offers some desirable advantages in comparison with BP decoding. For example, LP decoding has the \emph{ML-certificate} property, i.e., if the output of the LP decoder is an integer solution, then it is guaranteed to be an ML codeword. Moreover, some important decoding performance, such as convergence and complexity, can be analyzed by the existing mathematical tools \cite{FeldmanLP} \cite{LP-error-floor2}.
However, there still exist two major drawbacks in MP decoding: high-computational complexity and poor error-correction performance in low signal-to-noise
(SNR) regions \cite{FeldmanLP}\cite{MP-decoding}\cite{LP-error-floor1,LP-error-floor2,LP-error-floor3}.
To overcome the drawbacks, authors in \cite{adaptive-LP} proposed an adaptive small-sized LP decoding problem by starting from a simple LP model and adaptively adding constraints.
In \cite{check-node-decompetition1} and \cite{check-node-decompetition2}, two new LP decoding models, whose variables and constraints grew linearly with the check node degree, were independently established.
Authors in \cite{cut-plane-algorithm,separation-cut,Adaptive-cut} improved the error-correction performance of LP decoding by using different cut-generating algorithms to find redundant parity-check cuts that can tighten the feasible region of the corresponding optimization model and cut off undesirable fractional solutions.
Other approaches toward improving the performance of LP decoding include facet guessing \cite{Guessing-facets} and branch-and-bound algorithms \cite{branch-and-bound1}\cite{branch-and-bound2}.
However, all of the above MP decoders are based on general-purpose LP solving algorithms, such as interior point method \cite{interior-point} and simplex method \cite{revised-simplex}, whose high-computational complexities limit their practical applications.

In recent years, alternating direction method of multipliers (ADMM) technique was introduced into the MP decoding area \cite{Barman-ADMM}. Although the corresponding ADMM-based decoder has a lower complexity than general LP decoders, its complexity is still more expensive than the classical BP decoders because time-consuming Euclidean projection onto the check polytope is involved in the decoding procedure.
To address this challenge, authors in \cite{efficient-projection1} proposed a different check projection algorithm based on the cut-search algorithm of \cite{Adaptive-cut}.
In \cite{efficient-projection2}, authors transformed the projection operation from the check polytope to a simplex.
{In \cite{Draper-ACSSC2015}, authors designed a hardware compatible projection onto the check polytope by combing the methods in  \cite{efficient-projection1} and \cite{efficient-projection2}.}
In \cite{efficient-projection3},  an iterative check projection algorithm to reduce the complexity of the projection operation was proposed.
{Authors in \cite{look-up-SPL} simplified the check projection operation by jointly using uniform quantization and look-up tables.}
Moreover, a projection reduction method \cite{jiao-zhang} and two different approaches based on node-wise scheduling
and horizontal-layered scheduling in \cite{Node-Wise-Scheduling} and \cite{Horizontal-Layered-Scheduling} respectively were proposed to reduce the decoding complexity by decreasing the number of projections.
Furthermore, authors in \cite{our-admm-lp} introduced an efficient ADMM-based LP decoding algorithm, which can eliminate the expensive projection operation onto the check polytope.
On the other hand, to improve the error-correction performance of the LP decoder in low SNR regions, authors in \cite{penalty-decoder} designed ADMM-based penalized decoders by adding the penalty term into the objective function of the LP model \cite{Barman-ADMM}.
Authors in \cite{improve-penalized-irregular} and \cite{piece-penalized} optimized the penalized decoders by modifying the penalty functions.
Authors in \cite{lp-box-decoder} added an $\ell_{p}$-box constraint into the ML decoding model and then proposed a different ADMM-based decoder.

As a whole, although ADMM-based decoders show advantages over conventional MP decoders, they still need to be improved from a practical viewpoint.  For example, state-of-the-art ADMM-based decoders, such as in \cite{efficient-projection1}--\!\!\cite{our-admm-lp}, can reduce computation complexity per iteration to be linear to the length of LDPC code length. However, their error correction performance is inferior to BP decoders. Meanwhile, some others, such as in \cite{penalty-decoder}--\!\!\cite{lp-box-decoder}, can improve error correction performance, but their expensive computational complexity limits their practical applications.
{In addition, of particular relevance are the works \cite{Draper-ICASSP2017} \cite{Draper-TSP} where authors investigated an ADMM-based penalized decoder in hardware implementation while the check ploytope projection still dominates the overall hardware resources and power consumption. }

In this paper, we focus on designing a new {check-polytope-free} ADMM decoder. By combining ideas of relaxation of the three-variables parity-check equation in \cite{check-node-decompetition1}, quadratic penalty in \cite{penalty-decoder} and {LP decoder in \cite{our-admm-lp}} respectively, we obtain a new decoder, called QP-ADMM, for LDPC codes, which has favorable error correction performance meanwhile providing cheap linear complexity in terms of LDPC code length. The main technical contributions of this paper are twofold:
{
\begin{enumerate}
\item Efficient implementation: compared with the state-of-the-art ADMM-based MP decoders \cite{efficient-projection1}--\!\!\cite{Horizontal-Layered-Scheduling} and \cite{penalty-decoder}--\!\!\cite{Draper-ICASSP2017}, {the main novelty of the proposed QP-ADMM decoder is the reduction of the Euclidean projections onto check polytopes to simple Euclidean projections onto positive quadrant.} Moreover, all the variables are solved analytically and updated in parallel in each ADMM step.
    {Furthermore, an important improvement in comparison with the LP decoder \cite{our-admm-lp} is that our proposed QP-ADMM decoder achieves better error correction performance than the conventional BP decoder by adding penalty term into the linear objective of LP decoding.}
\item Theoretical analysis: {first, we show that the proposed ADMM algorithm converges to a stationary point of the formulated QP problem if it is convergent. Second, we verify that the decoder satisfies the property of the \emph{all-zeros assumption} and its decoding output can be easily tested whether it is an ML codeword if it is integral.} At last, by exploiting the inside structures of the formulated model, we show that the complexity of the proposed algorithm in each iteration is linear in terms of LDPC code length.
\end{enumerate}}

The rest of this paper is organized as follows.
Section \ref{problem-formulation} shows preliminaries of the formulated non-convex quadratic program for the ML decoding problem of LDPC codes.
In Section \ref{admm-qp-section}, we propose an efficient ADMM algorithm to solve the quadratic program by exploiting its inside structures.
Section \ref{Analysis-admm-qp-decoding} presents the detailed analysis, such as convergence and complexity, of the proposed ADMM-based QP decoder.
Simulation results, which show the effectiveness of our proposed decoder, are presented in Section \ref{simulation-result}.
Section \ref{Conclusion} concludes this paper.

\emph{Notations:} In this paper, we use bold uppercase and lowercase letters to denote matrices and vectors respectively;
$\hat{\mathbf{b}}_{i}$ and $\mathbf{b}_{j}^{T}$ denote the $ith$ column vector and $jth$ row vector of a matrix $\mathbf{B}$ respectively;
the $ith$ entry of a vector $\mathbf{x}$ is denoted by $x_{i}$ and $\|\mathbf{x}\|_{2}$ represents the 2-norm of vector $\mathbf{x}$;
$(\cdot)^{T}$ symbolizes the transpose operation and $[\cdot]_{2}$ denotes the modulus-2 operation;
symbol $\mathbb{R}$ represents the Euclidean space, whose nonnegative orthant is denoted by $\mathbb{R}_{+}$;
$\underset{[a,b]}\Pi(\cdot)$ denotes the Euclidean projection operator onto the interval $[a,b]$;
$\otimes$ denotes the Kronecker product; $\textrm{diag}(\cdot)$ indicates the operation extracting the diagonal vector of the matrix; and the derivative of
a function $g$ with respect to variable $\mathbf{v}$ is denoted by $\nabla_{\mathbf{v}} g$.

\section{ML Decoding Problem and Its equivalent linear integer program }\label{problem-formulation}
   We consider a binary LDPC code $\mathcal{C}$ of length $n$ defined by an $m \times n$ parity-check matrix $\mathbf{H}$.
   Each column of $\mathbf{H}$ corresponding to a codeword symbol, is indexed by $\mathcal{I} := \{1,\cdots,n\}$. Similarly, each row of $\mathbf{H}$ corresponding to a parity check, is indexed by $\mathcal{J}: = \{1,\cdots,m\}$.
   Suppose a codeword $\mathbf{x} \in \mathcal{C}$ is transmitted over a noisy memoryless binary-input output-symmetric channel, resulting in a corrupted signal $\mathbf{r}$.
   Assuming that all of the codewords are transmitted with equal probability, then the ML decoding problem can be formulated as the following optimization problem
   {\setlength\abovedisplayskip{4pt}
     \setlength\belowdisplayskip{4pt}
      \setlength\jot{1pt}
       \begin{subequations}\label{ML formulation}
      \begin{align}
         &\hspace{0cm} \mathop {\min }\limits_\mathbf{x} \hspace{0.5cm} \pmb{\gamma}^{\mathrm{T}}\mathbf{x} \\\
         &\hspace{0.15cm} {\rm{s.t.}} \hspace{0.55cm} \bigg[\sum_{i=1}^{n}{H_{ji}x_{i}}\bigg]_{2} = 0, \hspace{0.6cm} j\in\mathcal{J}, \label{ML formulation_b}\\\
         &\hspace{1.15cm}  \mathbf{x} \in \{0,1\}^n, \hspace{1.55cm} i\in\mathcal{I}, \label{ML formulation_c}
      \end{align}
   \end{subequations}
where} $\pmb{\gamma}$ is a length-$n$ vector of log-likelihood ratios (LLR) defined as
{\setlength\abovedisplayskip{2pt}
     \setlength\belowdisplayskip{2pt}
      \setlength\jot{0.5pt}
     \begin{equation}\label{gamma}
       \gamma_{i} = {\rm{log}}{\left(\frac{{\rm{Pr}}(r_{i}|x_{i}=0)}{{\rm{Pr}}(r_{i}|x_{i}=1)}\right)}.
     \end{equation}
The} difficulty of solving the ML decoding problem \eqref{ML formulation} comes from non-convex parity-check constraints \eqref{ML formulation_b} and integer constraint \eqref{ML formulation_c}. To address these challenges, we introduce auxiliary variables to re-formulate the ML problem in the following.

    First, we consider the three-variables {parity-check} equation
{\setlength\abovedisplayskip{3pt}
     \setlength\belowdisplayskip{3pt}
      \setlength\jot{0.5pt}
    \begin{equation}\label{check-3}
       \big[x_{1}+ x_{2} + x_{3}\big]_2 = 0,\ \ x_{i} \in \{0,1\},\ \ i \in \{1,2,3\},
    \end{equation}
    which} can be equivalent to
{\setlength\abovedisplayskip{2pt}
   \setlength\belowdisplayskip{2pt}
    \setlength\jot{1pt}
    \begin{equation}\label{degree-3 linear inequalities}
       \begin{split}
         & x_{1} \leq x_{2} + x_{3}, ~~x_{2} \leq x_{1} + x_{3},\\
         & x_{3} \leq x_{1} + x_{2}, \hspace{0.29cm} x_{1} + x_{2} + x_{3} \leq 2 ,  \\
         & x_{1},x_{2},x_{3} \in \{0,1\},
       \end{split}
    \end{equation}
    in} the sense that \eqref{check-3} and \eqref{degree-3 linear inequalities} have the same solutions.
    Define
{\setlength\abovedisplayskip{2pt}
   \setlength\belowdisplayskip{2pt}
    \setlength\jot{1pt}
   \begin{equation}\label{t F matrix}
      \begin{split}
         \mathbf{w}=\begin{bmatrix}\ 0\ \\ \ 0\ \\ \ 0\ \\ \ 2\ \end{bmatrix}, \ \
         \mathbf{T} = \begin{bmatrix}
                           ~~1  &-1  & -1 \\
                          -1  &~~1 & -1 \\
                          -1  &-1  &~~1 \\
                          ~~1  &~~1 &~~1
                       \end{bmatrix}.
      \end{split}
   \end{equation}
   Then,} \eqref{degree-3 linear inequalities} can be further expressed
{\setlength\abovedisplayskip{5pt}
     \setlength\belowdisplayskip{5pt}
      \setlength\jot{1pt}
   \begin{equation}\label{check-3_2}
       \mathbf{Tx} \preceq \mathbf{w},\ \mathbf{x}\in\{0,1\}^3,
   \end{equation}
   where} $\mathbf{x}=[x_1\ x_2\ x_3]^T$. This idea can be applied to the general parity-check equation with more than three variables and  construct its equivalent expression.
   Specifically, we consider the $jth$ parity-check equation in \eqref{ML formulation_b}. Without loss of generality, we assume it involves $d_i$ variables and they are noted by $x_{\sigma_1},\dotsb, x_{\sigma_{d_i}}$, i.e., their corresponding $H_{j,i}$ are 1 and others are 0. Introduce an auxiliary variable for every two variables and let them satisfy the three-variables parity-check equation. If the number of auxiliary variables is larger than three, we introduce extra auxiliary variables for the first introduced auxiliary variables and formulate new three-variables parity-check equations. Continuing with this procedure, we can decompose any multi-variables parity-check equation into finite three-variables ones.
   Moreover, it can be shown that the numbers of the introduced auxiliary variables and the corresponding three-variables parity-check equations are calculated by $d_j-3$ and $d_j-2$ respectively.
   Then, the total numbers of the three-variables parity-check equations and the introduced auxiliary variables corresponding to the constraints \eqref{ML formulation_b} are
{\setlength\abovedisplayskip{6pt}
\setlength\belowdisplayskip{6pt}
\setlength\jot{2pt}
\begin{equation}\label{gamma_a_c}
  \begin{split}
   &\Gamma_{c} = \sum_{j=1}^{m}(d_j-2),  ~~ \Gamma_{a} = \sum_{j=1}^{m}(d_j-3), 
  \end{split}
\end{equation} respectively.}

Define
$\mathbf{v} =[\mathbf{x}^{T}, \mathbf{u}]^{T}$, where auxiliary variable $\mathbf{u} \in\{0,1\}^{\Gamma_a}$. Let $\mathbf{Q}_{\tau}\in\{0,1\}^{3\times(n+\Gamma_a)}$ be a variable-selecting matrix corresponding to the variables in the $\tau th$ three-variables parity-check equation. Then, it can be expressed as
$\mathbf{T} \mathbf{Q}_{\tau}\mathbf{v}\preceq {\mathbf{t}}$. Moreover, we define
    \begin{subequations}\label{Abq}
     \begin{align}
     &\mathbf{q}=[\pmb{\gamma}^{T}, \mathbf{0}]^{T},  \label{Abq-a}\\
     &\mathbf{A}=[\mathbf{T} \mathbf{Q}_1;\cdots;\mathbf{T}  \mathbf{Q}_\tau;\cdots;\mathbf{T} \mathbf{Q}_{\Gamma_c}], \label{Abq-b}\\
     &\mathbf{b}=\mathbf{1} \otimes {\mathbf{w}},                \label{Abq-c}
    \end{align}
   \end{subequations}
where symbols ``$\mathbf{1}$'' and ``$\mathbf{0}$'' are $\Gamma_c$-length all-ones vector and $\Gamma_a$-length all-zeros vector respectively. Then, the ML decoding problem \eqref{ML formulation} is equivalent to the following linear integer program\footnote{Here, we just give a brief review of transforming the ML decoding problem to a linear integer program. More detailed description can be found in \cite{check-node-decompetition1} and \cite{our-admm-lp}.}
{\setlength\abovedisplayskip{5pt}
   \setlength\belowdisplayskip{5pt}
    \setlength\jot{3pt}
     \begin{subequations}\label{C-LP model}
      \begin{align}
       &\hspace{0.0cm} \underset{\mathbf{v}}\min \hspace{0.25cm} \mathbf{q}^{T}\mathbf{v} \label{C-LP model a} \\
       &\ \textrm{s.t.} \hspace{0.3cm} \mathbf{Av} \preceq \mathbf{b}, \label{C-LP model b} \\
       &\hspace{0.90cm} \mathbf{v} \in \{0,1\}^{n+\Gamma_a}. \label{C-LP model c}
      \end{align}
    \end{subequations}

{\it Remarks:}} {
   In \cite{check-node-decompetition1}, authors relaxed integer constraint \eqref{C-LP model c} into a box constraint, and then customized the Lagrangian dual method to solve the resulting linear relaxation model. In \cite{our-admm-lp}, authors exploited matrix $\mathbf{A}$'s structures\footnote{Since $\mathbf{T}$ is a column-wise orthogonal matrix, $\mathbf{A}$ is also orthogonal in columns. Moreover, it can be found that elements in $\mathbf{A}$ are either $0$, $1$ or $-$1.}, such as column-wise orthogonality, integer elements and sparsity, and developed an efficient ADMM decoding algorithm to solve the LP relaxation. However, their error correction performance in low SNR regions for LDPC codes is worse than the classical BP decoder. In the following sections, we exploit the penalty idea in \cite{penalty-decoder} and develop a new LDPC decoder to improve error correction performance. Detailed analysis on ADMM iteration convergence, {\it all-zeros assumption}, and computational complexity, are also given. Moreover, we  show that any integral output can be verified easily whether or not it is an ML solution.}

\section{Non-convex QP formulation and its ADMM Solving Algorithm}\label{admm-qp-section}

\subsection{Non-convex QP formulation}
Deploy linear constraints $\mathbf{v} \in [0,1]^{n+\Gamma_a}$ to take the place of the binary integer constraint \eqref{C-LP model c} and then tighten the relaxation by adding the quadratic penalty function into objective \eqref{C-LP model a}, which leads to our proposed QP decoder
{\setlength\abovedisplayskip{3pt}
   \setlength\belowdisplayskip{3pt}
    \setlength\jot{1pt}
     \begin{subequations}\label{QP}
      \begin{align}
       &\hspace{0.0cm} \underset{\mathbf{v}}\min \hspace{0.2cm} \mathbf{q}^{T}\mathbf{v} - \frac{\alpha}{2}\|\mathbf{v}-0.5\|_2^2 \label{QP a}\\
       &\hspace{0.1cm} \textrm{s.t.} \hspace{0.3cm} \mathbf{Av} \preceq \mathbf{b}, \label{QP b} \\
       &\hspace{0.85cm} \mathbf{v} \in [0, 1]^{n+\Gamma_a}. \label{QP c}
      \end{align}
      \end{subequations}

    Comparing} to \cite{check-node-decompetition1} and \cite{our-admm-lp}, one notable improvement of the formulated model \eqref{QP} is that the non-convex quadratic penalty function is added into the linear objective of \eqref{C-LP model}, which can make the integer solutions more favorable. Moreover, by exploiting the inherent structures in model \eqref{QP}, we propose an efficient solving algorithm based on the ADMM technique whose computational complexity per iteration is linear in terms of LDPC code length.  Finally, we prove that the proposed ADMM-based QP decoder satisfies the favorable property of the {\it all-zeros assumption}.

\vspace{-0.3cm}
\subsection{ADMM algorithm framework}\label{ADMM-penalized-decoding}

By adding auxiliary variable $\mathbf{z}\in\mathbb{R}^{4\Gamma_c}_{+}$ into the constraint \eqref{QP b}, the optimization problem \eqref{QP} is equivalent to
{\setlength\abovedisplayskip{2pt}
   \setlength\belowdisplayskip{2pt}
    \setlength\jot{2pt}
    \begin{subequations}\label{QP 2}
     \begin{align}
       &\hspace{0.0cm} \underset{\mathbf{v}}\min \hspace{0.2cm} \mathbf{q}^{T}\mathbf{v} - \frac{\alpha}{2}\|\mathbf{v}-0.5\|_2^2, \label{QP 2 a}\\
       &\hspace{0.1cm} \textrm{s.t.} \hspace{0.3cm} \mathbf{Av}+\mathbf{z} = \mathbf{b}, \label{QP 2 b} \\
       &\hspace{0.87cm} \mathbf{v} \in [0, 1]^{n+\Gamma_a}, \ \mathbf{z}\succeq\mathbf{0}, \label{QP 2 c}
     \end{align}
    \end{subequations}
whose} augmented Lagrangian function can be expressed as
     \begin{equation}\label{augmented-Lagrangian-penalty}
      \begin{split}
        \emph{L}_{\mu}(\mathbf{v},\mathbf{z},\mathbf{y}) & =  \mathbf{q}^{T}\mathbf{v}- \frac{\alpha}{2}\|\mathbf{v}-0.5\|_2^2 \\
        & +  \mathbf{y}^{T}(\mathbf{Av+z-b})+ \frac{\mu}{2} \| \mathbf{Av+z-b} \|_2^{2},
      \end{split}
      \end{equation}
where $\mathbf{y} \in \mathbb{R}^{4\Gamma_c}$ denotes the Lagrangian multiplier and $\mu$ is some positive penalty parameter.
Then, the framework of the proposed ADMM solving algorithm for problem \eqref{QP} can be described as
     \begin{subequations}\label{ADMM update_QP}
      \begin{align}
        & \mathbf{v}^{k+1} = \mathop{\arg \min}\limits_{\mathbf{v} \in [0, 1]^{n+\Gamma_a}} \emph{L}_{\mu}(\mathbf{v},\mathbf{z}^{k},\mathbf{y}^{k}), \label{d-update-ori-penalized}  \\
        & \mathbf{z}^{k+1} = \ \ \mathop{\arg \min}\limits_{\mathbf{z}\succeq \mathbf{0}} ~ \emph{L}_{\mu}(\mathbf{v}^{k+1},\mathbf{z},\mathbf{y}^{k}), \label{mu-update-ori-penalized} \\
        & \mathbf{y}^{k+1} = \mathbf{y}^{k} + \mu(\mathbf{Av}^{k+1}+\mathbf{z}^{k+1}-\mathbf{b}), \label{lamda-update-ori-penalized}
      \end{align}
     \end{subequations}
where $k$ denotes the iteration number.

Observing \eqref{ADMM update_QP}, we can see that the main computation cost lies in solving \eqref{d-update-ori-penalized} and \eqref{mu-update-ori-penalized}. In the following, we show that both  \eqref{d-update-ori-penalized} and \eqref{mu-update-ori-penalized} can be solved efficiently.

\subsubsection{Solving subproblem \eqref{d-update-ori-penalized}}\label{subp-1}
 Due to the fact that matrix $\mathbf{A}$ has the property of orthogonality in columns, $\mathbf{A}^T\mathbf{A}$ is a diagonal matrix, which means that variables in problem \eqref{d-update-ori-penalized} are separable. Therefore, solving problem \eqref{d-update-ori-penalized} is equivalent to solving the following $n+\Gamma_a$ subproblems in parallel
     \begin{subequations}\label{vsp}
     \begin{align}
      & \underset{v_i}\min \hspace{0.15cm} \frac{1}{2}\big(\mu e_i\!-\!\alpha\big)\!v_i^2 \!+\! \Big(q_i\!+\!\frac{1}{2}\alpha\!+\!\hat{\mathbf{a}}_i^T\big(\mathbf{y}\!+\!\mu(\mathbf{z}-\mathbf{b})\big)\Big)v_i,  \label{vsp a}\\
      &\hspace{0.1cm} \textrm{s.t.}\hspace{0.3cm} v_i\in [0,1], \label{vsp b}
    \end{align}
   \end{subequations}
where $\mathbf{e}={\rm diag}(\mathbf{A}^T\mathbf{A})=[e_{1},\cdots,e_{n+\Gamma_a}]^{T}$.
Let ${e}_{\rm min}$ denote the minimum value of the elements in vector $\mathbf{e}$. In order to guarantee that \eqref{vsp a} is strongly convex, we choose proper $\alpha$ and $\mu$ and let them satisfy $\mu e_{\rm min}>\alpha$. Then, the solution of problem \eqref{vsp} can be performed as follows: setting the gradient of objective \eqref{vsp a} to zero, then projecting the resulting solution to the interval $[0,1]$, i.e.,
     \begin{equation}\label{vi-update-2fanshu}
      v_{i}^{k+1}\!\!=\!\!\underset{[0,1]}\Pi\bigg(\frac{\hat{\mathbf{a}}_i^T\Big(\mathbf{b}\!-\! \mathbf{z}^k\!-\!\frac{\mathbf{y}^k}{\mu}\Big)\!-\!\phi_i}{\theta_i}\bigg),
     \end{equation}
where  $i\in\{1,\ldots,n\!+\!\Gamma_{a}\}$, $\phi_i=\frac{2q_i+\alpha}{2\mu}$, and $\theta_i = e_i-\frac{\alpha}{\mu}$. The projection $\underset{[0,1]}\Pi(\cdot)$ can be easily accomplished by the following rule: when the value of the element is larger than 1, set it to be 1; when the value of the element is less than 0, set it to be 0; otherwise, leave it unchanged.

\subsubsection{Solving subproblem \eqref{mu-update-ori-penalized}}
Observing subproblem \eqref{mu-update-ori-penalized}, we can see that the elements (variables) in $\mathbf{z}$ are also separable in both the objective and constraints. Thus, the optimal solution of problem \eqref{mu-update-ori-penalized} can be obtained by setting the gradient of the function $\emph{L}_{\mu}(\mathbf{v}^{k+1},\mathbf{z},\mathbf{y}^{k})$ to zero, then projecting the resulting solution of the corresponding linear equation to region $[0, +\infty)^{4\Gamma_c}$, i.e.,
\begin{equation}\label{z-update-dual}
     \mathbf{z}^{k+1} = \underset{[0,+\infty)^{4\Gamma_c}}\Pi\Big((\mathbf{b}-\mathbf{Av}^{k+1})-\frac{\mathbf{y}^{k}}{\mu}\Big).
     \end{equation}
More specifically, similar to computing $\mathbf{v}^{k+1}$, all of the elements in $\mathbf{z}^{k+1}$ can also be obtained in parallel by
 \begin{equation}\label{zi update}
      {z}_j^{k+1}\!=\!\underset{[0,+\infty)}\Pi \Big(b_j\!-\!\mathbf{a}_j^T\mathbf{v}^{k+1}\!-\!\frac{y_j^{k}}{\mu}\Big),
     \end{equation}
where $j\in\{1,\ldots 4\Gamma_c\}$ and $\underset{[0,+\infty)}\Pi(\cdot)$ denotes the projection operator onto the positive quadrant $[0,+\infty)$.

Observing the variable $\mathbf{y}$ in \eqref{vi-update-2fanshu} and \eqref{zi update}, one can find that if updating its scaled form $\frac{\mathbf{y}}{\mu}$, then the corresponding multiplications can be saved. Therefore, we let $\tilde{z}_j^{k+1} = b_j-\mathbf{a}_j^T\mathbf{v}^{k+1}-\frac{y_j^{k}}{\mu}$.
Then, each element in $\mathbf{y}$ can be obtained through
     \begin{equation}\label{y-update-simplify}
      \frac{y_{j}^{k+1}}{\mu} = \left\{
                \begin{array}{ll}
                  0, &  \textrm{if}~\tilde{z}_j^{k+1} \geq 0,\\
                  -\tilde{z}_j^{k+1}, &  \textrm{otherwise}.
                \end{array}
              \right.
    \end{equation}

To be clear, we summarize the proposed ADMM algorithm for solving decoding model \eqref{QP} in \emph{Algorithm \ref{ADMM-penalized-alg}}. In the following, we make some remarks on the proposed \emph{Algorithm \ref{ADMM-penalized-alg}}:
\begin{itemize}
  \item Due to the orthogonality of matrix $\mathbf{A}$, all of the variables in $\mathbf{v}$ and $\mathbf{z}$ have analytical expressions and can be updated in parallel. This favorable property comes from auxiliary variables introduced to the three-variables parity-equations.
  \item Compared with the state-of-the-art works \cite{Barman-ADMM,efficient-projection1,efficient-projection2,efficient-projection3,jiao-zhang,Horizontal-Layered-Scheduling,Node-Wise-Scheduling} and \cite{penalty-decoder,improve-penalized-irregular,piece-penalized,lp-box-decoder}, the proposed ADMM algorithm is free of the complex Euclidean projection on each check polytope. From \eqref{zi update}, it is easy to see that updating the variables in $\mathbf{z}$ only requires a simple Euclidean projection onto the positive quadrant.
  \end{itemize}

\begin{algorithm}[!t]
\caption{The proposed decoding algorithm}
\label{ADMM-penalized-alg}
\begin{algorithmic}[1]
\STATE Compute log-likelihood ratio $\pmb{\gamma}$ via \eqref{gamma}. Construct $\mathbf{q}$, $\mathbf{A}$, and $\mathbf{b}$ via \eqref{Abq-a}, \eqref{Abq-b}, and \eqref{Abq-c} respectively.
\STATE Initialize $\mathbf{y}^0$ and $\mathbf{z}^0$ as the all-zeros vectors. Set $k=0$. \\
\STATE  \textbf{Repeat}  \\
\STATE \hspace{0.2cm} S.1: Update $\{v_i^{k+1}|i=1,\ldots,n\!+\!\Gamma_{a}\}$ in parallel by \eqref{vi-update-2fanshu}.
\STATE \hspace{0.2cm} S.2: Update $\{{z}_j^{k+1}|j=1,\ldots, 4\Gamma_c\}$ in parallel by \eqref{zi update}.
\STATE \hspace{0.2cm} S.3: Update $\{\frac{y_{j}^{k+1}}{\mu}|j=1,\ldots, 4\Gamma_c\}$ in parallel by \eqref{y-update-simplify}.
\STATE\textbf{Until} $\parallel \mathbf{A}{\mathbf{v}^{k}}+{\mathbf{z}^{k}}-\mathbf{b} \parallel_{2}^{2}$ is less than some preset $\epsilon$.
\end{algorithmic}
\end{algorithm}

\vspace{-0.1cm}
\section{Performance Analysis}\label{Analysis-admm-qp-decoding}
In this section we make a detailed analysis of \emph{Algorithm \ref{ADMM-penalized-alg}} from the following three aspects: convergence analysis, decoding performance analysis, and computational complexity analysis.
\vspace{-0.3cm}
\subsection{Convergence property}\label{conver-analysis-alg1}
We have the following theorem to show the convergence property of the proposed ADMM algorithm. The detailed convergence proof is presented in Appendix \ref{convergence-proof}.

\begin{theorem}\label{conver-theorem}
  suppose $\mu e_{min} > \alpha$ holds. Let $\{\mathbf{v}^{k},\mathbf{z}^{k},\mathbf{y}^{k}\}$ be the sequence of iterations generated by the proposed \emph{Algorithm \ref{ADMM-penalized-alg}}. If sequence $\{\mathbf{v}^{k},\mathbf{z}^{k},\mathbf{y}^{k}\}$ converges, i.e., $\underset{k \rightarrow \infty}\lim \{\mathbf{v}^{k},\mathbf{z}^{k},\mathbf{y}^{k}\} = (\mathbf{v}^{*},\mathbf{z}^{*},\mathbf{y}^{*})$, then $(\mathbf{v}^{*},\mathbf{z}^{*})$ is some feasible point of  model \eqref{QP 2}. Moreover, $\mathbf{v}^{*}$ is a stationary point of problem \eqref{QP}, i.e,
     \begin{equation}\label{stationary}
       (\mathbf{v}-\mathbf{v}^{*})^{T}\nabla_{\mathbf{v}}f(\mathbf{v}^{*})\geq 0,~~\forall \mathbf{v}\in \mathcal{X},
     \end{equation}
where $f(\mathbf{v})=\mathbf{q}^{T}\mathbf{v}- \frac{\alpha}{2}\|\mathbf{v}-0.5\|_2^2$ and $\mathcal{X} = \{\mathbf{v}\mid \mathbf{Av} \preceq \mathbf{b}, \mathbf{v} \in [0,1]^{n+\Gamma_{a}}\}$.
\end{theorem}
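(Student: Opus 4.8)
The plan is to pass to the limit in the first-order optimality conditions of the two ADMM subproblems \eqref{d-update-ori-penalized}, \eqref{mu-update-ori-penalized} together with the dual update \eqref{lamda-update-ori-penalized}, and then combine the resulting variational inequalities. \textbf{Step 1 (feasibility).} Since $\mathbf{y}^{k}\to\mathbf{y}^{*}$, rearranging \eqref{lamda-update-ori-penalized} gives $\mu(\mathbf{A}\mathbf{v}^{k+1}+\mathbf{z}^{k+1}-\mathbf{b})=\mathbf{y}^{k+1}-\mathbf{y}^{k}\to\mathbf{0}$, hence $\mathbf{A}\mathbf{v}^{*}+\mathbf{z}^{*}-\mathbf{b}=\mathbf{0}$. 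Every iterate $\mathbf{v}^{k}$ lies in the closed box $[0,1]^{n+\Gamma_{a}}$ because it is produced by the projection \eqref{vi-update-2fanshu}, and every $\mathbf{z}^{k}$ is nonnegative because it is produced by the projection \eqref{zi update}; therefore $\mathbf{v}^{*}\in[0,1]^{n+\Gamma_{a}}$ and $\mathbf{z}^{*}\succeq\mathbf{0}$, so $(\mathbf{v}^{*},\mathbf{z}^{*})$ is feasible for \eqref{QP 2} (in particular $\mathbf{v}^{*}\in\mathcal{X}$, since $\mathbf{A}\mathbf{v}^{*}=\mathbf{b}-\mathbf{z}^{*}\preceq\mathbf{b}$).

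\textbf{Step 2 (subproblem variational inequalities).} Because $\mu e_{\min}>\alpha$, the objective of \eqref{d-update-ori-penalized} has a diagonal Hessian in $\mathbf{v}$ with entries $\mu e_{i}-\alpha>0$ (recall $\mathbf{A}^{T}\mathbf{A}$ is diagonal), so the subproblem is strongly convex over the convex box and its unique minimizer obeys $(\mathbf{v}-\mathbf{v}^{k+1})^{T}\nabla_{\mathbf{v}}L_{\mu}(\mathbf{v}^{k+1},\mathbf{z}^{k},\mathbf{y}^{k})\geq 0$ for all $\mathbf{v}\in[0,1]^{n+\Gamma_{a}}$, where $\nabla_{\mathbf{v}}L_{\mu}(\mathbf{v},\mathbf{z},\mathbf{y})=\nabla_{\mathbf{v}}f(\mathbf{v})+\mathbf{A}^{T}\mathbf{y}+\mu\mathbf{A}^{T}(\mathbf{A}\mathbf{v}+\mathbf{z}-\mathbf{b})$. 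Similarly, the minimizer of the convex problem \eqref{mu-update-ori-penalized} over $\mathbf{z}\succeq\mathbf{0}$ satisfies $(\mathbf{z}-\mathbf{z}^{k+1})^{T}\big(\mathbf{y}^{k}+\mu(\mathbf{A}\mathbf{v}^{k+1}+\mathbf{z}^{k+1}-\mathbf{b})\big)\geq 0$; by \eqref{lamda-update-ori-penalized} the bracketed vector is exactly $\mathbf{y}^{k+1}$, so $(\mathbf{z}-\mathbf{z}^{k+1})^{T}\mathbf{y}^{k+1}\geq 0$ for all $\mathbf{z}\succeq\mathbf{0}$. Passing to the limit, then taking $\mathbf{z}=\mathbf{z}^{*}+\mathbf{d}$ with arbitrary $\mathbf{d}\succeq\mathbf{0}$ and afterwards $\mathbf{z}=\mathbf{0}$, and using $\mathbf{z}^{*}\succeq\mathbf{0}$ from Step 1, yields the multiplier sign and complementary slackness relations $\mathbf{y}^{*}\succeq\mathbf{0}$ and $(\mathbf{z}^{*})^{T}\mathbf{y}^{*}=0$.

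\textbf{Step 3 (combine and conclude).} Passing to the limit in the $\mathbf{v}$-subproblem inequality and invoking $\mathbf{A}\mathbf{v}^{*}+\mathbf{z}^{*}-\mathbf{b}=\mathbf{0}$ to annihilate the penalty-gradient term leaves $(\mathbf{v}-\mathbf{v}^{*})^{T}\big(\nabla_{\mathbf{v}}f(\mathbf{v}^{*})+\mathbf{A}^{T}\mathbf{y}^{*}\big)\geq 0$ for every $\mathbf{v}\in[0,1]^{n+\Gamma_{a}}$. For $\mathbf{v}\in\mathcal{X}$ we have $\mathbf{A}\mathbf{v}\preceq\mathbf{b}$, so with $\mathbf{A}\mathbf{v}^{*}=\mathbf{b}-\mathbf{z}^{*}$ it follows that $(\mathbf{v}-\mathbf{v}^{*})^{T}\mathbf{A}^{T}\mathbf{y}^{*}=(\mathbf{A}\mathbf{v}-\mathbf{b})^{T}\mathbf{y}^{*}+(\mathbf{z}^{*})^{T}\mathbf{y}^{*}\leq 0$, using $\mathbf{A}\mathbf{v}-\mathbf{b}\preceq\mathbf{0}$, $\mathbf{y}^{*}\succeq\mathbf{0}$, and $(\mathbf{z}^{*})^{T}\mathbf{y}^{*}=0$; substituting this bound back gives exactly \eqref{stationary}. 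The step I expect to be the main obstacle is Step 3 — carefully cancelling the quadratic-penalty gradient in the limit and controlling the sign of the $\mathbf{A}^{T}\mathbf{y}^{*}$ term through the nonnegativity of $\mathbf{y}^{*}$ and the complementary slackness extracted from the $\mathbf{z}$-update; note that the hypothesis $\mu e_{\min}>\alpha$ is precisely what makes \eqref{d-update-ori-penalized} strongly convex, so that its variational characterization (and the uniqueness of the minimizer invoked above) is legitimate.
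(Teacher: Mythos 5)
Your proposal is correct and follows the same overall skeleton as the paper's proof: establish feasibility from the dual update \eqref{lamda-update-ori-penalized} and the projections, then split $(\mathbf{v}-\mathbf{v}^{*})^{T}\nabla_{\mathbf{v}}f(\mathbf{v}^{*})$ into a subproblem-optimality term and a multiplier term, showing $(\mathbf{v}-\mathbf{v}^{*})^{T}\bigl(\nabla_{\mathbf{v}}f(\mathbf{v}^{*})+\mathbf{A}^{T}\mathbf{y}^{*}\bigr)\geq 0$ and $(\mathbf{y}^{*})^{T}\mathbf{A}(\mathbf{v}-\mathbf{v}^{*})\leq 0$. Where you differ is purely in execution: the paper proves the first inequality by writing out the pre-projection point $\overline{\mathbf{v}}^{*}$ from \eqref{vi-update-2fanshu}, deriving the identity $\mathbf{q}+\nabla_{\mathbf{v}}p(\mathbf{v}^{*})+\mathbf{A}^{T}\mathbf{y}^{*}=(\mu\mathbf{A}^{T}\mathbf{A}-\alpha\mathbf{I})(\mathbf{v}^{*}-\overline{\mathbf{v}}^{*})$, and doing a coordinate-wise case analysis on which entries the projection clips; and it obtains $y_{j}^{*}=0$ for $z_{j}^{*}>0$ and $\mathbf{y}^{*}\succeq\mathbf{0}$ by comparing two explicit expressions for $(\mathbf{y}^{*})^{T}\mathbf{z}^{*}$ built from \eqref{zi update}. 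You instead invoke the abstract first-order variational inequalities of the two strongly convex subproblems \eqref{d-update-ori-penalized} and \eqref{mu-update-ori-penalized} and pass to the limit, noting that the gradient of the $\mathbf{z}$-subproblem at its minimizer is exactly $\mathbf{y}^{k+1}$, which yields $\mathbf{y}^{*}\succeq\mathbf{0}$ and $(\mathbf{z}^{*})^{T}\mathbf{y}^{*}=0$ in one stroke. Your route is shorter and cleaner; its only prerequisite is the observation (which you correctly make) that because $\mathbf{A}^{T}\mathbf{A}$ is diagonal and $\mu e_{\min}>\alpha$, the coordinate-wise ``set gradient to zero then project'' rule in \eqref{vi-update-2fanshu} and \eqref{zi update} really does compute the exact constrained minimizers, so the variational characterization applies. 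The paper's more pedestrian computation buys nothing extra here beyond making that equivalence explicit.
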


{\it Remarks:} The above theorem states that the proposed Algorithm 1 could reach some stationary point of problem \eqref{QP} under the assumption of sequence convergence\footnotemark. Although sequence convergence is not theoretically proven so far, it is indeed observed in extensive simulations under different parameter settings (see Sec. V). Furthermore, the solution obtained by the proposed algorithm yields favorable decoding performance in comparison with several state-of-the-art approaches.

 \footnotetext{Actually, to prove convergence of ADMM algorithm for solving non-convex optimization problem is a very challenging research topic.
 Therefore, such an assumption is often used in establishing the convergence of nonconvex ADMM. Although there are a few works \cite{Mingyi-hong, wotao-yin} addressing the convergence issue of nonconvex ADMM, the proof techniques shown in these works are only suitable for problems with very special structures.}

\subsection{Decoding performance analysis}

\subsubsection{ML test}
 {in general, we cannot guarantee that an integral output of the proposed decoding \emph{Algorithm \ref{ADMM-penalized-alg}} is an ML solution because it is difficult to determine whether or not this integral output is a global minimizer of the non-convex QP problem \eqref{QP}.
 However, the proposed decoding \emph{Algorithm \ref{ADMM-penalized-alg}} has a similar merit of the decoding algorithm in \cite{penalty-decoder} where one can easily test whether its integral output is an ML solution.} Specifically, the test can be described as the following proposition.

\begin{proposition}\label{ML-test}
 if the output of the proposed \emph{Algorithm \ref{ADMM-penalized-alg}} is an integral solution, we plug the integral output back into \emph{Algorithm \ref{ADMM-penalized-alg}}, but set the penalty parameter $\alpha=0$, and just do one more iteration.
 If the value of objective function \eqref{QP a} doesn't decrease, then this integral solution is the ML solution.
\end{proposition}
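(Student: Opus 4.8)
The plan is to reduce the statement to a fact about the \emph{linear} (box) relaxation of \eqref{C-LP model}, using that, once $\alpha$ is set to $0$, one further pass of \emph{Algorithm \ref{ADMM-penalized-alg}} is precisely one ADMM iteration applied to that convex relaxation. Let $p_{\mathrm{LP}}$ denote the optimal value of $\min\{\mathbf{q}^{T}\mathbf{v}\mid \mathbf{A}\mathbf{v}\preceq\mathbf{b},\ \mathbf{v}\in[0,1]^{n+\Gamma_a}\}$ and $p_{\mathrm{ML}}$ the ML value. Since \eqref{C-LP model} is equivalent to \eqref{ML formulation} and the box relaxation only relaxes the integer constraint \eqref{C-LP model c}, $p_{\mathrm{LP}}\le p_{\mathrm{ML}}$. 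If the output $\mathbf{v}^{*}$ is integral it is a feasible point of \eqref{C-LP model}, i.e.\ its first $n$ coordinates form a codeword and $\mathbf{q}^{T}\mathbf{v}^{*}=\pmb{\gamma}^{T}\mathbf{x}^{*}$ is that codeword's cost, hence $p_{\mathrm{ML}}\le\mathbf{q}^{T}\mathbf{v}^{*}$. Therefore it suffices to show that the proposition's test forces $\mathbf{q}^{T}\mathbf{v}^{*}=p_{\mathrm{LP}}$; the sandwich $p_{\mathrm{LP}}\le p_{\mathrm{ML}}\le\mathbf{q}^{T}\mathbf{v}^{*}$ then collapses and $\mathbf{x}^{*}$ is an ML codeword.

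Next I would examine the single $\alpha=0$ iteration at the returned triple $(\mathbf{v}^{*},\mathbf{z}^{*},\mathbf{y}^{*})$. Because the stopping rule of \emph{Algorithm \ref{ADMM-penalized-alg}} enforces an (essentially) zero primal residual, $\mathbf{b}-\mathbf{z}^{*}=\mathbf{A}\mathbf{v}^{*}$; and the last applications of \eqref{zi update}--\eqref{y-update-simplify} yield $\mathbf{y}^{*}\succeq\mathbf{0}$ together with $(\mathbf{y}^{*})^{T}\mathbf{z}^{*}=0$. Substituting $\mathbf{b}-\mathbf{z}^{*}=\mathbf{A}\mathbf{v}^{*}$ into \eqref{vi-update-2fanshu} with $\alpha=0$ and invoking the column-orthogonality $\mathbf{A}^{T}\mathbf{A}=\textrm{diag}(\mathbf{e})$, the new primal iterate takes the coordinatewise form $v_{i}'=\underset{[0,1]}{\Pi}\!\big(v_{i}^{*}-g_{i}^{*}/(\mu e_{i})\big)$, where $g_{i}^{*}$ is the $i$-th entry of $\mathbf{g}^{*}:=\mathbf{q}+\mathbf{A}^{T}\mathbf{y}^{*}$. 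From this closed form, $\mathbf{v}'=\mathbf{v}^{*}$ holds iff $g_{i}^{*}\ge0$ whenever $v_{i}^{*}=0$ and $g_{i}^{*}\le0$ whenever $v_{i}^{*}=1$; together with the three relations already in hand ($\mathbf{A}\mathbf{v}^{*}\preceq\mathbf{b}$, $\mathbf{y}^{*}\succeq\mathbf{0}$, $(\mathbf{y}^{*})^{T}\mathbf{z}^{*}=0$) these are exactly the KKT conditions of the convex box relaxation. For a convex program the KKT conditions are sufficient for global optimality, so $\mathbf{v}'=\mathbf{v}^{*}$ implies $\mathbf{q}^{T}\mathbf{v}^{*}=p_{\mathrm{LP}}$, and we are done by the first step.

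It remains to connect the proposition's criterion to the event $\mathbf{v}'=\mathbf{v}^{*}$: I would show that if the objective \eqref{QP a} (which with $\alpha=0$ equals $\mathbf{q}^{T}\mathbf{v}$) does not decrease over this iteration, then necessarily $\mathbf{v}'=\mathbf{v}^{*}$. The ``$\mathbf{v}'=\mathbf{v}^{*}\Rightarrow$ no decrease'' direction is immediate. For the converse, note that $\mathbf{v}'$ is the exact coordinatewise minimizer over $[0,1]^{n+\Gamma_a}$ of the strongly convex separable quadratic obtained from $\emph{L}_{\mu}(\cdot,\mathbf{z}^{*},\mathbf{y}^{*})$ at $\alpha=0$, so on each coordinate the projection formula gives $g_{i}^{*}(v_{i}'-v_{i}^{*})\le0$, strict whenever $v_{i}'\ne v_{i}^{*}$; combined with the identity $(\mathbf{A}^{T}\mathbf{y}^{*})^{T}(\mathbf{v}'-\mathbf{v}^{*})=(\mathbf{y}^{*})^{T}(\mathbf{A}\mathbf{v}'-\mathbf{b})$, itself a consequence of $\mathbf{b}-\mathbf{z}^{*}=\mathbf{A}\mathbf{v}^{*}$ and $(\mathbf{y}^{*})^{T}\mathbf{z}^{*}=0$, this pins down the sign of $\mathbf{q}^{T}(\mathbf{v}'-\mathbf{v}^{*})$. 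This bookkeeping — ruling out a nonzero $\mathbf{v}$-update whose net effect on the \emph{linear} objective is non-negative even though it strictly lowers the augmented Lagrangian — is the step I expect to be the main obstacle and to need the constraint structure most carefully; everything else is the elementary relaxation sandwich plus direct substitution into the closed-form updates of \emph{Algorithm \ref{ADMM-penalized-alg}}.
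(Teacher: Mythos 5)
Your overall route is the same as the paper's: an integral output is a valid codeword because the box relaxation introduces no new integral feasible points; the extra $\alpha=0$ iteration is one ADMM pass on the convex LP relaxation; and LP-optimality of an integral point gives ML-optimality (your sandwich $p_{\mathrm{LP}}\le p_{\mathrm{ML}}\le \mathbf{q}^{T}\mathbf{v}^{*}$ is exactly the ML-certificate property the paper invokes). Where you go beyond the paper is in actually verifying that a fixed point of the $\alpha=0$ iteration satisfies the KKT conditions of the LP --- the paper asserts LP-optimality with no more justification than ``since the LP problem is convex'' --- and your verification there (primal feasibility from the stopping rule, $\mathbf{y}^{*}\succeq\mathbf{0}$ and $(\mathbf{y}^{*})^{T}\mathbf{z}^{*}=0$ directly from \eqref{zi update}--\eqref{y-update-simplify}, stationarity from the closed form of $v_{i}'$ using $\mathbf{A}^{T}\mathbf{A}=\mathrm{diag}(\mathbf{e})$) is correct.

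The one step you flag as the main obstacle is indeed where the argument does not close: ``the objective does not decrease'' does not obviously imply $\mathbf{v}'=\mathbf{v}^{*}$. Your descent inequality controls the sign of $(\mathbf{g}^{*})^{T}(\mathbf{v}'-\mathbf{v}^{*})$ with $\mathbf{g}^{*}=\mathbf{q}+\mathbf{A}^{T}\mathbf{y}^{*}$, but to transfer this to the sign of $\mathbf{q}^{T}(\mathbf{v}'-\mathbf{v}^{*})$ you need $(\mathbf{y}^{*})^{T}(\mathbf{A}\mathbf{v}'-\mathbf{b})\le 0$, which requires $\mathbf{A}\mathbf{v}'\preceq\mathbf{b}$ --- and that need not hold, since the $\mathbf{v}$-update is only constrained to the box $[0,1]^{n+\Gamma_{a}}$, not to the polytope. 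So a nonzero update could in principle leave the linear objective non-decreasing (e.g.\ a coordinate with $v_{i}^{*}=0$, $q_{i}>0$ but $g_{i}^{*}<0$ moves up and increases $\mathbf{q}^{T}\mathbf{v}$). To be fair, this gap is present in the paper's own proof as well, which simply writes ``doesn't decrease, i.e., the new output is equal to this integral solution,'' silently identifying the two conditions. The clean repair is to state the test as ``the extra iteration returns the same point'' (check $\mathbf{v}'=\mathbf{v}^{*}$ rather than the objective value); under that reading your KKT argument is complete and the proposition follows.
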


\begin{proof}
 first, observing models \eqref{C-LP model} and \eqref{QP}, we can see that no integral feasible points are introduced when integer constraint \eqref{C-LP model c} is relaxed to linear constraint \eqref{QP c}. This means that any integral solution of model \eqref{QP} must be a valid codeword. Second, we need to verify whether this integral solution is an ML solution.
 By setting the penalty parameter $\alpha=0$, we can obtain a linear program. Then, we do one more iteration by applying \emph{Algorithm \ref{ADMM-penalized-alg}} by taking this integral solution as the initial point. If the value of objective function \eqref{QP a} doesn't decrease, i.e., the new output is equal to this integral solution, and we can say that the integral solution is the minimizer of the LP decoding problem since the LP problem is convex. By exploiting the ML certificate of LP decoding, the integral solution must be an ML codeword.
\end{proof}

\subsubsection{All-zeros assumption}
 our proposed QP decoder also satisfies the favorable property of the \emph{all-zeros assumption}, which is described as follows. The detailed proof for \emph{Theorem \ref{all zero assumption}} is presented in Appendix \ref{all-zero-assumption-proof}. Moreover, two lemmas and one corollary related to the proof are presented in Appendices \ref{proof-lemma1}, \ref{proof-lemma2}, and \ref{proof-corollary1} respectively.

\begin{theorem}\label{all zero assumption}
  assume that the noisy channel is symmetrical. Then, the probability that \emph{Algorithm \ref{ADMM-penalized-alg}} fails is independent of the transmitted codeword.
\end{theorem}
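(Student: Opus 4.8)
The plan is to prove \emph{Theorem~\ref{all zero assumption}} by establishing a deterministic \emph{equivariance} of Algorithm~\ref{ADMM-penalized-alg} under a codeword-indexed bit-flip symmetry, and then combining it with the standard distributional consequence of channel symmetry. Fix a codeword $\mathbf{c}\in\mathcal{C}$ and let $\tilde{\mathbf{c}}=[\mathbf{c}^{T},\mathbf{c}_u^{T}]^{T}\in\{0,1\}^{n+\Gamma_a}$ be its lift to the extended variable set, where $\mathbf{c}_u$ collects the partial-sum bits forced by the three-variables decomposition of Section~\ref{problem-formulation}. By construction $\tilde{\mathbf{c}}$ satisfies every three-variables parity equation, so in each block $\mathbf{T}\mathbf{Q}_\tau\tilde{\mathbf{c}}$ an even number (zero or two) of the three active coordinates equals one. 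Define the affine involution $\Phi_{\tilde{\mathbf{c}}}(\mathbf{v})=\mathbf{J}\mathbf{v}+\tilde{\mathbf{c}}$ with $\mathbf{J}=\mathrm{diag}(1-2\tilde c_i)$; that is, $\Phi_{\tilde{\mathbf{c}}}$ leaves $v_i$ unchanged where $\tilde c_i=0$ and sends $v_i\mapsto 1-v_i$ where $\tilde c_i=1$. It maps $[0,1]^{n+\Gamma_a}$ bijectively onto itself and satisfies $\Phi_{\tilde{\mathbf{c}}}^{-1}=\Phi_{\tilde{\mathbf{c}}}$.

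The first key step is a polytope-symmetry lemma: there is a permutation matrix $\mathbf{P}_{\tilde{\mathbf{c}}}$ on $\mathbb{R}^{4\Gamma_c}$, block-diagonal over the $\Gamma_c$ three-variables checks and itself an involution, such that $\mathbf{A}\Phi_{\tilde{\mathbf{c}}}(\mathbf{v})-\mathbf{b}=\mathbf{P}_{\tilde{\mathbf{c}}}(\mathbf{A}\mathbf{v}-\mathbf{b})$ for all $\mathbf{v}$. This is verified block by block: flipping $0$ or $2$ of the three coordinates of a three-variables check merely permutes the four rows of $\mathbf{T}$ in \eqref{t F matrix} (a short case check over the flip sets $\emptyset,\{1,2\},\{1,3\},\{2,3\}$ — e.g.\ flipping coordinates $1$ and $2$ realizes the row permutation $(1\,2)(3\,4)$), and the even-parity property of $\tilde{\mathbf{c}}$ guarantees that only these cases occur. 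The second key step records how the objective of \eqref{QP} transforms: since $\mathbf{q}=[\pmb{\gamma}^{T},\mathbf{0}^{T}]^{T}$, one has $\mathbf{q}^{T}\Phi_{\tilde{\mathbf{c}}}(\mathbf{v})=\tilde{\mathbf{q}}^{T}\mathbf{v}+\mathrm{const}$ with $\tilde{\mathbf{q}}=[\tilde{\pmb{\gamma}}^{T},\mathbf{0}^{T}]^{T}$ and $\tilde\gamma_i=(1-2c_i)\gamma_i$, while $\Phi_{\tilde{\mathbf{c}}}(\mathbf{v})-0.5\mathbf{1}=\mathbf{J}(\mathbf{v}-0.5\mathbf{1})$ gives $\|\Phi_{\tilde{\mathbf{c}}}(\mathbf{v})-0.5\mathbf{1}\|_2^2=\|\mathbf{v}-0.5\mathbf{1}\|_2^2$. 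Hence $f(\Phi_{\tilde{\mathbf{c}}}(\mathbf{v}))=\tilde f(\mathbf{v})+\mathrm{const}$, where $f$ is as in \emph{Theorem~\ref{conver-theorem}} and $\tilde f$ is the same functional built from $\tilde{\pmb{\gamma}}$.

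Combining these facts with orthogonality of $\mathbf{P}_{\tilde{\mathbf{c}}}$ and the identity $\mathbf{A}\Phi_{\tilde{\mathbf{c}}}(\mathbf{v})+\mathbf{P}_{\tilde{\mathbf{c}}}\mathbf{z}-\mathbf{b}=\mathbf{P}_{\tilde{\mathbf{c}}}(\mathbf{A}\mathbf{v}+\mathbf{z}-\mathbf{b})$ yields equivariance of the augmented Lagrangian \eqref{augmented-Lagrangian-penalty}, namely $L_\mu(\Phi_{\tilde{\mathbf{c}}}(\mathbf{v}),\mathbf{P}_{\tilde{\mathbf{c}}}\mathbf{z},\mathbf{P}_{\tilde{\mathbf{c}}}\mathbf{y})=\tilde L_\mu(\mathbf{v},\mathbf{z},\mathbf{y})+\mathrm{const}$, where $\tilde L_\mu$ is \eqref{augmented-Lagrangian-penalty} with $\mathbf{q}$ replaced by $\tilde{\mathbf{q}}$. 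Since $\mathbf{P}_{\tilde{\mathbf{c}}}$ preserves $\mathbb{R}^{4\Gamma_c}_{+}$ and $\Phi_{\tilde{\mathbf{c}}}$ preserves $[0,1]^{n+\Gamma_a}$, and since both maps are involutions, each of the three updates in \eqref{ADMM update_QP} commutes with the triple $(\Phi_{\tilde{\mathbf{c}}},\mathbf{P}_{\tilde{\mathbf{c}}},\mathbf{P}_{\tilde{\mathbf{c}}})$. As the zero initialization of $\mathbf{y}^0,\mathbf{z}^0$ is fixed by $\mathbf{P}_{\tilde{\mathbf{c}}}$, an induction on $k$ then shows that the iterates $\{\mathbf{v}^k,\mathbf{z}^k,\mathbf{y}^k\}$ produced by Algorithm~\ref{ADMM-penalized-alg} from LLR vector $\pmb{\gamma}$ and those $\{\bar{\mathbf{v}}^k,\bar{\mathbf{z}}^k,\bar{\mathbf{y}}^k\}$ produced from $\tilde{\pmb{\gamma}}$ are related by $\bar{\mathbf{v}}^k=\Phi_{\tilde{\mathbf{c}}}(\mathbf{v}^k)$, $\bar{\mathbf{z}}^k=\mathbf{P}_{\tilde{\mathbf{c}}}\mathbf{z}^k$, $\bar{\mathbf{y}}^k=\mathbf{P}_{\tilde{\mathbf{c}}}\mathbf{y}^k$ for all $k$. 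In particular the residual $\|\mathbf{A}\mathbf{v}^k+\mathbf{z}^k-\mathbf{b}\|_2^2$ is unchanged, so the two runs stop at the same iteration, and (accounting also for the case of non-termination, which is in bijection) the first run outputs $\mathbf{c}$ if and only if the second outputs $\Phi_{\tilde{\mathbf{c}}}(\tilde{\mathbf{c}})$, whose first $n$ coordinates are $\mathbf{0}$.

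It remains to pass from this deterministic correspondence to probabilities. For a symmetric memoryless channel the LLR coordinates are independent, and output symmetry gives that the density of $\gamma_i$ conditioned on $x_i=1$ equals the density of $-\gamma_i$ conditioned on $x_i=0$; hence, writing $\pmb{\Gamma}$ for the random LLR vector when $\mathbf{c}$ is transmitted, the folded vector $\tilde{\pmb{\Gamma}}$ with $\tilde\Gamma_i=(1-2c_i)\Gamma_i$ has exactly the distribution of the LLR vector produced when $\mathbf{0}$ is transmitted. By the deterministic equivalence above, the event ``Algorithm~\ref{ADMM-penalized-alg} fails on input $\pmb{\Gamma}$ with $\mathbf{c}$ sent'' coincides with ``Algorithm~\ref{ADMM-penalized-alg} fails on input $\tilde{\pmb{\Gamma}}$ with $\mathbf{0}$ sent'', so $\Pr[\,\text{fail}\mid\mathbf{c}\,]=\Pr[\,\text{fail}\mid\mathbf{0}\,]$ for every $\mathbf{c}\in\mathcal{C}$, which is the assertion of the theorem. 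The main obstacle is the polytope-symmetry lemma: one must show that $\Phi_{\tilde{\mathbf{c}}}$ genuinely permutes the constraint rows of $\mathbf{A}$ rather than deforming the feasible region, and this is exactly where the even-parity structure of the lifted codeword $\tilde{\mathbf{c}}$ and the particular form of $\mathbf{T}$ in \eqref{t F matrix} are used; the Lagrangian equivariance, the induction, and the probabilistic conclusion are then routine. I expect the paper to isolate the first two key steps as the two lemmas and the block-permutation statement (or its consequence $\Phi_{\tilde{\mathbf{c}}}(\mathcal{X})=\mathcal{X}$) as the corollary referenced before the theorem.
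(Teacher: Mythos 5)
Your proposal is correct and follows essentially the same route as the paper: your involution $\Phi_{\tilde{\mathbf{c}}}$ is the paper's relative vector $\mathrm{R}_{\mathbf{v}}(\cdot)$, your block permutation $\mathbf{P}_{\tilde{\mathbf{c}}}$ is the paper's mapping operator $\mathcal{T}_{\mathbf{v}}(\cdot)$ built from $\mathcal{M}_{\mathbf{v}_{\tau}}(\cdot)$ (your case $(1\,2)(3\,4)$ for flips $\{1,2\}$ matches $\mathcal{M}$ at $\mathbf{v}_{\tau}=[1\ 1\ 0]^{T}$ exactly), and your induction on the ADMM iterates, preservation of the stopping residual, and channel-symmetry argument correspond to Lemma 2, equation \eqref{termination1}, and statement (a) of the paper's proof. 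The only difference is presentational: you package the per-coordinate computations as equivariance of the augmented Lagrangian under the pair $(\Phi_{\tilde{\mathbf{c}}},\mathbf{P}_{\tilde{\mathbf{c}}})$, whereas the paper verifies the update formulas coordinate by coordinate.
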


\subsection{Computational complexity analysis}\label{complexity-sec}

  Now, we focus on analyzing the complexity of the proposed \emph{Algorithm \ref{ADMM-penalized-alg}}.
  Since the entries in matrix $\mathbf{A}$ are either 0, 1, or $-$1, its corresponding matrix multiplications, such as $\mathbf{A}^T\mathbf{b}$ et al., can be performed via only addition operations.
  Observing \eqref{vi-update-2fanshu}, one can find that $\phi_i$ and $\theta_i$ can be calculated before we start the ADMM iterations. This means that we only need one multiplication to compute $v_i^{k+1}$.
  Thus, the computational complexity of $\mathbf{v}^{k+1}$ \eqref{vi-update-2fanshu} in each iteration is $\mathcal{O}(n+\Gamma_{a})$.
  The properties of matrix $\mathbf{A}$ can also be applied to computing $\mathbf{z}^{k+1}$, which results in calculating it via only addition operations. Observing \eqref{y-update-simplify}, we can find that it takes no addition and multiplication operations to compute $\frac{\mathbf{y}^{k+1}}{\mu}$ since $\tilde{\mathbf{z}}^{k+1}$ has already been obtained in \eqref{zi update}.
  From the above analysis, the total computational complexity of \emph{Algorithm \ref{ADMM-penalized-alg}} in each iteration is roughly $\mathcal{O}(n+\Gamma_{a})$.
  Moreover, observing \eqref{gamma_a_c}, one can find $\Gamma_a \leq m(d-3)=n(1-R)(d-3)$, where $R$ denotes the code rate and $d$ is the largest check node degree.
  This implies that $\Gamma_a$ is comparable to code length $n$ since $d\ll n$ in the case of the LDPC code.
  Hence, we conclude that the computational complexity of our proposed decoding algorithm in each iteration is linear in terms of LDPC code length.

  In Table \ref{Complexity-Comparison}, we compare computational complexity per iteration of \emph{Algorithm \ref{ADMM-penalized-alg}} with several state-of-the-art ADMM-based MP decoders.
  {Moreover, we also present the complexity of the conventional sum-product BP decoder \cite{channel-code} as a comparison.}
  From it, we have the following observations:
  first, the total complexity of the ADMM-based penalized decoding (PD) method \cite{penalty-decoder} per iteration is roughly $O(n+m2d\textrm{log}d)$. In it, ``two-slice'' (sorting) operations are involved.
  Second, the ADMM-based PD method with the check polytope projection (CPP) algorithm \cite{efficient-projection1} in each iteration has a complexity of roughly $O(n+md\textrm{log}d)$, where implementing projection operations onto check polytope needs one sorting operation.
  Third, the ADMM-based PD method with the CPP algorithm \cite{efficient-projection2} in each iteration has an average complexity of $O(n+md\textrm{log}d)$ since a partial sorting operation, which has an average linear complexity and a worst case quadratic complexity to the check degree, has to be involved.
  Fourth, the ADMM-based PD method with the CPP algorithm \cite{efficient-projection3} in each iteration has $O(n+mI_{\rm max})$ complexity per iteration because it involves an iterative projection procedure.
  Fifth, the $\ell_{p}$-box ADMM-based decoder \cite{lp-box-decoder} is expensive in comparison with state-of-the-art ADMM-based penalized decoders since it has to handle the extra $\ell_{p}$-sphere constraint.
  {In the end, as for the conventional sum-product BP decoder, it costs $d(2d-2)$ multiplications and $d(d-1)$ hyperbolic tangent operations for updating each check node and hence its per-iteration complexity is about $\mathcal{O}(md^2)$.}
  From the above analysis, it is clear that computational complexity per iteration of the proposed ADMM-QP decoder is cheaper than state-of-the-art MP decoders \cite{efficient-projection1}\cite{efficient-projection2}\cite{efficient-projection3}\cite{penalty-decoder}\cite{lp-box-decoder} and {is comparable to the conventional sum-product BP decoder in \cite{channel-code}.}

\begin{table}\small
\caption{Comparisons of Per-iteration Computational Complexity.}
\label{Complexity-Comparison}
\begin{center}
\begin{tabular}{|c|c|c|}
\hline
\hline
\textbf{Decoding algorithm}&  \textbf{Computational complexity}\\\hline
{Sum-product BP decoder \cite{channel-code}} & {$\mathcal{O}(md^2)$\footnotemark} \\\hline
ADMM PD \cite{penalty-decoder} & $\mathcal{O}(n+m2d \log d)$\\\hline
ADMM PD with CPP \cite{efficient-projection1} & $\mathcal{O}(n+md\log d)$\\\hline
ADMM PD with CPP \cite{efficient-projection2}  & $\mathcal{O}(n+md)$\\\hline
ADMM PD with CPP \cite{efficient-projection3}  & $\mathcal{O}(n+mI_{\rm max})$\footnotemark\\\hline
$\ell_{p}$-box ADMM decoder \cite{lp-box-decoder} & $\mathcal{O}(n+md\log d)$\\\hline
Proposed QP-ADMM decoder & $\mathcal{O}(n+m(d-3))$ \\\hline\hline
 \end{tabular}
 \end{center}
\end{table}
 \footnotetext[4]{Symbol $d$ denotes the largest check-node degree. {Since, in the case of the LPDC code, check node number $m$ scales linearly with code length $n$ and check-node degree $d$ is small, the computational complexity per iteration of the sum-product BP decoder is linear to code length $n$.}}

\footnotetext{$I_{\rm max}$ denotes the maximum number of iterations in the projection algorithm \cite{efficient-projection3}. Usually, $I_{\rm max}>d$ in the LDPC case. }

\section{Simulation results}\label{simulation-result}

  \begin{figure*}[tp]
\subfigure[$\mathcal{C}_{1}$ -- (2640,1320).]{
    \begin{minipage}{8.5cm}
    \centering
        \includegraphics[width=3.5in,height=2.7in]{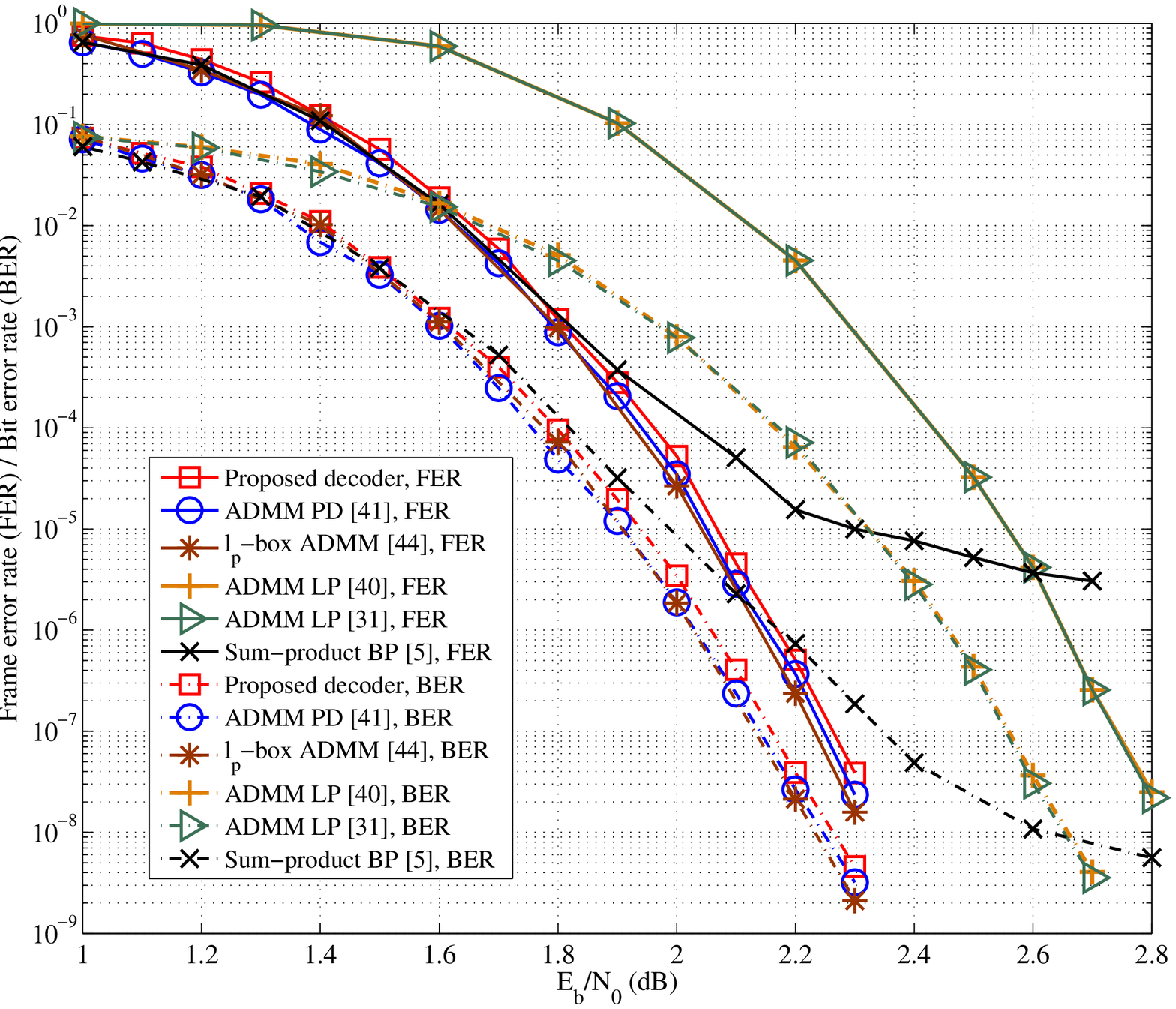}
            \label{fer2640}
    \end{minipage}%
    }
    \subfigure[$\mathcal{C}_{2}$ -- (13298, 3296).]{
    \begin{minipage}{8.5cm}
    \centering
        \includegraphics[width=3.5in,height=2.7in]{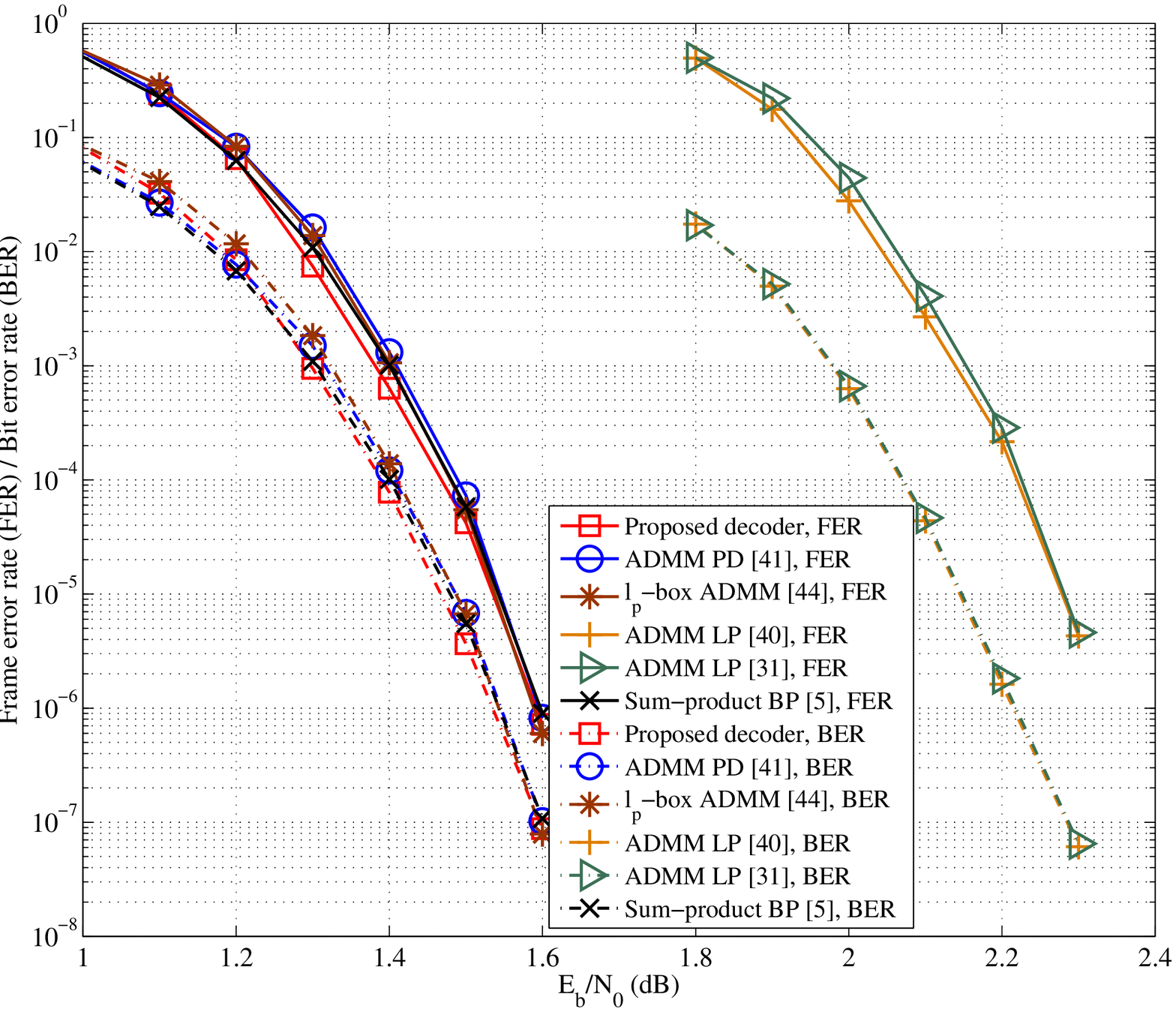}
            \label{fer13298}
    \end{minipage}%
    }\
\subfigure[$\mathcal{C}_{3}$ -- (999,888).]{
    \begin{minipage}{8.5cm}
    \centering
        \includegraphics[width=3.5in,height=2.7in]{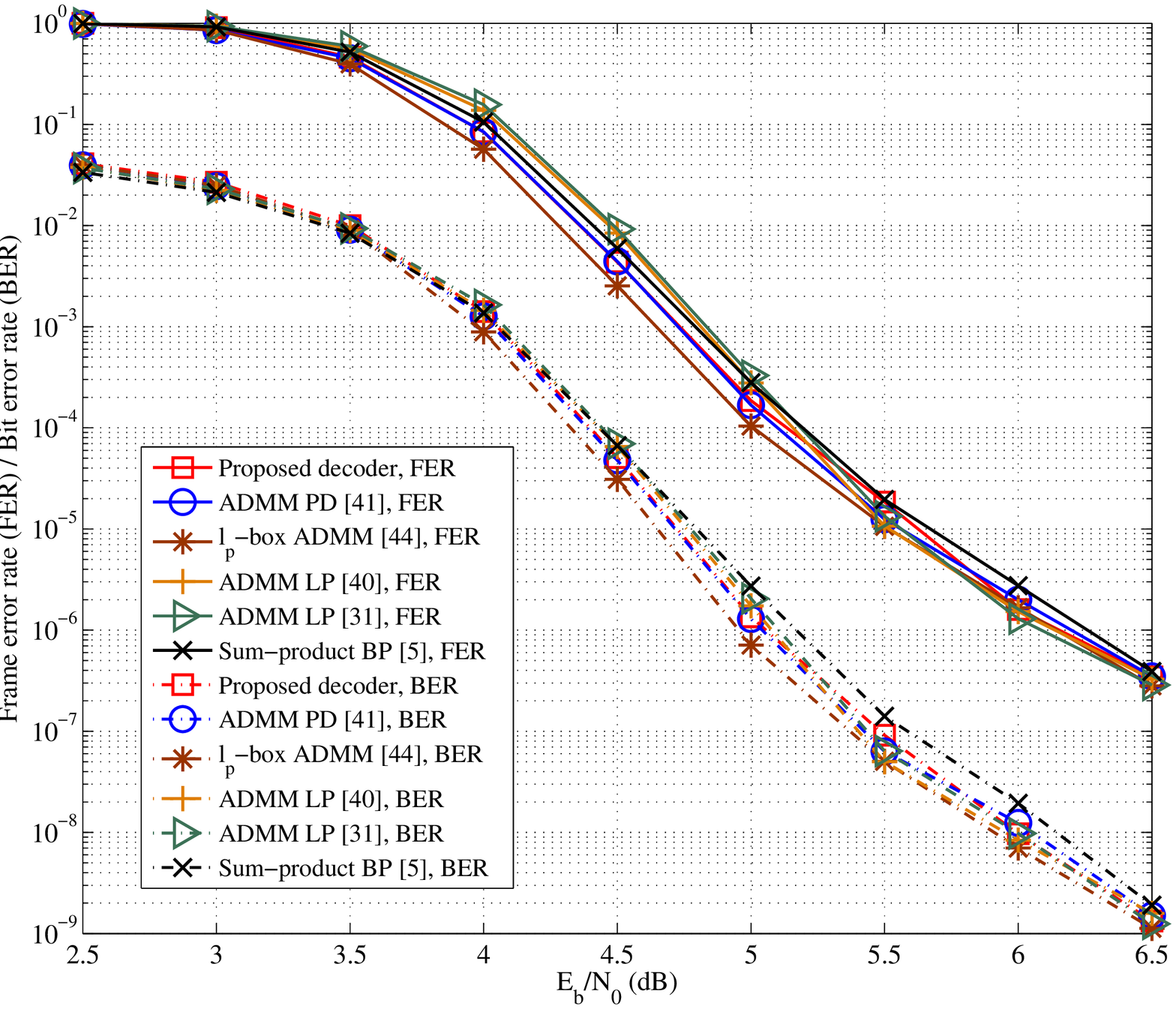}
            \label{fer999}
    \end{minipage}%
    }
    \centering
    \subfigure[$\mathcal{C}_{4}$ -- (576, 288).]{
    \begin{minipage}{8.5cm}
    \centering
        \includegraphics[width=3.5in,height=2.7in]{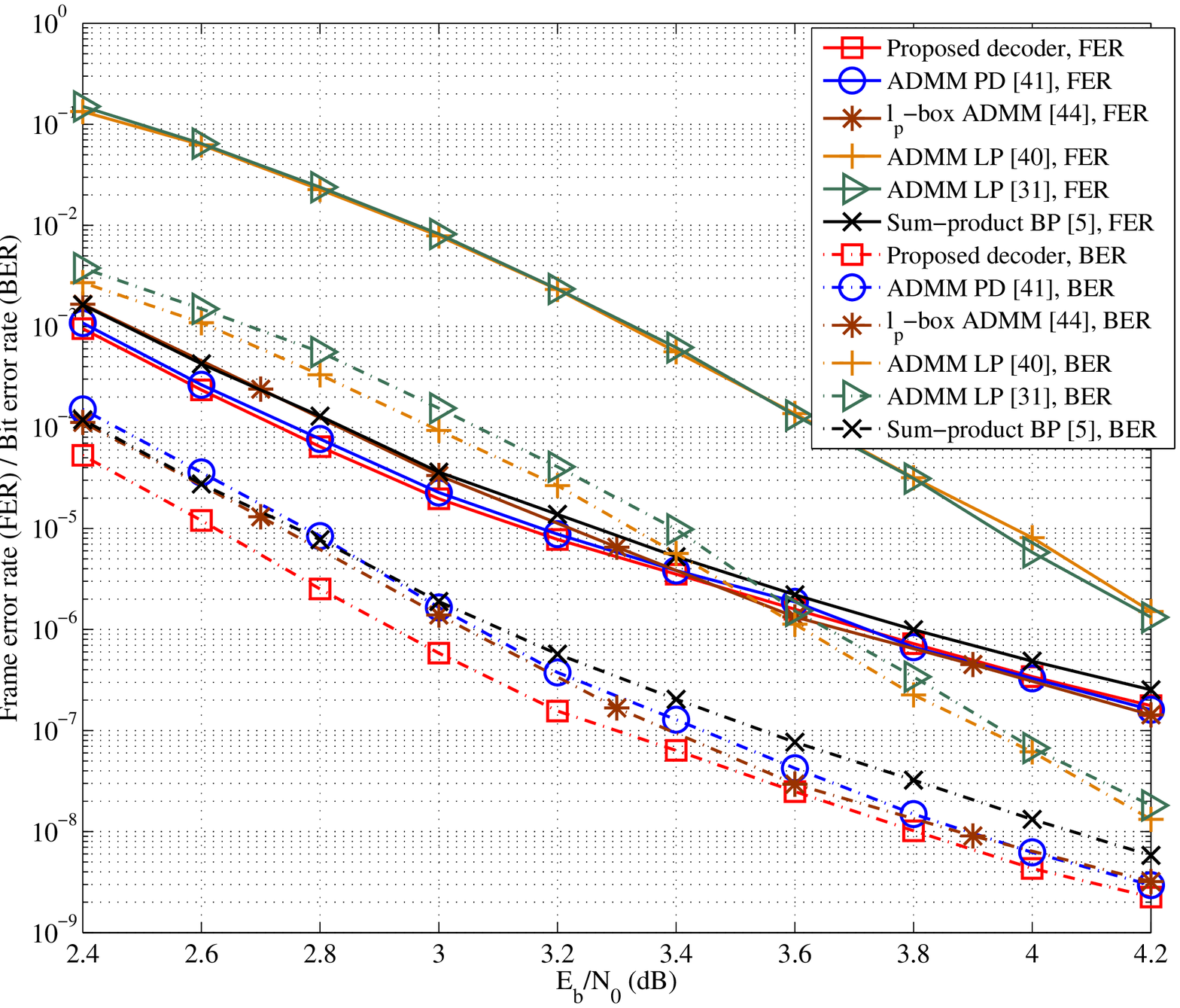}
            \label{fer576}
    \end{minipage}%
    }\
    \subfigure[$\mathcal{C}_{5}$ -- (1152,864).]{
    \begin{minipage}{8.5cm}
    \centering
        \includegraphics[width=3.5in,height=2.7in]{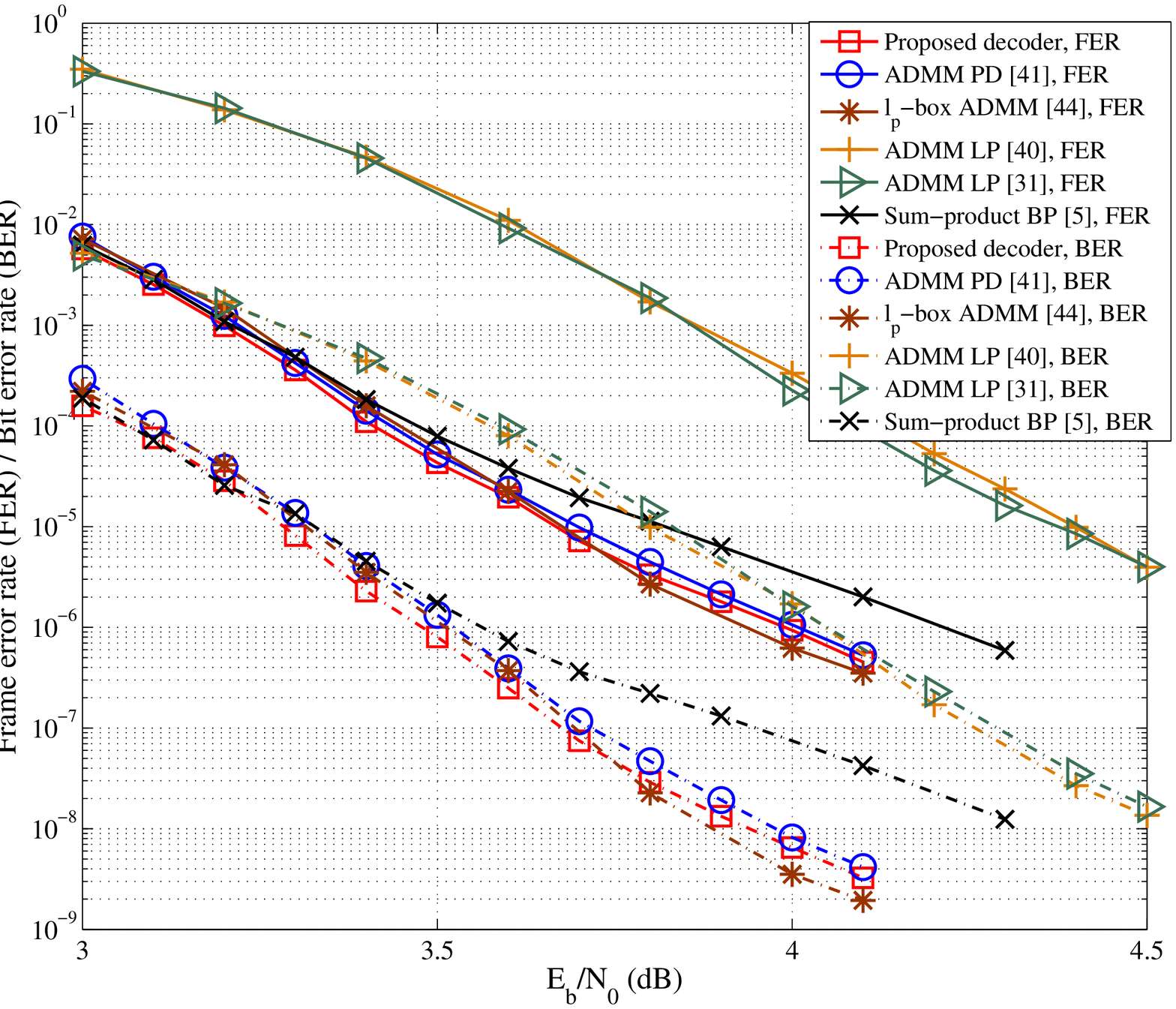}
            \label{fer1152}
    \end{minipage}%
    }\
    \subfigure[$\mathcal{C}_{6}$ -- (648,432).]{
    \begin{minipage}{8.5cm}
    \centering
        \includegraphics[width=3.5in,height=2.7in]{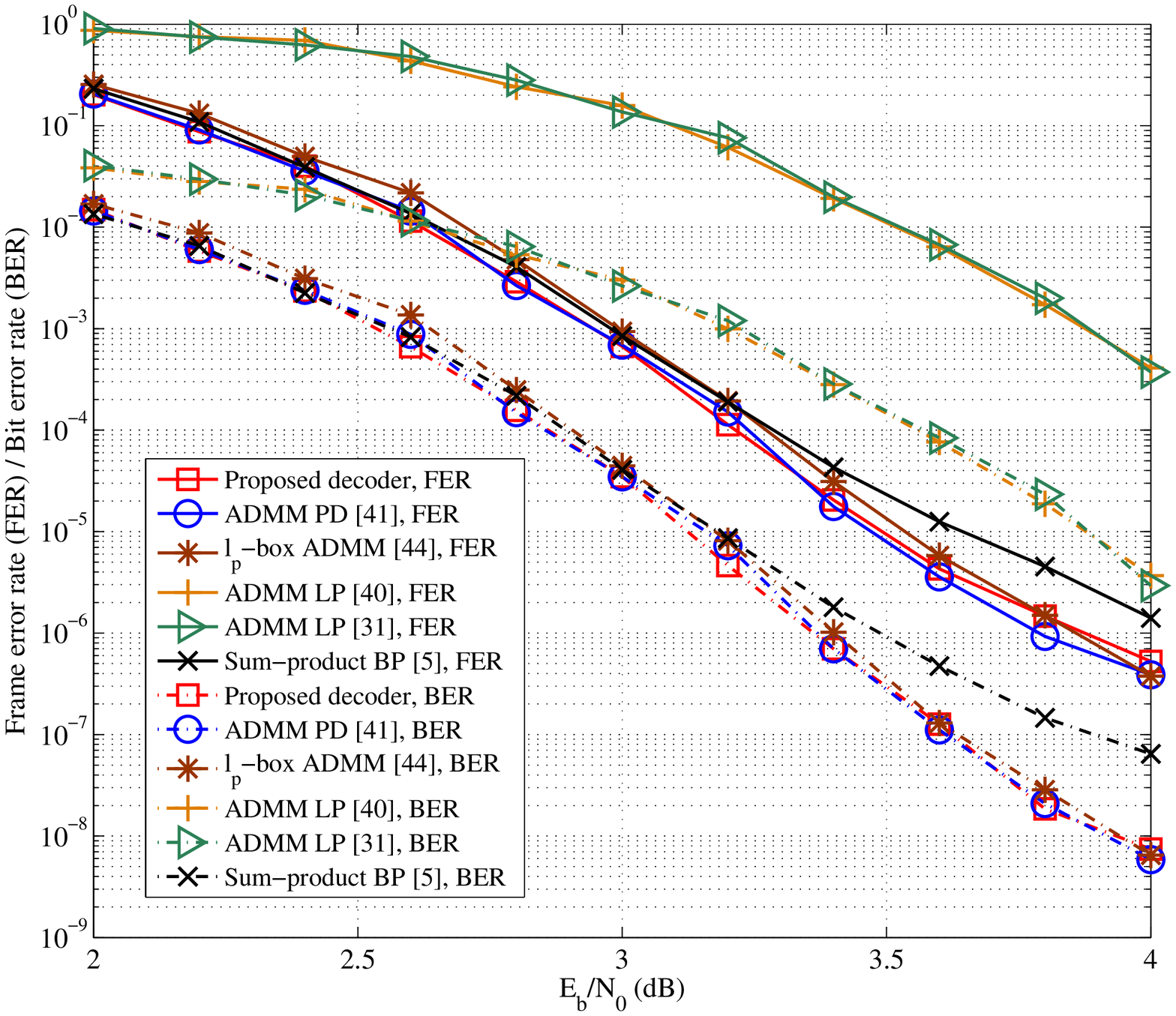}
            \label{fer1296}
    \end{minipage}%
    }
     \centering
    \caption{Comparisons of FER/BER performance for six LDPC codes from \cite{Mackay-code} and \cite{802.16-code} using different decoders, where $\mathcal{C}_{1}$ denotes (2640,1320) regular ``Margulis'' LDPC code, $\mathcal{C}_{2}$ and $\mathcal{C}_{3}$ denote MacKay (13298, 3296) irregular LDPC code and rate-0.89 MacKay (999,888) LDPC code, $\mathcal{C}_{4}$ and $\mathcal{C}_{5}$ are from IEEE 802.16e standard and denote  $(576, 288)$ irregular LDPC code and $(1152,864)$ irregular LDPC code respectively, {and $\mathcal{C}_{6}$ denotes $(648,432)$ regular LDPC code from IEEE 802.11n standard}.}
    \label{fer_ber}
 \end{figure*}

  In this section, several simulation results are presented to show the effectiveness of the proposed ADMM-based QP decoder.
  Specifically, in Section \ref{performance-simulation-result} we demonstrate the error correction performance and decoding efficiency of the proposed QP-ADMM decoder and also compare our decoder with the classical sum-product BP decoder and other ADMM-based MP decoders.
  In Section \ref{parameter-choices-alg1-simulation-result}, we focus on how to choose parameters $\mu$ and $\alpha$ to improve the error-correction performance and decoding efficiency of our proposed decoder.

\subsection{Performance comparison of the proposed QP-ADMM decoder}\label{performance-simulation-result}

  In this subsection, we present simulation results for six binary LDPC codes. The considered LDPC decoders are our proposed QP-ADMM decoder,  classical sum-product BP decoder  \cite{channel-code}, ADMM-based penalized decoder \cite{penalty-decoder}, another two ADMM-based penalized decoders with more efficient CPPs in \cite{efficient-projection1} and \cite{efficient-projection3} respectively, and $\ell_{p}$-box ADMM-based decoder \cite{lp-box-decoder}.
  The first LDPC code, named by $\mathcal{C}_{1}$, is (2640,1320), (3,6)-regular ``Margulis'' LDPC code.
  The other two codes, named by $\mathcal{C}_{2}$ and $\mathcal{C}_{3}$, are  MacKay (13298, 3296) irregular LDPC code and rate-0.89 MacKay (999,888) LDPC code.
  The above three codes $\mathcal{C}_{1}$, $\mathcal{C}_{2}$ and $\mathcal{C}_{3}$ are from \cite{Mackay-code}.
  Moreover, we consider another two LDPC codes $\mathcal{C}_{4}$ and $\mathcal{C}_{5}$ from the IEEE 802.16e standard \cite{802.16-code} and {one wifi code $\mathcal{C}_{6}$ from the IEEE 802.11n standard \cite{802.11n-code}}.
  They are $(576, 288)$ irregular LDPC code, $(1152,864)$ irregular LDPC code and {$(648,432)$ regular LDPC code} respectively.
  The simulation parameters are set as follows: parameter $\mu$ is set to be 1 for all of the six LDPC codes; penalty parameter $\alpha$ is chosen as 0.6, 0.8, {0.3}, 0.9, 0.4 and {0.6} for the six codes respectively; and we stop iterations when the residual $\|\mathbf{Av}^k+\mathbf{z}^k-\mathbf{b}\|_2^2<10^{-5}$ or the maximum number of iterations, 1000, is reached.
  The information bits are transmitted over the additive white Gaussian noise (AWGN) channel with binary phase shift keying (BPSK). Simulations are performed under the Microsoft visual C++ 6.0 environment and carried out on a computer whose specifications are i5-3470 3.2GHz CPU and 16 GB RAM.

 \begin{figure}[htbp]
   \subfigure[Average decoding time of $\mathcal{C}_{1}$.]{
    \begin{minipage}{9cm}
    \centering
        \includegraphics[width=3.5in,height=2.7in]{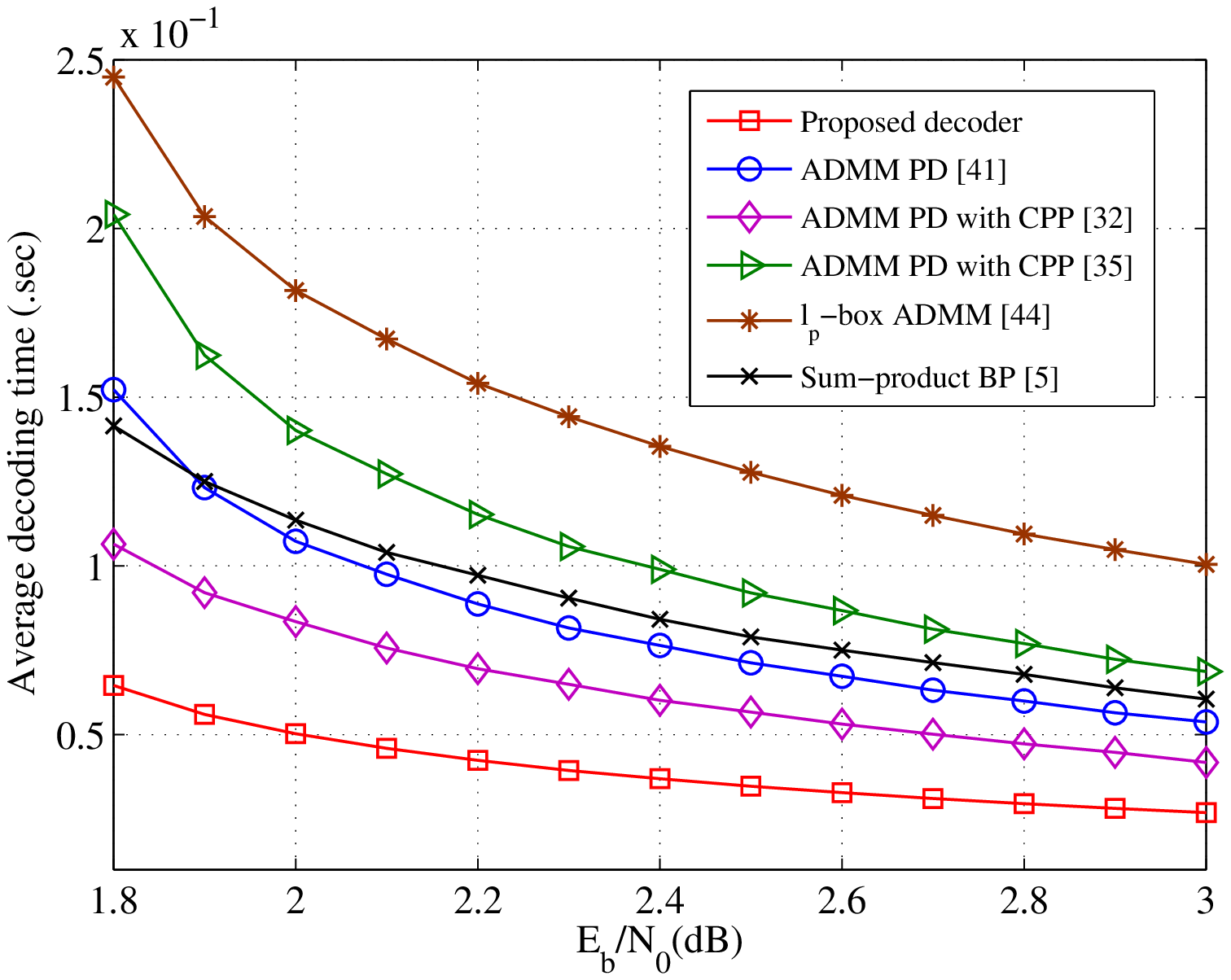}
            \label{time2640}
    \end{minipage}%
    }
    ~~~~~~~~~~~~~
    \subfigure[Convergence characteristic of $\mathcal{C}_{1}$.]{
    \begin{minipage}{9cm}
    \centering
        \includegraphics[width=3.5in,height=2.7in]{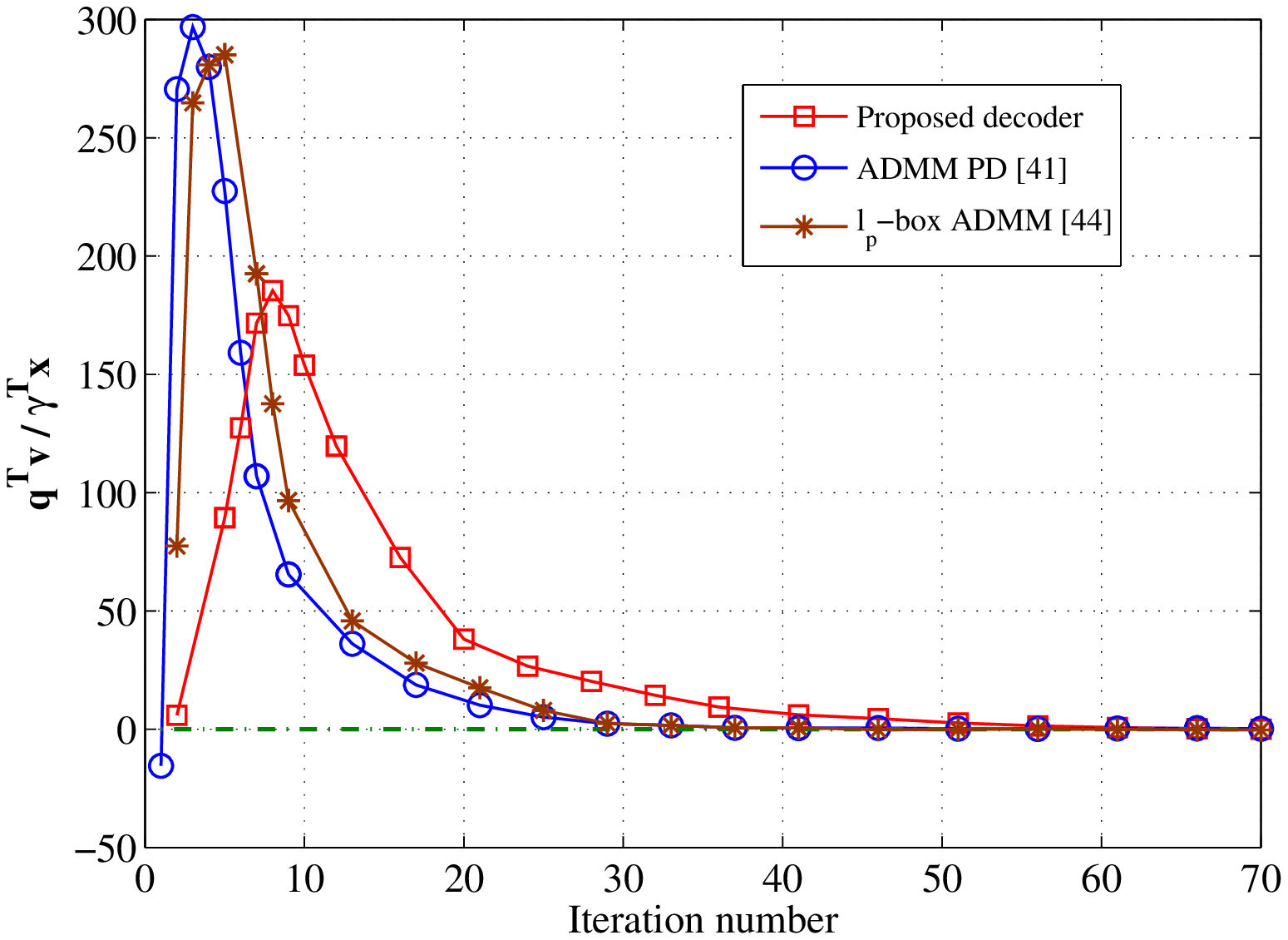}
            \label{conver2640}
    \end{minipage}%
    }
    ~~~~~~~~~~~~~
        \subfigure[Iteration number distribution of $\mathcal{C}_{1}$.]{
    \begin{minipage}{9cm}
    \centering
        \includegraphics[width=3.5in,height=2.7in]{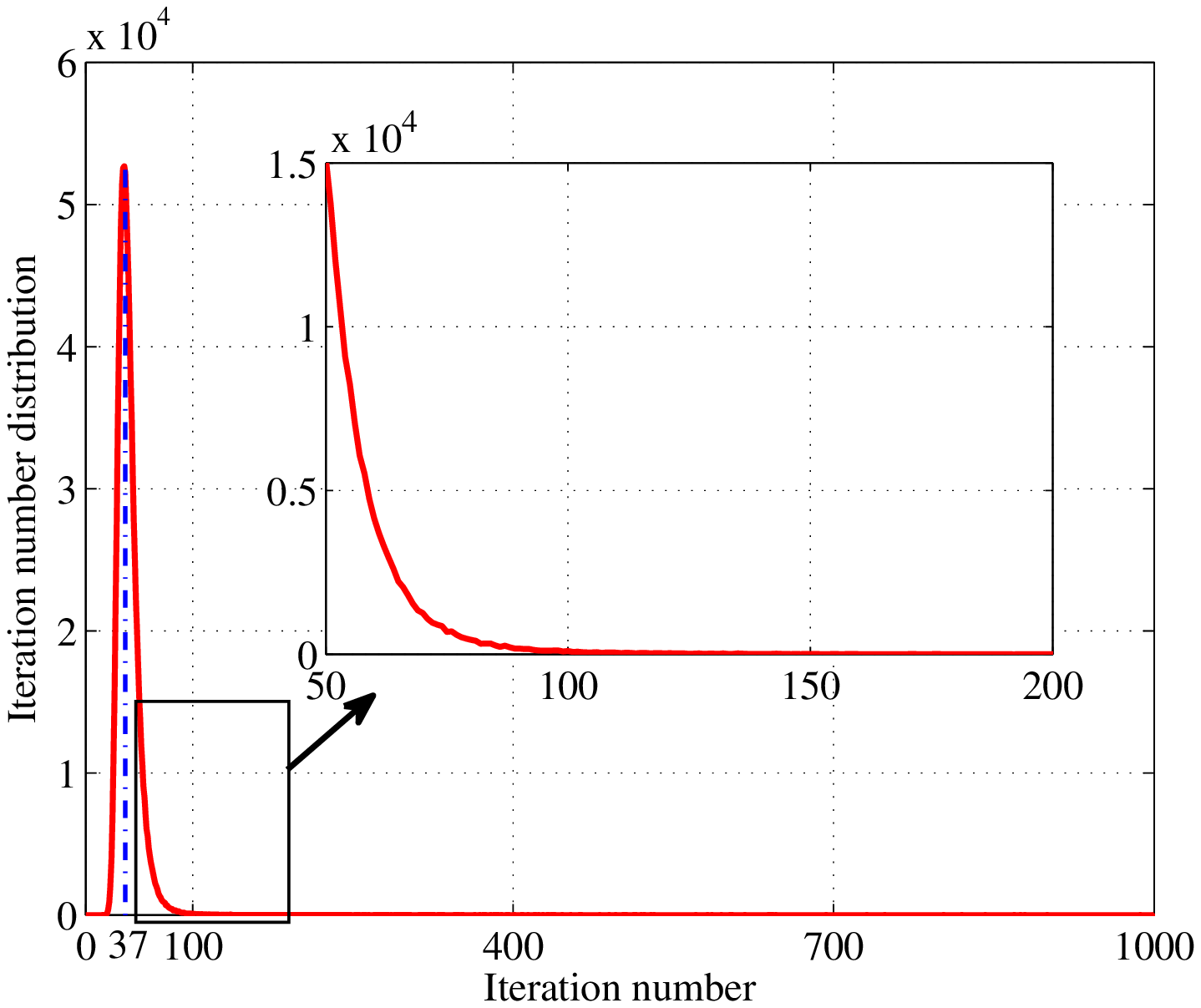}
            \label{ite2640}
    \end{minipage}%
    }\
 \centering
 \caption{ Average decoding time comparison, convergence characteristic comparison at $E_b/N_0$=2dB and iteration number distribution at $E_b/N_0$=2dB of $\mathcal{C}_{1}$, where $\mathcal{C}_{1}$ denotes the (2640,1320) regular ``Margulis'' LDPC code. All-zeros codeword is transmitted.}
 \label{time3}
 \end{figure}
  \begin{figure}[htbp]
 \subfigure[Average decoding time of $\mathcal{C}_{2}$.]{
    \begin{minipage}{9cm}
    \centering
        \includegraphics[width=3.5in,height=2.7in]{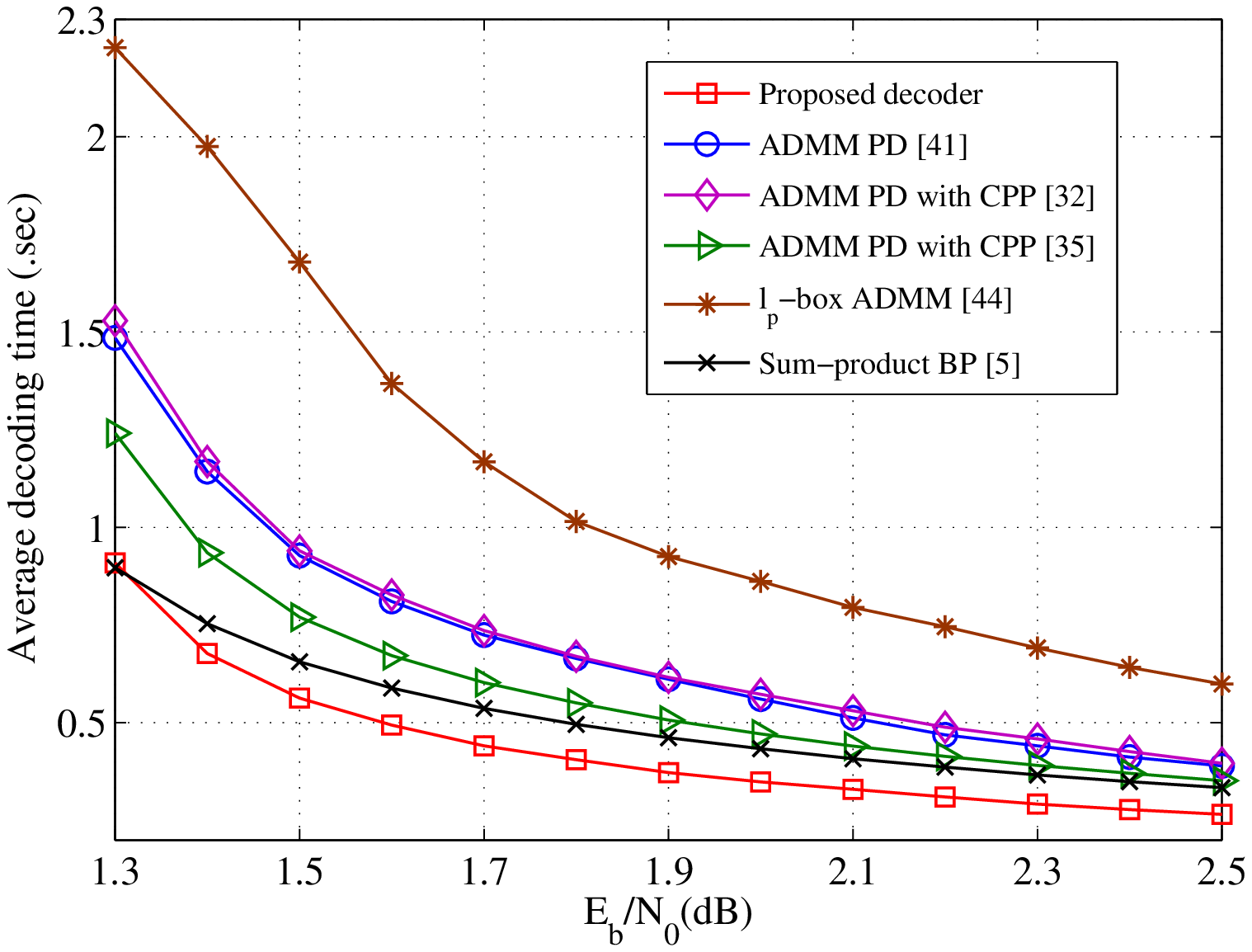}
            \label{time13298}
    \end{minipage}%
    }
    \subfigure[Convergence characteristic of $\mathcal{C}_{2}$.]{
    \begin{minipage}{9cm}
    \centering
        \includegraphics[width=3.5in,height=2.7in]{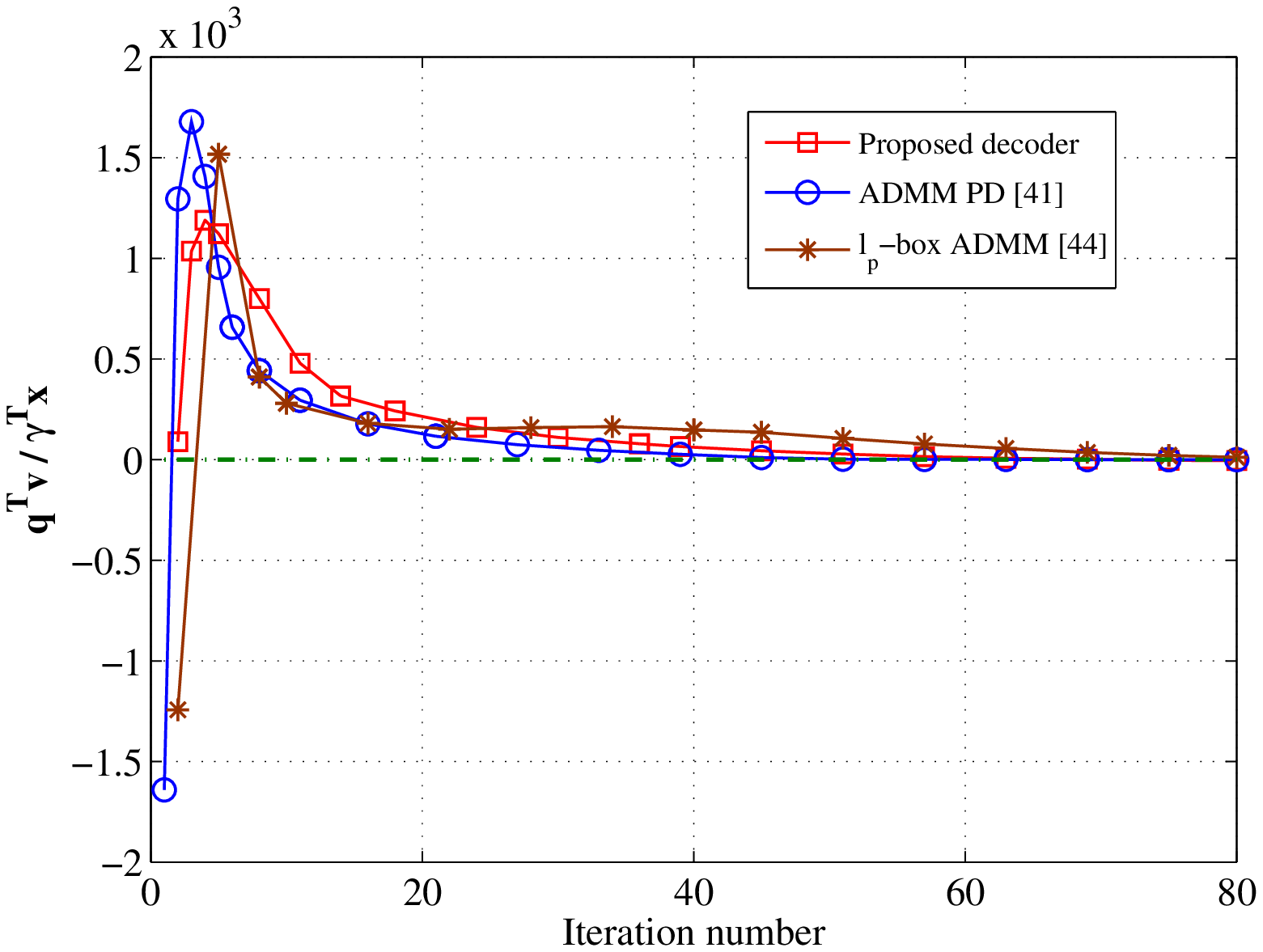}
            \label{conver13298}
    \end{minipage}%
    }
        \subfigure[Iteration number distribution of $\mathcal{C}_{2}$.]{
    \begin{minipage}{9cm}
    \centering
        \includegraphics[width=3.5in,height=2.7in]{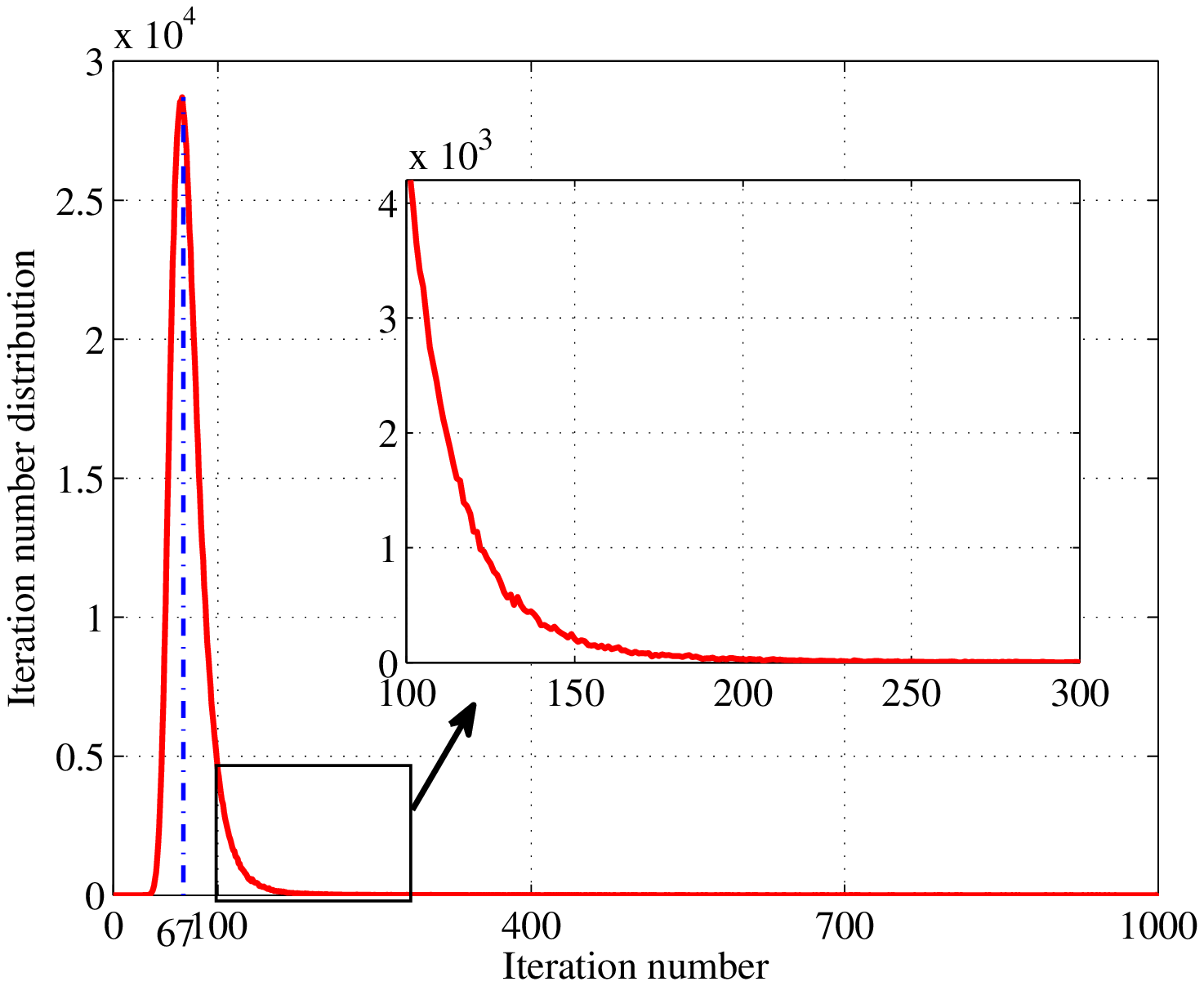}
            \label{ite13298}
    \end{minipage}%
    }\
 \centering
 \caption{ Average decoding time comparison, convergence characteristic comparison at $E_b/N_0$=1.5dB and iteration number distribution at $E_b/N_0$=1.5dB of $\mathcal{C}_{2}$, where $\mathcal{C}_{2}$ denotes the MacKay (13298, 3296) irregular LDPC code. All-zeros codeword is transmitted.}
 \label{time3}
 \end{figure}
 \begin{figure}[htbp]
   \subfigure[Average decoding time of $\mathcal{C}_{3}$.]{
    \begin{minipage}{9cm}
    \centering
        \includegraphics[width=3.5in,height=2.7in]{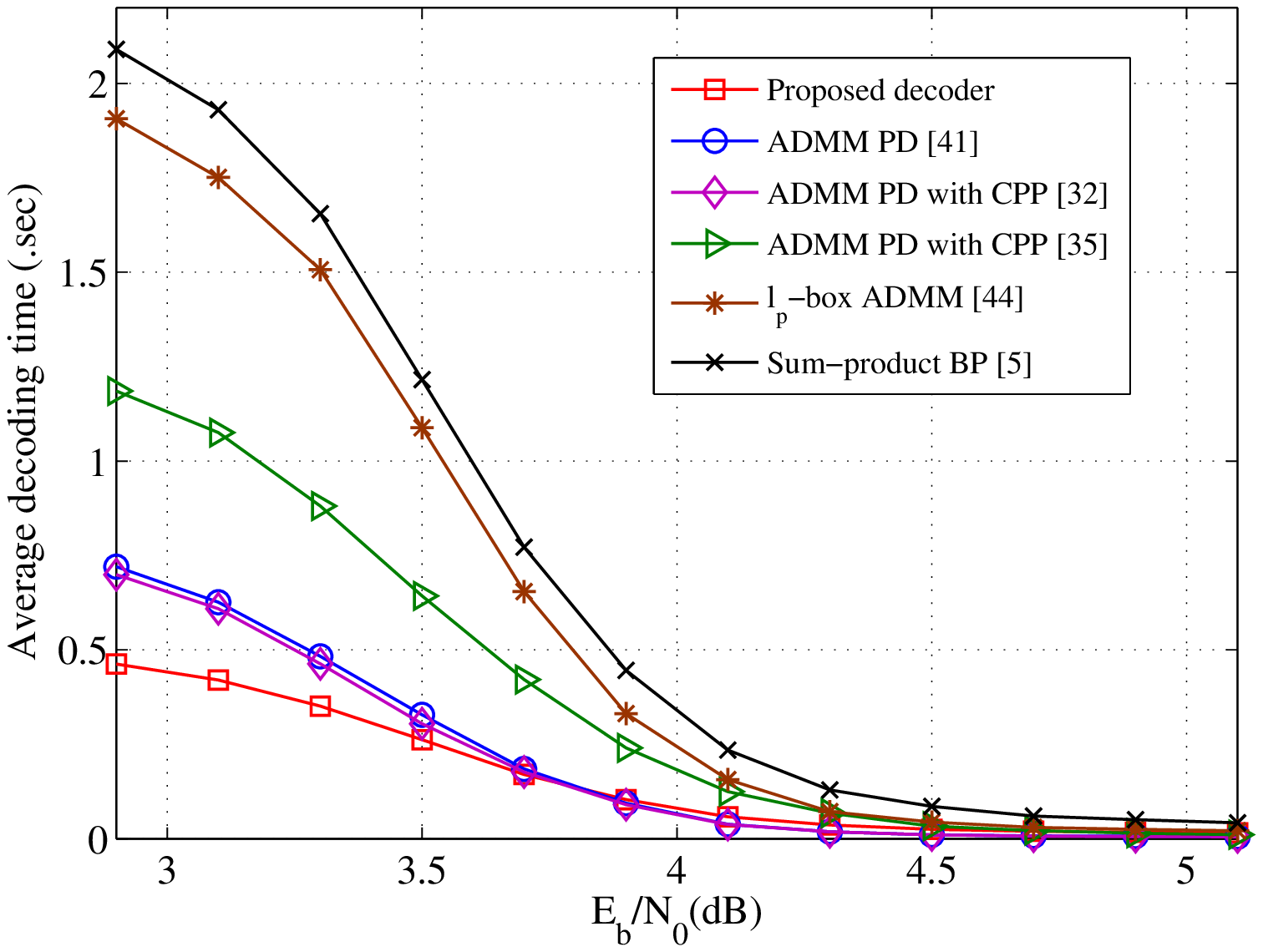}
            \label{time999}
    \end{minipage}%
    }
    \subfigure[Convergence characteristic of $\mathcal{C}_{3}$.]{
    \begin{minipage}{9cm}
    \centering
        \includegraphics[width=3.5in,height=2.7in]{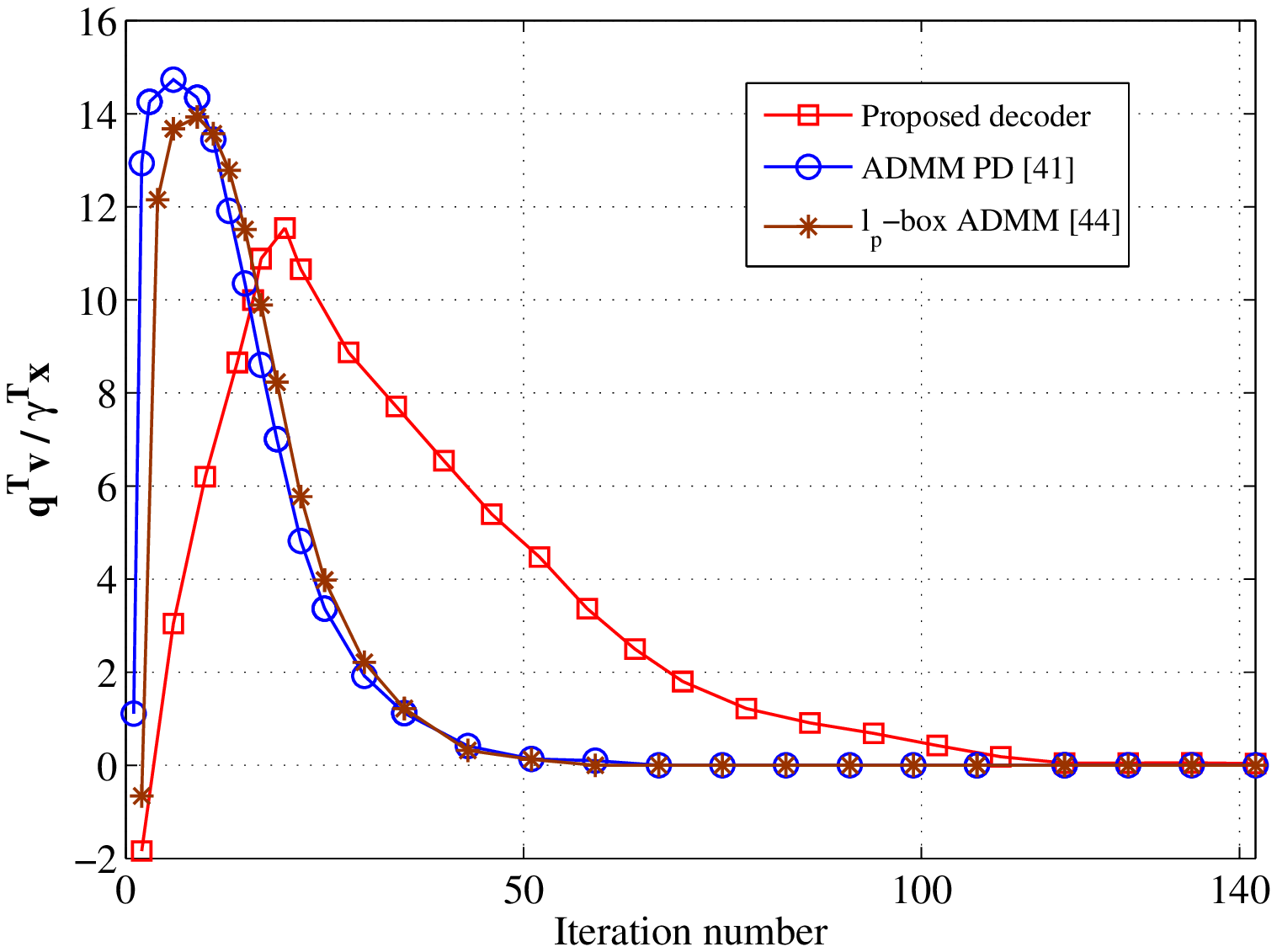}
            \label{conver999}
    \end{minipage}%
    }
        \subfigure[Iteration number distribution of $\mathcal{C}_{3}$.]{
    \begin{minipage}{9cm}
    \centering
        \includegraphics[width=3.5in,height=2.7in]{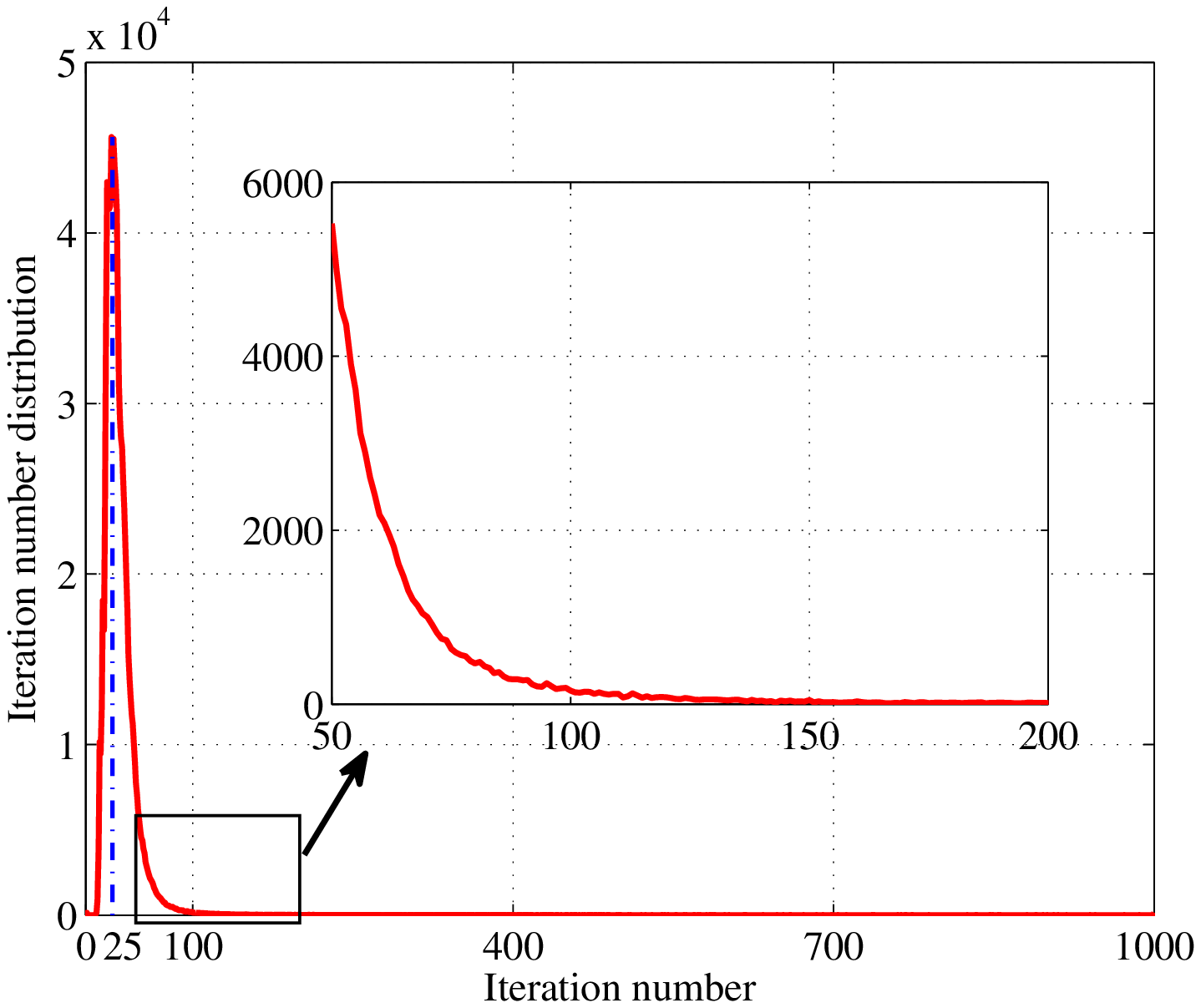}
            \label{ite999}
    \end{minipage}%
    }\
 \centering
 \caption{ Average decoding time comparison, convergence characteristic comparison at $E_b/N_0$=5dB and iteration number distribution at $E_b/N_0$=5dB of $\mathcal{C}_{3}$, where $\mathcal{C}_{3}$ denotes the rate-0.89 MacKay (999,888) LDPC code. All-zeros codeword is transmitted.}
 \label{time3}
 \end{figure}
 \begin{figure}[htbp]
  \subfigure[Average decoding time of $\mathcal{C}_{4}$.]{
    \begin{minipage}{9cm}
    \centering
        \includegraphics[width=3.5in,height=2.7in]{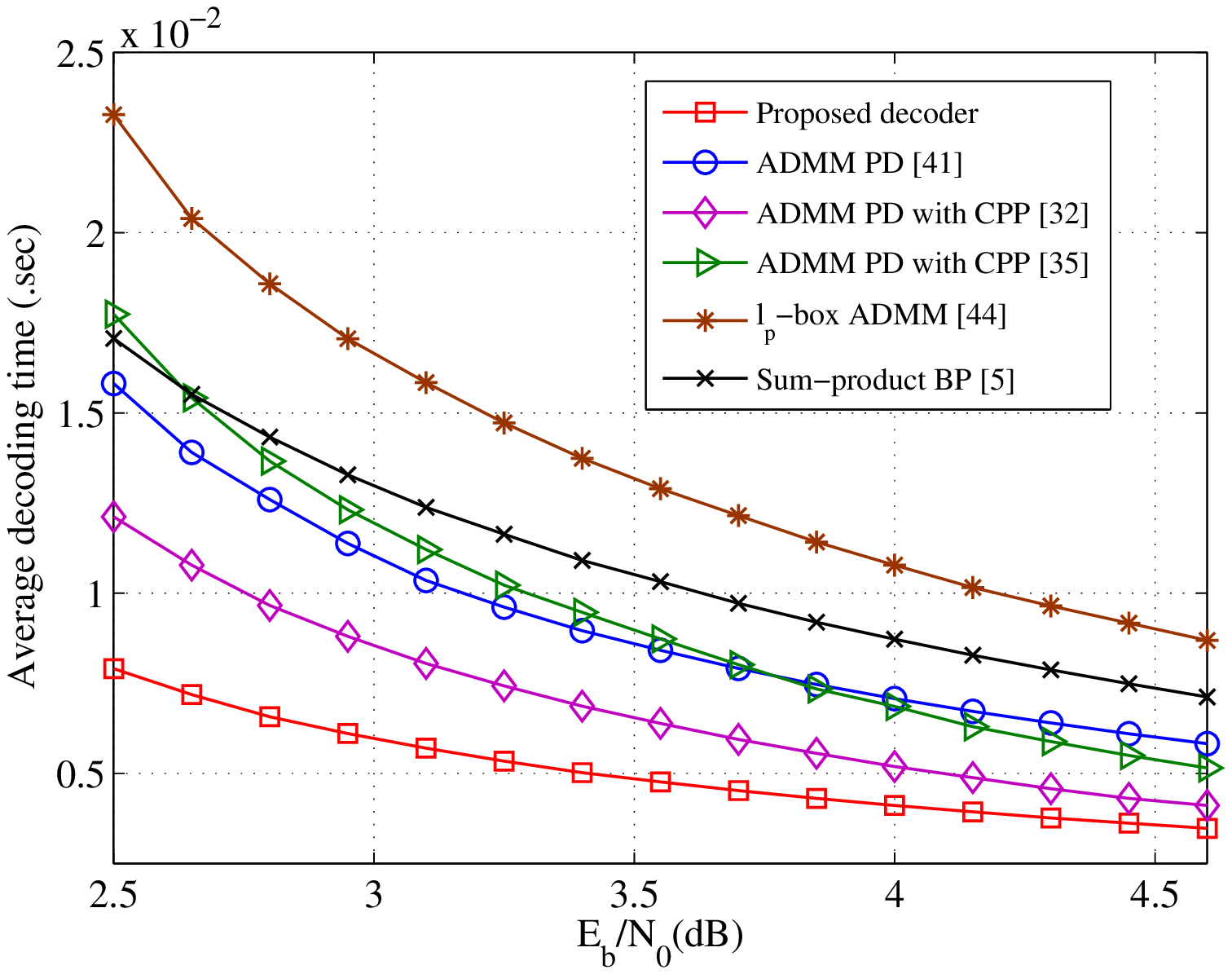}
            \label{time576}
    \end{minipage}%
    }
    \subfigure[Convergence characteristic of $\mathcal{C}_{4}$.]{
    \begin{minipage}{9cm}
    \centering
        \includegraphics[width=3.5in,height=2.7in]{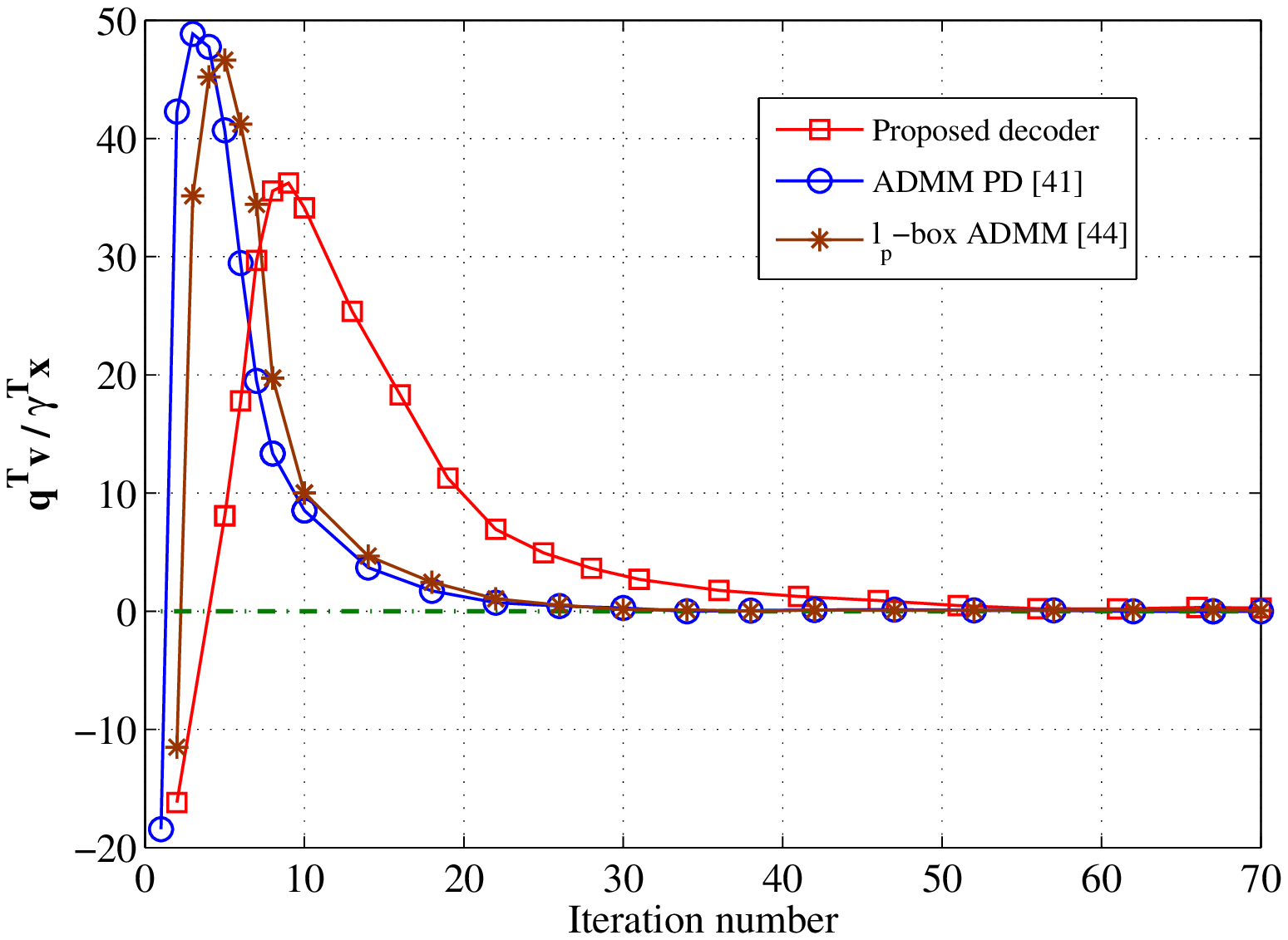}
            \label{conver576}
    \end{minipage}%
    }
        \subfigure[Iteration number distribution of $\mathcal{C}_{4}$.]{
    \begin{minipage}{9cm}
    \centering
        \includegraphics[width=3.5in,height=2.7in]{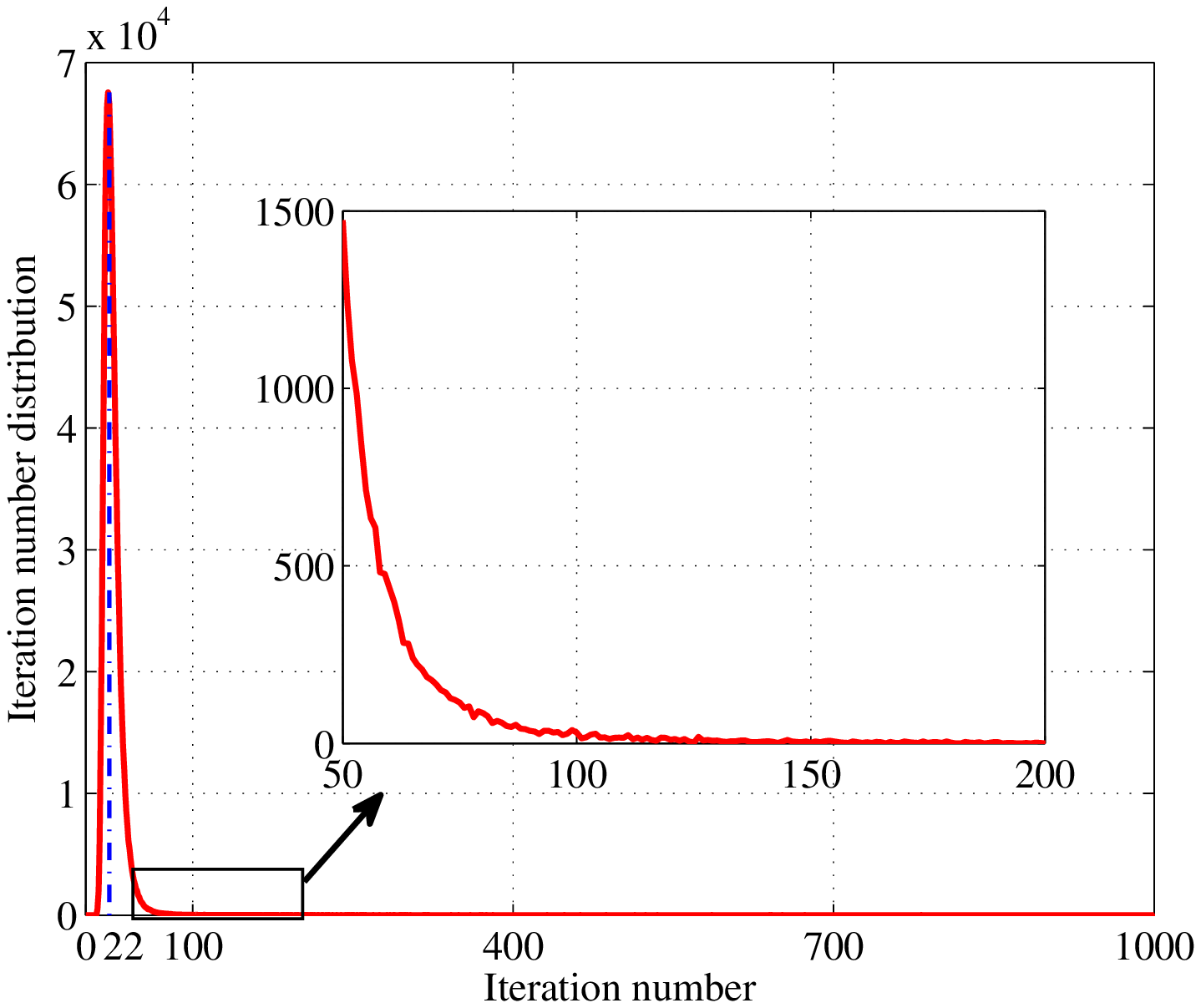}
            \label{ite576}
    \end{minipage}%
    }\
 \centering
 \caption{ Average decoding time comparison, convergence characteristic comparison at $E_b/N_0$=3dB and iteration number distribution at $E_b/N_0$=3dB of $\mathcal{C}_{4}$, where $\mathcal{C}_{4}$ denotes the (576,288) irregular LDPC code from IEEE 802.16e standard. All-zeros codeword is transmitted.}
 \label{time3}
 \end{figure}
 \begin{figure}[htbp]
  \subfigure[Average decoding time of $\mathcal{C}_{5}$.]{
    \begin{minipage}{9cm}
    \centering
        \includegraphics[width=3.5in,height=2.7in]{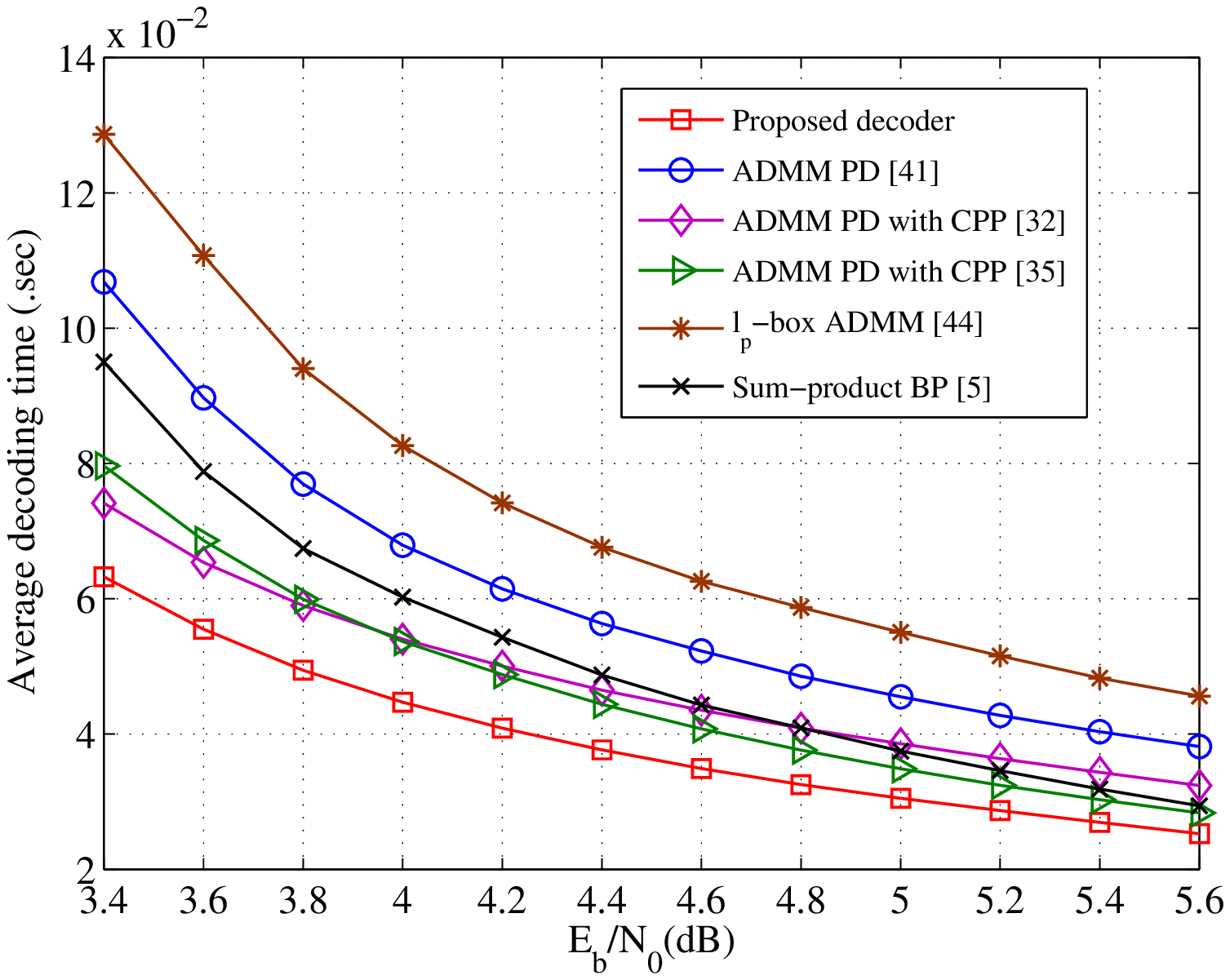}
            \label{time1152}
    \end{minipage}%
    }
    \subfigure[Convergence characteristic of $\mathcal{C}_{5}$.]{
    \begin{minipage}{9cm}
    \centering
        \includegraphics[width=3.5in,height=2.7in]{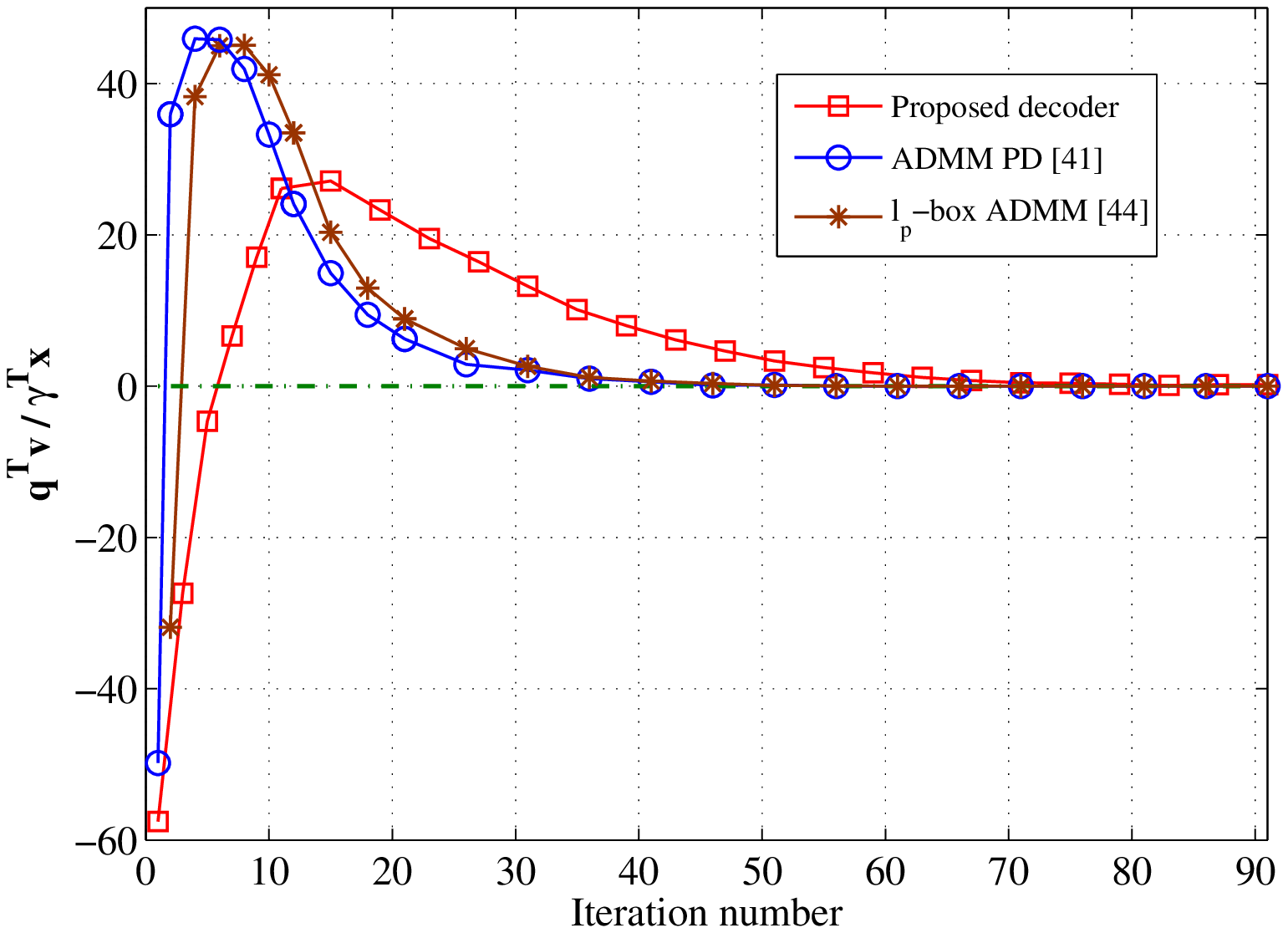}
            \label{conver1152}
    \end{minipage}%
    }
        \subfigure[Iteration number distribution of $\mathcal{C}_{5}$.]{
    \begin{minipage}{9cm}
    \centering
        \includegraphics[width=3.5in,height=2.7in]{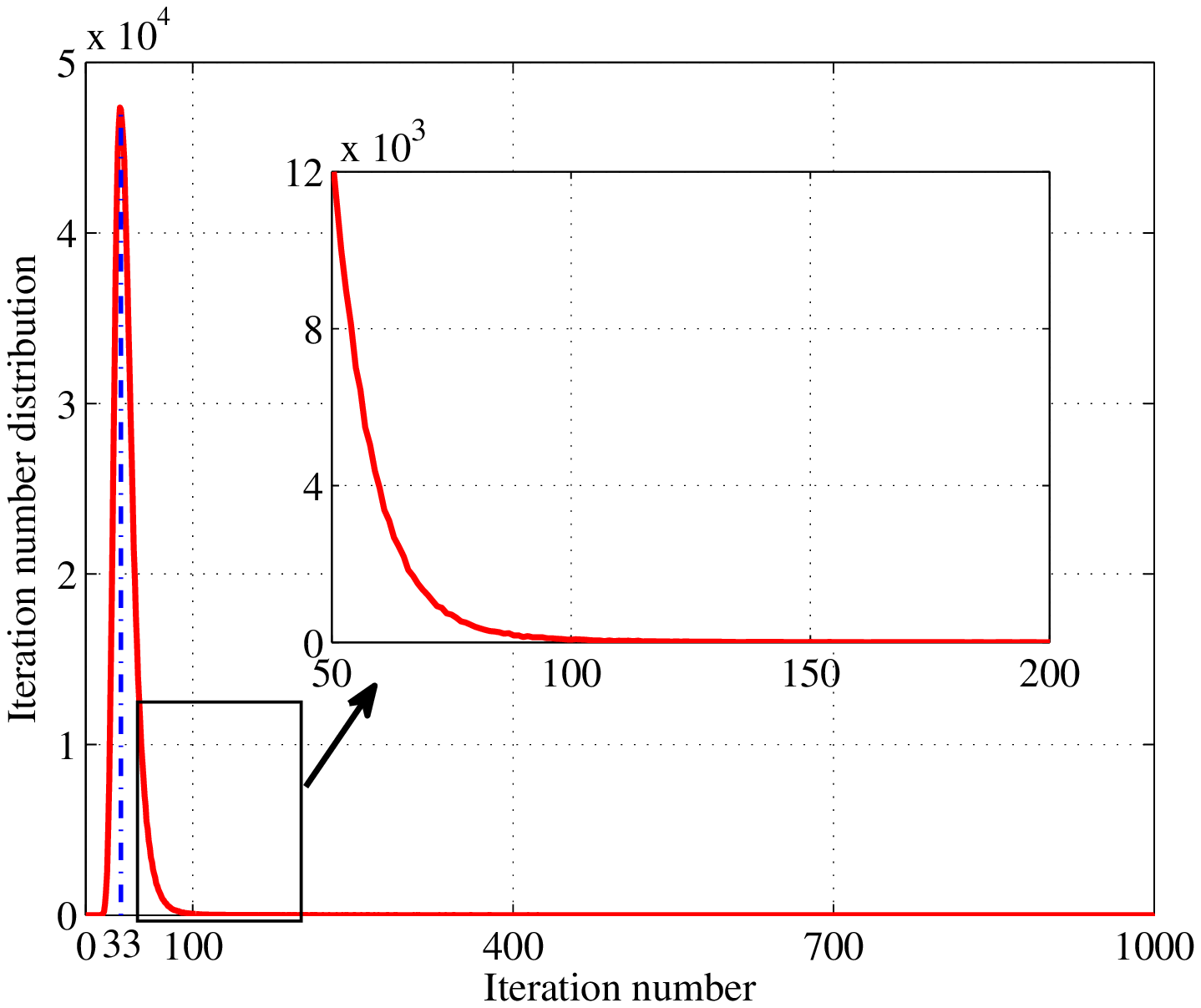}
            \label{ite1152}
    \end{minipage}%
    }\
 \centering
 \caption{ Average decoding time comparison, convergence characteristic comparison at $E_b/N_0$=3.8dB and iteration number distribution at $E_b/N_0$=3.8dB of $\mathcal{C}_{5}$, where $\mathcal{C}_{5}$ denotes the (1152,864) irregular LDPC code from IEEE 802.16e standard. All-zeros codeword is transmitted.}
 \label{time3}
 \end{figure}

  \begin{figure}[htbp]
  \subfigure[{Average decoding time of $\mathcal{C}_{6}$.}]{
    \begin{minipage}{9cm}
    \centering
        \includegraphics[width=3.5in,height=2.7in]{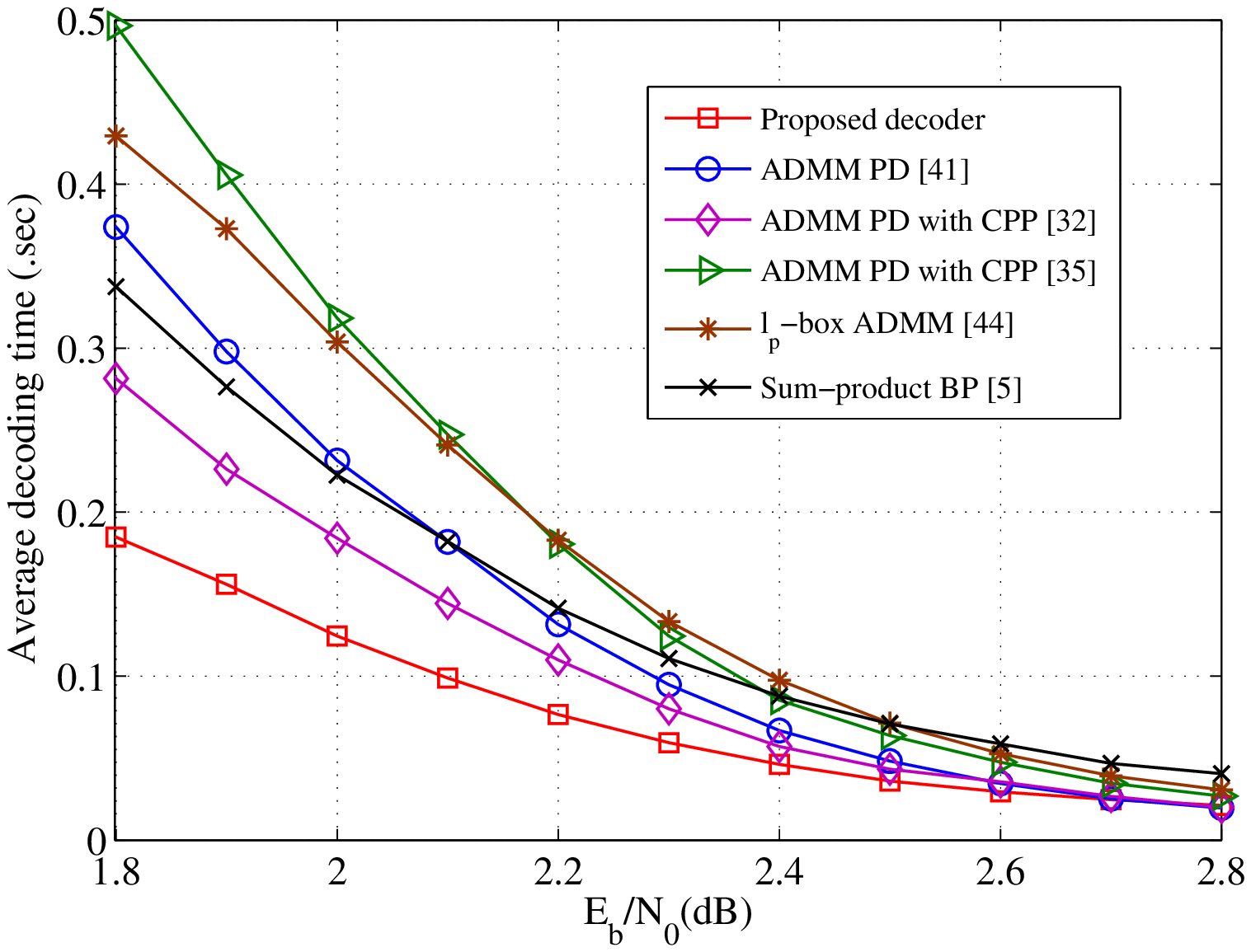}
            \label{time1296}
    \end{minipage}%
    }
    \subfigure[{Convergence characteristic of $\mathcal{C}_{6}$.}]{
    \begin{minipage}{9cm}
    \centering
        \includegraphics[width=3.5in,height=2.7in]{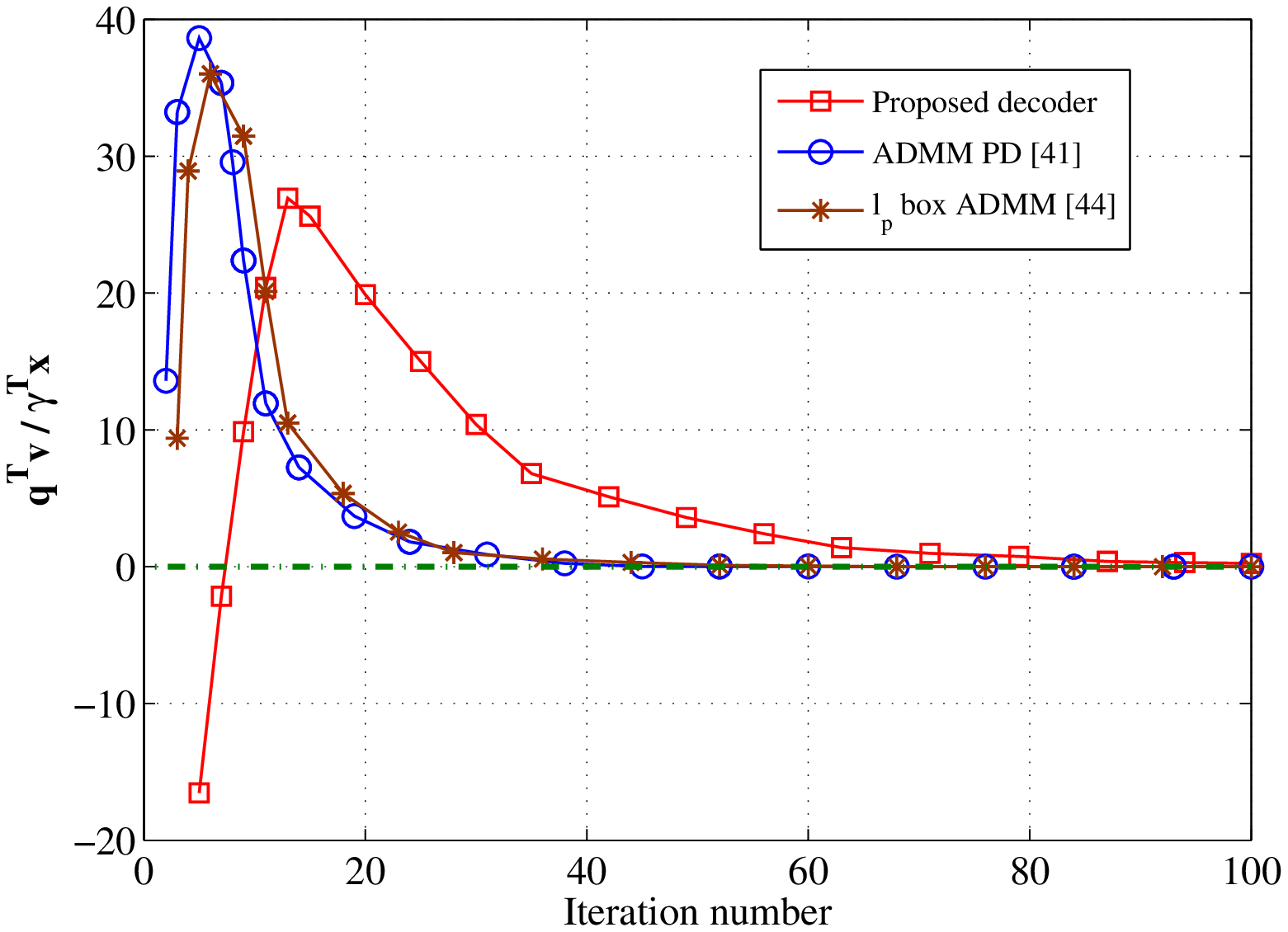}
            \label{conver1296}
    \end{minipage}%
    }
        \subfigure[{Iteration number distribution of $\mathcal{C}_{6}$.}]{
    \begin{minipage}{9cm}
    \centering
        \includegraphics[width=3.5in,height=2.7in]{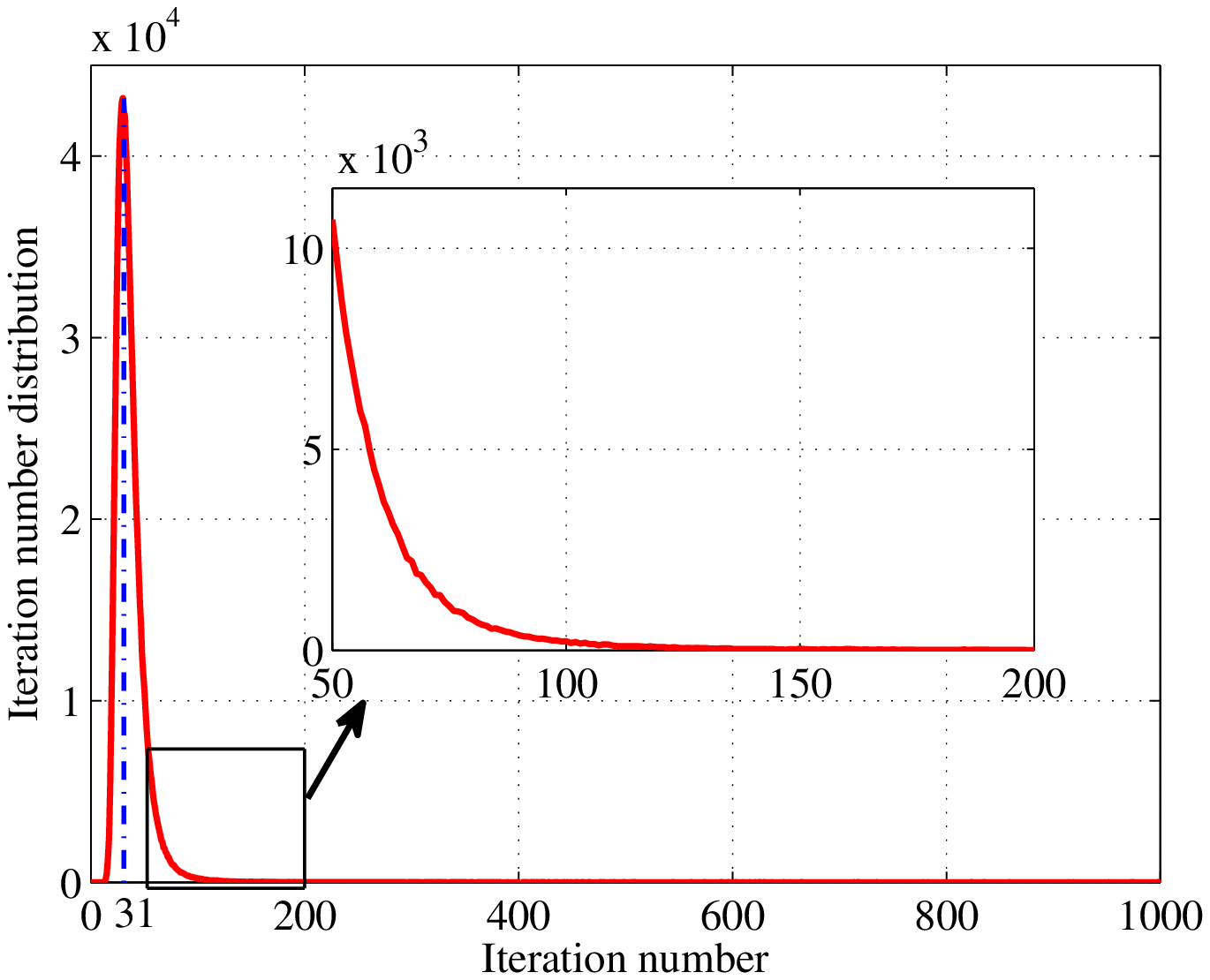}
            \label{ite1296}
    \end{minipage}%
    }\
 \centering
 \caption{{Average decoding time comparison, convergence characteristic comparison at $E_b/N_0$=3.2dB and iteration number distribution at $E_b/N_0$=3.2dB of $\mathcal{C}_{6}$, where $\mathcal{C}_{6}$ denotes the (648,432) regular LDPC code from IEEE 802.11n standard. All-zeros codeword is transmitted.}}
 \label{time3}
 \end{figure}

 Fig.\,\ref{fer_ber} shows frame error rate (FER) and bit-error-rate (BER) performance of the considered six LDPC codes. The points plotted in all FER/BER curves are based on generating at least 200 error frames, except for the last two points where 50 error frames are observed due to limited computational resources.
  In Fig.\,\ref{fer2640}, one can see that BER curves of all the decoders have a similar changing trend to their corresponding FER curves.
  Moreover, the proposed QP decoder displays comparable FER and BER performance to the sum-product BP decoder at low SNRs and continues to drop in a waterfall manner in relatively high SNR regions.
  However, the sum-product BP decoder suffers from {\it error floor effects} in these high SNRs, which can also be observed in \cite{channel-code}\cite{Barman-ADMM}\cite{penalty-decoder}.
  In addition, {our proposed QP decoder shows similar error-correction performance to the ADMM-based PD method \cite{penalty-decoder} and the $\ell_{p}$-box ADMM-based decoder \cite{lp-box-decoder}}, and outperform the LP decoders \cite{Barman-ADMM}{\cite{our-admm-lp}}.
  The FER and BER performances, of codes $\mathcal{C}_{2}$ and $\mathcal{C}_{3}$ are shown in Fig.\ref{fer13298} and Fig.\ref{fer999} respectively.
  {In Fig.\,\ref{fer13298}, we observe that, for the long block-length code $\mathcal{C}_{2}$, our proposed QP decoder and the ADMM-based decoders \cite{penalty-decoder} \cite{lp-box-decoder} achieve similar error-correction performance to the sum-product BP decoder, and display much better performance than LP decoding in terms of either FER or BER.
  Moreover, observing Fig.\,\ref{fer13298}, one can also find that FER and BER curves of our proposed decoder are quite similar to those of the ADMM-based penalized decoder in \cite{penalty-decoder} and the $\ell_{p}$-box ADMM-based decoder \cite{lp-box-decoder} at low SNRs.}
  {In Fig. \ref{fer999}, it can be seen that, for high-rate code $\mathcal{C}_{3}$, our proposed QP decoder and the ADMM-based decoders \cite{penalty-decoder} \cite{lp-box-decoder} achieves slightly better FER and BER performance than the sum-product BP decoder and the LP decoders \cite{Barman-ADMM} {\cite{our-admm-lp}} at low SNRs. Moreover, all of the ADMM-based MP decoders are superior to the sum-product BP decoder in terms of FER and BER in high SNR regions.}
  {Fig.\,\ref{fer576} shows that our proposed decoder attains similar FER performance to the ADMM-based decoders \cite{penalty-decoder} \cite{lp-box-decoder}}, and achieves better error-correction performance than the sum-product BP decoding and LP decoding. Moreover, our proposed decoder outperforms ADMM-based decoders \cite{penalty-decoder} \cite{lp-box-decoder} and the classical BP decoder in terms of BER.
  Observing Fig.\,\ref{fer1152}, one can find that FER and BER performances of our proposed decoder, ADMM-based penalized decoder \cite{penalty-decoder} and $\ell_{p}$-box ADMM-based decoder \cite{lp-box-decoder} are comparable to the sum-product BP decoder in low SNR regions and outperform the classical BP decoder in high SNR regions.
  {Fig.\ref{fer1296} shows that, for the wifi code $\mathcal{C}_{6}$, our proposed QP decoder displays similar FER/BER performance as the ADMM-based penalized decoder \cite{penalty-decoder} and the $\ell_{p}$-box ADMM-based decoder \cite{lp-box-decoder}, and provides much better error-correction performance compared to the LP decoders \cite{Barman-ADMM} \cite{our-admm-lp}. What's more, our proposed QP-ADMM decoder performs superior to the sum-product BP decoder in terms of FER and BER in high SNR regions.}

  Fig.\,\ref{time2640}-\ref{time1296} compare average decoding time between our proposed decoder and other competing ADMM-based MP decoders at different SNRs for the six codes $\mathcal{C}_{1}$-$\mathcal{C}_{5}$.
  Moreover, in each figure, we also present the average decoding time of the conventional sum-product BP decoder, {which is dominated by many expensive hyperbolic tangent operations}.
  The decoding time in each curve of Fig.\,\ref{time2640}-\ref{time1296} is averaged over one million LDPC frames.
  {From these figures, one can find that all of the compared decoders in low SNR regions spend much more decoding time than those in high SNR regions. This is because the failure of decoding usually happens at low SNRs, which usually takes the maximum number of iterations.}
  Furthermore, we can also find that our proposed ADMM-based QP decoder costs the least amount of decoding time among the ADMM-based decoders and {displays comparable decoding complexity to the conventional sum-product BP decoder for all of the involved LDPC codes $\mathcal{C}_{1}$-$\mathcal{C}_{6}$}.

  {Fig.\,\ref{conver2640}-\ref{conver1296} compare the convergence performance among our proposed ADMM-based QP decoding algorithm, ADMM-based PD algorithm \cite{penalty-decoder} and $\ell_{p}$-box ADMM-based decoder \cite{lp-box-decoder} when codes $\mathcal{C}_{1}$- $\mathcal{C}_{6}$ are all considered.
  Each curve in Fig.\,\ref{conver2640}-\ref{conver1296} is averaged over one hundred LDPC frames. Since our proposed QP decoder and the ADMM PD method in \cite{penalty-decoder} both satisfy the property of the \emph{all-zeros assumption}, we use the all-zeros codeword as the transmitted code.
  From \eqref{QP a}, one can find that, if the above three decoders converge to a correct codeword, their linear terms should be ``0''.
  Therefore, in order to present the convergence curve clearly, we choose the linear terms as the vertical axis.}  {Here, it should be noted that, although we allow up to 1000 iterations in the simulations, most of the decoding procedures do not reach the pre-set maximum number of iterations. In fact, the majority of the iteration numbers is less than 100.
  To illustrate this fact clearly, we present distribution curves of iteration numbers for all of the considered LDPC codes in Fig.\,\ref{ite2640}-\ref{ite1296} by running $10^6$ frames. Observing these figures, one can see that the numbers of iterations required to converge for most decoding procedures are less than 100 and only a very small ones take more than 100 iterations.}

     \begin{figure}[tp]
   \subfigure[FER performance vs. $\mu$.]{
    \begin{minipage}{9cm}
    \centering
        \includegraphics[width=3.5in,height=2.7in]{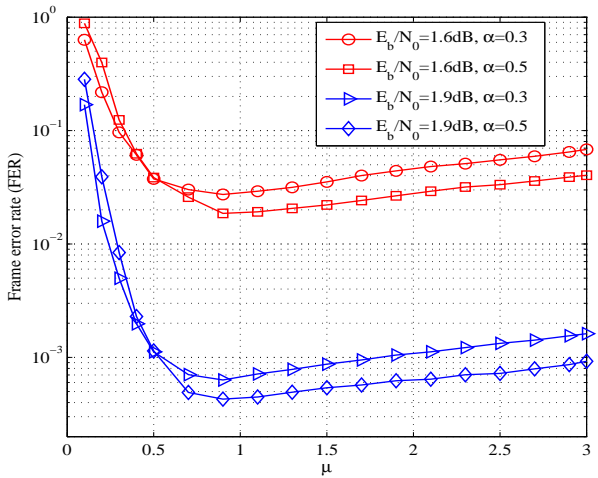}
            \label{miu_fer}
    \end{minipage}%
    }
    ~~~~~~~~~~~~~
    \subfigure[Iteration number vs. $\mu$.]{
    \begin{minipage}{9cm}
    \centering
        \includegraphics[width=3.5in,height=2.7in]{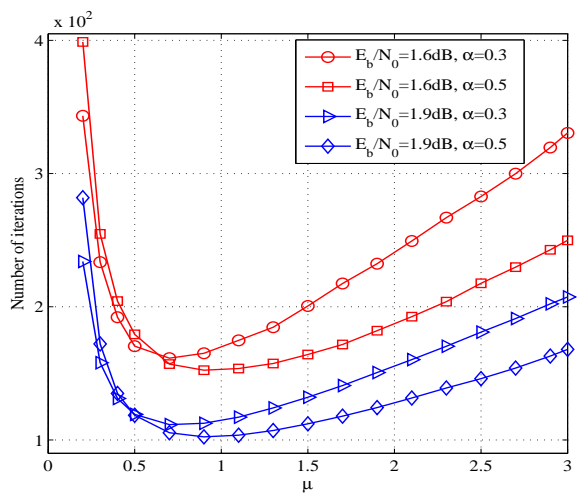}
            \label{miu_time}
    \end{minipage}%
    }
 \centering
 \caption{FER performance and average iteration number of the (2640,1320) ``Margulis'' code $\mathcal{C}_{1}$ plotted as a function of $\mu$.}
 \label{miu_fer_time}
\end{figure}
\begin{figure}[tp]
   \subfigure[FER performance vs. $\alpha$.]{
    \begin{minipage}{9cm}
    \centering
        \includegraphics[width=3.5in,height=2.7in]{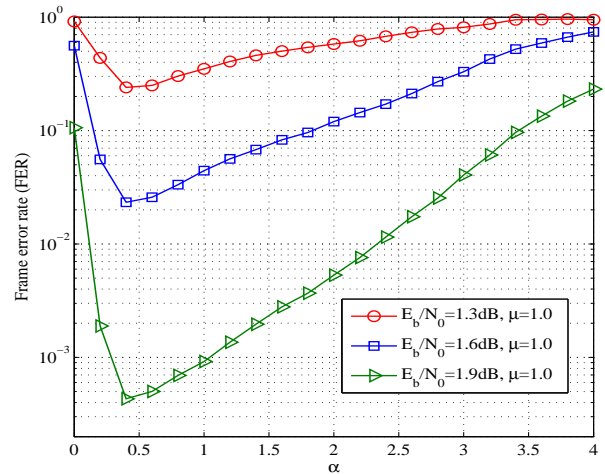}
            \label{alpha_fer}
    \end{minipage}%
    }
    ~~~~~~~~~~~~~
    \subfigure[Iteration number vs. $\alpha$.]{
    \begin{minipage}{9cm}
    \centering
        \includegraphics[width=3.5in,height=2.7in]{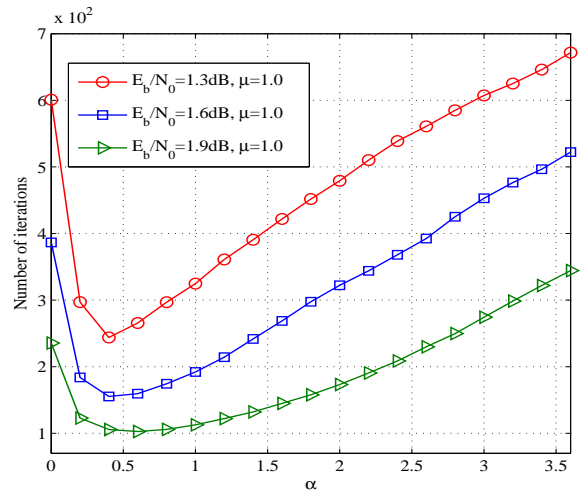}
            \label{alpha_time}
    \end{minipage}%
    }
 \centering
 \caption{FER performance and average iteration number of the (2640,1320) ``Margulis'' code $\mathcal{C}_{1}$ plotted as a function of $\alpha$.}
 \label{alpha_fer_time}
\end{figure}

\subsection{Choosing proper parameters $\mu$ and $\alpha$}\label{parameter-choices-alg1-simulation-result}

   Parameters $\mu$ and $\alpha$ can significantly affect decoding performance.
   In this subsection, we show several simulations on the effects of parameters $\mu$ and $\alpha$. From them, we can choose proper $\mu$ and $\alpha$ to obtain good decoding performance in terms of the error-correction performance and decoding efficiency.

   In Fig.\,\ref{miu_fer_time}, $\mathcal{C}_{1}$'s FER performance and the average number of iterations as a function of the penalty parameter $\mu$ with different SNRs and $\alpha$ are presented.
   We collect 200 error LDPC frames for every data point in the figures. From them, one can observe that, first, all of the curves of the FER performance in Fig.\,\ref{miu_fer} display a similar changing trend as $\mu$ increases with different $\alpha$ at both $E_{b}/N_{0}$ = 1.6\,dB and $E_{b}/N_{0}$ = 1.9\,dB, as do the curves of the average number of iterations in Fig.\,\ref{miu_time}. Second, in comparison with the FER performance, iteration number is sensitive to parameter $\mu$. Moreover, one can see that $\mu \in [0.7,1.1]$ can lead to better FER performance and fewer iterations.

   In Fig.\,\ref{alpha_fer_time}, we study the effects of the penalty parameter $\alpha$ on FER performance and iteration number. Similarly, 200 error LDPC frames are collected for each data point in Fig.\,\ref{alpha_fer_time}. From it, one can have the following observations. First, all of the curves show a similar tendency that FER performance and the average number of iterations change over the given range of parameter $\alpha$. Second, parameter $\alpha$ has a significant effect on both FER performance and iteration number. In addition, FER and the average number of iterations perform well when $\alpha \in [0.3, 1]$.

  \section{Conclusion} \label{Conclusion}
  In this paper, we propose an efficient QP-ADMM decoder for binary LDPC codes.
  Compared with the existing ADMM-based MP decoders, our proposed decoder eliminates the check polytope projection and each updated step can be implemented in parallel.
  We prove that the proposed decoding algorithm satisfies the favorable \emph{all-zeros assumption} property.
  Moreover, by exploiting the inside structure of the formulated QP decoding model, we show that the decoding complexity of the proposed ADMM algorithm in each iteration is linear in terms of LDPC code length.
  Simulation results demonstrate its effectiveness of the proposed LDPC decoder. Speeding up the convergence rate, {improving decoding performance by dynamically adjusting parameters $\alpha$ and $\mu$, addressing theoretical-guaranteed convergence issue, and extending to nonbinary LDPC code are interesting research directions in future.}

\appendices

\section{Proof of Theorem \ref{conver-theorem}}\label{convergence-proof}
In this appendix, we show that \emph{Theorem \ref{conver-theorem}} is valid.
\begin{proof}
   Under the assumption that \emph{Algorithm \ref{ADMM-penalized-alg}} converges, we have
   \begin{equation}\label{assum}
   \underset{k \rightarrow \infty}\lim \mathbf{v}^{k}= \mathbf{v}^{*}, ~~\underset{k \rightarrow \infty}\lim \mathbf{z}^{k} = \mathbf{z}^{*},~~ \underset{k \rightarrow \infty}\lim \mathbf{y}^{k} = \mathbf{y}^{*}.
   \end{equation}
  Moreover, observing \eqref{lamda-update-ori-penalized}, we obtain that, when $k \rightarrow \infty$, the following equality holds
    \begin{equation}\label{v-w-equation}
       \mathbf{Av}^{*}+\mathbf{z}^{*}-\mathbf{b}=\textbf{0}.
   \end{equation}
   Since all of the elements in $\mathbf{v}$ and $\mathbf{z}$ are implemented with the projection operations onto the interval $[0,1]$ and the nonnegative quadrant $[0,\infty)$ in \emph{Algorithm \ref{ADMM-penalized-alg}} respectively, this implies that variables $v_{i}^{*}$ and $z_{j}^{*}$ must satisfy $v_{i}^{*} \in [0,1]$ and $z_{j}^{*} \in [0,\infty)$ respectively. Thus, we obtain
    \begin{equation}\label{v-w-inequ-constraint}
        \mathbf{v}^{*} \in [0,1]^{n+\Gamma_{a}}, ~ \textrm{and} ~ \mathbf{z}^{*} \in [0,\infty)^{4\Gamma_{c}}.
   \end{equation}
  Combining \eqref{v-w-equation} with \eqref{v-w-inequ-constraint}, we conclude that $(\mathbf{v}^{*},\mathbf{z}^{*},\mathbf{y}^{*})$ lies in the feasible region of model \eqref{QP 2}.

  In the following, we verify that $\mathbf{v}^{*}$ is a stationary point of the original non-convex QP problem \eqref{QP}.
  Letting $p(\mathbf{v})=-\frac{\alpha}{2}\|\mathbf{v}-0.5\|_{2}^{2}$ and $f(\mathbf{v})=\mathbf{q}^T\mathbf{v}-\frac{\alpha}{2}\|\mathbf{v}-0.5\|_2^2$, we obtain for any $\mathbf{v} \in \mathcal{X}$
{\setlength\abovedisplayskip{2pt}
 \setlength\belowdisplayskip{2pt}
  \setlength\jot{2pt}
     \begin{equation}\label{stationary-point-inequation}
      \begin{split}
       &(\mathbf{v}-\mathbf{v}^{*})^{T}\nabla_{\mathbf{v}}f(\mathbf{v}) \\
       =&(\mathbf{v}-\mathbf{v}^{*})^{T}\Big(\mathbf{q}+\nabla_{\mathbf{v}}p(\mathbf{v}^{*})\Big)\\
       =&(\mathbf{v}\!-\!\mathbf{v}^{*})^{T}\Big(\mathbf{q}\!+\!\nabla_{\mathbf{v}}p(\mathbf{v}^{*})\!+\!\mathbf{A}^{T}\mathbf{y}^{*}\Big)
               \!\!-\!\!(\mathbf{y}^{*})^{T}\mathbf{A}(\mathbf{v}\!-\!\mathbf{v}^{*}).    \end{split}
     \end{equation}
  It} is obvious that if we can show $(\mathbf{v}-\mathbf{v}^{*})^{T}\big(\mathbf{q}+\nabla_{\mathbf{v}}p(\mathbf{v}^{*})+\mathbf{A}^{T}\mathbf{y}^{*}\big)\geq0$ and $(\mathbf{y}^{*})^{T}\mathbf{A}(\mathbf{v}-\mathbf{v}^{*})\leq0$, then $(\mathbf{v}-\mathbf{v}^{*})^{T}\nabla_{\mathbf{v}}f(\mathbf{v})\geq0$.

  For the first term, we have the following proofs.
 By \eqref{vi-update-2fanshu}, the $v^{*}_{i}$-update can be expressed as $v_{i}^{*}=\underset{[0,1]}\Pi\overline{v}_{i}^{*}$ where
{\setlength\abovedisplayskip{2pt}
 \setlength\belowdisplayskip{2pt}
  \setlength\jot{2pt}
     \begin{equation}\label{x-hat-expression}
      \overline{v}_{i}^{*}=\frac{1}{e_{i}-\frac{\alpha}{\mu}}\bigg(\hat{\mathbf{a}}_i^T\Big(\mathbf{b}- \mathbf{z}^{*}-\frac{\mathbf{y}^{*}}{\mu}\Big)-\frac{q_{i}+0.5\alpha}{\mu}\bigg),
    \end{equation}
  which} can be further transformed to
{\setlength\abovedisplayskip{6pt}
 \setlength\belowdisplayskip{6pt}
  \setlength\jot{5pt}
    \begin{equation}\label{x-hat-expression2}
      \overline{v}_{i}^{*}=\frac{1}{e_{i}}\bigg(\hat{\mathbf{a}}_i^T\Big(\mathbf{b}- \mathbf{z}^*-\frac{\mathbf{y}^*}{\mu}\Big)-\frac{q_{i}+\nabla_{\mathbf{v}}p(\overline{v}_{i}^{*})}{\mu}\bigg),
    \end{equation}
  where} $\nabla_{\mathbf{v}}p(\overline{v}_{i}^{*})=-\alpha(\overline{v}_{i}^{*}-0.5)$.

  Writing all of the variables $\overline{v}_{i}^{*}$, $i = 1,\cdots,n+\Gamma_{a}$, in a vector manner, we get
  {\setlength\abovedisplayskip{2pt}
 \setlength\belowdisplayskip{2pt}
  \setlength\jot{2pt}
  \begin{equation}\label{x-hat-expression3}
     \begin{split}
     \hspace{-0.3cm} \overline{\mathbf{v}}^{*}&\!\!=\!\!\!\left[\!\!\!\!\begin{array}{ccc}
                                    e_{1} &  &  \\
                                    & \ddots &  \\
                                    &  & e_{n+\Gamma_{a}} \\
                                     \end{array}
                                     \!\!\!\right]
         ^{-1}\!\!\!\!\!\!\!\!\bigg(\!\!\mathbf{A}^T\!\Big(\mathbf{b}\!-\! \mathbf{z}^*\!-\!\frac{\mathbf{y}^*}{\mu}\!\Big)\!\! -\!\!\frac{\mathbf{q}\!+\!\nabla_{\mathbf{v}}p(\overline{\mathbf{v}}^{*})}{\mu}\!\!\bigg)\\
        &\!=\! (\mathbf{A}^{T}\mathbf{A})^{-1}\bigg(\mathbf{A}^T\Big(\mathbf{b}- \mathbf{z}^{*}-\frac{\mathbf{y}^{*}}{\mu}\Big)-\frac{\mathbf{q}+\nabla_{\mathbf{v}}p(\overline{\mathbf{v}}^{*})}{\mu}\bigg),
     \end{split}
  \end{equation}
  where} $\nabla_{\mathbf{v}}p(\overline{\mathbf{v}}^{*})=-\alpha(\overline{\mathbf{v}}^{*}-0.5)$.
  Then, by \eqref{x-hat-expression3}, we further obtain
{\setlength\abovedisplayskip{2pt}
 \setlength\belowdisplayskip{2pt}
  \setlength\jot{2pt}
    \begin{align}\label{transfer1}
      \mathbf{q}+\nabla_{\mathbf{v}}p(\overline{\mathbf{v}}^{*})+\mathbf{A}^{T}\mathbf{y}^{*}+\mu \mathbf{A}^{T}(\mathbf{A}\overline{\mathbf{v}}^{*}+\mathbf{z}^{*}-\mathbf{b})=\textbf{0},
   \end{align}
  which} together with \eqref{v-w-equation} leads to the following equality
{\setlength\abovedisplayskip{6pt}
 \setlength\belowdisplayskip{6pt}
  \setlength\jot{3pt}
    \begin{equation}\label{transfer2}
     \begin{split}
        & \mathbf{q}+\nabla_{\mathbf{v}}p(\mathbf{v}^{*})+\mathbf{A}^{T}\mathbf{y}^{*} \\
         = &  \mu \mathbf{A}^{T}\mathbf{A}\big(\mathbf{v}^{*}-\overline{\mathbf{v}}^{*}\big)+\nabla_{\mathbf{v}}p(\mathbf{v}^{*})-\nabla_{\mathbf{v}}p(\overline{\mathbf{v}}^{*}),
     \end{split}
    \end{equation}
   where} $\mathbf{v}^{*}=\underset{[0,1]^{^{n+\Gamma_{a}}}}\Pi(\overline{\mathbf{v}}^{*})$.

  As a result, by \eqref{transfer2}, the first term of \eqref{stationary-point-inequation} can be calculated as for any $\mathbf{v} \in \mathcal{X}$
{\setlength\abovedisplayskip{6pt}
 \setlength\belowdisplayskip{6pt}
  \setlength\jot{4pt}
    \begin{equation}\label{first term}
     \begin{split}
        & (\mathbf{v}-\mathbf{v}^{*})^{T}\big(\mathbf{q}+\nabla_{\mathbf{v}}p(\mathbf{v})+\mathbf{A}^{T}\mathbf{y}^{*}\big)         \\
          \overset{a}  = & (\mathbf{v}-\mathbf{v}^{*})^{T}(\mu \mathbf{A}^{T}\mathbf{A}-\alpha \mathbf{I})(\mathbf{v}^{*}-\overline{\mathbf{v}}^{*}) \\
          \overset{b}  =  & \sum_{i=1}^{n+\Gamma_{a}}(\mu e_{i}-\alpha)(v_{i}-v_{i}^{*})(v_{i}^{*}-\overline{v}_{i}^{*})   \\
          \overset{c}  = & \sum_{\overline{v}_{i}^{*} \geq 1} (\mu e_{i}\!-\!\alpha)(v_{i}\!-\!1)(1\!-\!\overline{v}_{i}^{*})\!-\!\sum_{\overline{v}_{i}^{*} \leq 0}(\mu e_{i}\!-\!\alpha)v_{i}\overline{v}_{i}^{*}  \\
          \overset{d}  \geq &  0.
     \end{split}
  \end{equation}
  In} \eqref{first term}, the equality ``$\overset{a}=$'' holds since $\nabla_{\mathbf{v}}p(\mathbf{v}^{*})-\nabla_{\mathbf{v}}p(\overline{\mathbf{v}}^{*})=-\alpha(\mathbf{v}^{*}
  -\overline{\mathbf{v}}^{*})$; the equality ``$\overset{b}=$'' follows from the column orthogonality of matrix $\mathbf{A}$; the equality ``$\overset{c}=$'' holds because we have $v_{i}^{*}-\overline{v}_{i}^{*} = 0$ when $\overline{v}_{i}^{*} \in [0,1]$; $v_{i}^{*}=0$ when $\overline{v}_{i}^{*}<0$; $v_{i}^{*}=1$ when $\overline{v}_{i}^{*}>1$; and the inequality ``$\overset{d}\geq$'' holds since $\mu e_{i}-\alpha\geq0$ and $v_i\in[0,1]$.

  Next, we verify $(\mathbf{y}^{*})^{T}\mathbf{A}(\mathbf{v}-\mathbf{v}^{*})\leq0$. First, since $z_{j}^{*}\geq0$, we have
{\setlength\abovedisplayskip{6pt}
 \setlength\belowdisplayskip{6pt}
  \setlength\jot{4pt}
    \begin{equation}\label{y-z-multi1}
    \begin{split}
          (\mathbf{y}^{*})^{T}\mathbf{z}^{*}
          &= \sum_{z_{j}^{*}>0}y_{j}^{*}z_{j}^{*} \\
          &=  \sum_{z_{j}^{*}>0}y_{j}^{*}\Big(b_{j}-\mathbf{a}_{j}^T\mathbf{v}^{*}-\frac{y_{j}^{*}}{\mu}\Big).
    \end{split}
   \end{equation}
  Second}, \eqref{v-w-equation} implies that $(\mathbf{y}^{*})^{T}(\mathbf{Av}^{*}+\mathbf{z}^{*}-\mathbf{b})=0$. Then, we further obtain
     \begin{equation}\label{y-z-multi2}
     \begin{split}
        (\mathbf{y}^{*})^{T}\mathbf{z}^{*}& =(\mathbf{y}^{*})^{T}(\mathbf{b}-\mathbf{Av}^{*}) \\
        & = \sum_{z_{j}^{*}>0}y_{j}^{*}(b_{j}-\mathbf{a}_{j}^T\mathbf{v}^{*})+\sum_{z_{j}^{*}=0}y_{j}^{*}(b_{j}-\mathbf{a}_{j}^{T}\mathbf{v}^{*})  \\
        & = \sum_{z_{j}^{*}>0}y_{j}^{*}(b_{j}-\mathbf{a}_{j}^T\mathbf{v}^{*}).
     \end{split}
    \end{equation}
  In \eqref{y-z-multi2}, the second equality follows from $b_{j}-\mathbf{a}_{j}^{T}\mathbf{v}^{*}=0$ when $z_{j}^{*}=0$. Combining \eqref{y-z-multi1} with \eqref{y-z-multi2}, we get $\sum_{z_{j}^{*}>0}\Big \{\frac{(y_{j}^{*})^{2}}{\mu}\Big\} = 0$, which leads to
    \begin{equation}\label{y-property1}
    y_{j}^{*}=0, ~ \textrm{if} ~ z_{j}^{*}>0.
   \end{equation}
  Plugging \eqref{y-property1} into \eqref{y-z-multi2}, we obtain
     \begin{equation}\label{y-z-multi}
      (\mathbf{y}^{*})^{T}\mathbf{z}^{*}=0,
     \end{equation}
  which with \eqref{v-w-equation} together leads to the following equality
     \begin{equation}\label{y-Ax-b}
      (\mathbf{y}^{*})^{T}(\mathbf{A}\mathbf{v}^{*}-\mathbf{b})=0.
     \end{equation}

  Now consider the case where $z_{j}^{*}=0$. \eqref{zi update} implies that $b_{j}-\mathbf{a}_{j}^{T}\mathbf{v}^{*}-\frac{y_{j}^{*}}{\mu} < 0$. Hence, we get
    \begin{equation}\label{y-property2-1}
     y_{j}^{*}/\mu \geq b_{j}-\mathbf{a}_{j}^{T}\mathbf{v}^{*} \geq 0,
   \end{equation}
  where the second inequality follows from \eqref{v-w-equation} and \eqref{v-w-inequ-constraint}. Clearly, by \eqref{y-property2-1} we have
     \begin{equation}\label{y-property2}
      y_{j}^{*}\geq0, ~ \textrm{if} ~ z_{j}^{*}=0.
     \end{equation}
  Combining \eqref{y-property1} with \eqref{y-property2}, we obtain
     \begin{equation}\label{y-nonnegative}
      \mathbf{y}^{*}\succeq 0,
     \end{equation}
  which implies that for any $\mathbf{v} \in \mathcal{X}$
     \begin{equation}\label{second-term}
      (\mathbf{y}^{*})^{T}(\mathbf{A}\mathbf{v}-\mathbf{b})\leq 0,
     \end{equation}
  where $\mathbf{A}\mathbf{v}-\mathbf{b} \preceq \mathbf{0}$ holds because $\mathbf{v} \in \mathcal{X}$.
  Furthermore, combining \eqref{y-Ax-b} with \eqref{second-term}, we get
     \begin{equation}\label{second-term2}
      (\mathbf{y}^{*})^{T}\mathbf{A}(\mathbf{v}-\mathbf{v}^{*}) = (\mathbf{y}^{*})^{T}(\mathbf{A}\mathbf{v}-\mathbf{b})\leq 0.
     \end{equation}

  Finally, \eqref{second-term2} and \eqref{first term} together imply that the value of \eqref{stationary-point-inequation} is no less than 0, i.e.,
     \begin{equation}\label{x-derivation}
      (\mathbf{v}-\mathbf{v}^{*})^{T}\nabla_{\mathbf{v}}f(\mathbf{v}^{*}) \geq 0,
    \end{equation}
  where $\mathbf{v} \in \mathcal{X}$.
\end{proof}

\section{Proof of Theorem \ref{all zero assumption}}\label{all-zero-assumption-proof}

In this appendix, we present that \emph{Theorem \ref{all zero assumption}} is valid. Before showing its proof, we give one definition and one key lemma in advance.

\emph{Definition 1: } we say that ${\rm R}_{\mathbf{v}}(\pmb{\beta})$ is the {\it relative vector} of $\pmb{\beta}$ with respect to the binary vector $\mathbf{v}$ if it satisfies
{\setlength\abovedisplayskip{2pt}
   \setlength\belowdisplayskip{2pt}
    \setlength\jot{2pt}
     \begin{equation}\label{relative_vetor}
        ({\rm R}_{\mathbf{v}}(\pmb{\beta}))_{i} = \left\{
          \begin{split}
            &\beta_{i}, ~~~~~ & v_{i}=0, \\
            &1- \beta_{i}, ~~~ & v_{i}=1. \\
            \end{split}
           \right.
    \end{equation}
 Vector} $\mathbf{v}=[\mathbf{c}; \mathbf{u}]$, where $\mathbf{c}$ is the transmitted LDPC codeword and $\mathbf{u}$ is the corresponding binary auxiliary variable.

\begin{lemma} \label{admm-stop}
let $\hat{\mathbf{v}}$ and $\hat{\mathbf{v}}^{0}$ denote the output of \emph{Algorithm \ref{ADMM-penalized-alg}} when $\mathbf{r}$ and $\mathbf{r}^{0}$ are received respectively. Here $\mathbf{r}$ and $\mathbf{r}^{0}$ are received vectors when {codeword $\mathbf{v}$ and the all-zeros codeword $0^{n+\Gamma_{a}}$ are transmitted over the channel}. Then, there exists $\hat{\mathbf{v}}^{0} = \mathrm{R}_{\mathbf{v}}(\hat{\mathbf{v}})$.
\end{lemma}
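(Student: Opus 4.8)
The plan is to establish a term-by-term correspondence between the ADMM iterates generated from input $\mathbf{r}$ and those generated from input $\mathbf{r}^0$, and show that this correspondence is preserved by every update step in \emph{Algorithm \ref{ADMM-penalized-alg}}. By channel symmetry, the LLR vector $\pmb{\gamma}^0$ associated with $\mathbf{r}^0$ relates to $\pmb{\gamma}$ (associated with $\mathbf{r}$) in the codeword positions by $\gamma_i^0 = (1-2c_i)\gamma_i = \pm\gamma_i$ according to whether $c_i=0$ or $c_i=1$; in the auxiliary positions the LLR entries are all zero, so $\mathbf{q}$ and $\mathbf{q}^0$ agree there. The idea is then to propose, as an induction hypothesis, that at every iteration $k$ the ``all-zeros-input'' iterate equals the relative vector (in the appropriate sense) of the ``$\mathbf{v}$-input'' iterate: concretely, $\mathbf{v}^{0,k} = \mathrm{R}_{\mathbf{v}}(\mathbf{v}^k)$, together with matching sign-flip relations for $\mathbf{z}^{0,k}$ and $\mathbf{y}^{0,k}$ that are dictated by how the rows of $\mathbf{A}$ act on flipped coordinates.

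\textbf{Key steps, in order.} First, I would pin down exactly how flipping a subset of the entries of $\mathbf{v}$ (those indexed by the support of $\mathbf{c}$ and $\mathbf{u}$) transforms the constraint data: for each three-variable check row, replacing $v_i$ by $1-v_i$ in some coordinates sends $\mathbf{T}\mathbf{Q}_\tau \mathbf{v}$ to $\mathbf{T}\mathbf{Q}_\tau\mathbf{v}$ with certain rows negated and the right-hand side $\mathbf{w}$ correspondingly permuted — this is precisely the statement that the check polytope is invariant under the relevant coordinate flips, which underlies the classical all-zeros argument and here must be checked at the level of the matrices $\mathbf{T}$, $\mathbf{Q}_\tau$, $\mathbf{w}$. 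This gives a fixed diagonal sign matrix $\mathbf{S}$ (entries $\pm 1$) and a fixed permutation $\mathbf{P}$ on the $4\Gamma_c$ constraint coordinates such that the map $\mathbf{v}\mapsto \mathrm{R}_{\mathbf{v}}(\cdot)$, $\mathbf{z}\mapsto \mathbf{P}\mathbf{S}\mathbf{z}$-type relation, and $\mathbf{y}\mapsto$ analogous, is a symmetry of problem \eqref{QP 2}. Second, with $\mathbf{y}^0=\mathbf{z}^0=\mathbf{0}$ as initialization (which is fixed-point-compatible with the zero initialization under the symmetry), I would run the induction: plug the hypothesized relations into the closed-form updates \eqref{vi-update-2fanshu}, \eqref{zi update}, \eqref{y-update-simplify}, and verify that the $v_i$-update commutes with the flip $v_i\mapsto 1-v_i$ (using that $\underset{[0,1]}{\Pi}(1-t)=1-\underset{[0,1]}{\Pi}(t)$, the sign change of $q_i$, the $0.5$-shift in $\phi_i$, and the negated rows of $\hat{\mathbf{a}}_i$), that the $z_j$-update commutes with the permutation-plus-sign action, and likewise for the $y_j$-update. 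Third, taking $k\to\infty$ (or reading off the final output at termination, noting that the stopping criterion $\|\mathbf{A}\mathbf{v}^k+\mathbf{z}^k-\mathbf{b}\|_2^2<\epsilon$ is itself invariant under the symmetry) yields $\hat{\mathbf{v}}^0=\mathrm{R}_{\mathbf{v}}(\hat{\mathbf{v}})$.

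\textbf{Main obstacle.} The delicate part is the bookkeeping in step one and step two: correctly identifying which rows of $\mathbf{A}$ get negated and how the right-hand side $\mathbf{b}$ (equivalently $\mathbf{w}$) and the auxiliary slacks $\mathbf{z}$ must be re-indexed/sign-adjusted so that the flipped iterate is genuinely a valid iterate of the algorithm run on $\mathbf{r}^0$. In particular, one must handle the auxiliary variables $\mathbf{u}$ carefully — they do not appear in $\pmb{\gamma}$, but a flip in a codeword coordinate forces, through the chained three-variable checks, a consistent pattern of flips on the $\mathbf{u}$-coordinates, and one must check this pattern is exactly the $\mathbf{u}$-part of $\mathrm{R}_{\mathbf{v}}(\cdot)$ and is compatible with the $v_i$-update for $i>n$ (where $q_i=0$, so the sign-of-$q_i$ argument degenerates and one relies purely on the row-negation structure of $\mathbf{A}$ and the $0.5$-shift). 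Once the symmetry group action on the triple $(\mathbf{v},\mathbf{z},\mathbf{y})$ is correctly set up, each individual update-commutation check is a short computation, and the induction closes routinely.
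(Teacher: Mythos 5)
Your proposal is correct and follows essentially the same route as the paper: the paper proves this via \emph{Lemma \ref{admm-iter}} (an induction showing that the flip/permutation correspondence between the two runs is preserved by each of the updates \eqref{vi-update-2fanshu}, \eqref{zi update}, \eqref{y-update-simplify}, starting from the compatible all-zeros initialization) together with the observation \eqref{termination1} that the stopping criterion is invariant under this symmetry. The only small imprecision is your ``$\mathbf{P}\mathbf{S}$-type'' action on $\mathbf{z}$ and $\mathbf{y}$: since $\mathbf{z}\succeq\mathbf{0}$ must be preserved, the correct action is a pure within-block permutation of the four slack/multiplier coordinates of each three-variable check (the paper's operator $\mathcal{M}_{\mathbf{v}_{\tau}}$), with the sign changes appearing only in the inner products $\mathbf{t}_{\ell}^{T}\mathcal{M}_{\mathbf{v}_{\tau}}(\mathbf{z}_{\tau})=\pm\mathbf{t}_{\ell}^{T}\mathbf{z}_{\tau}$ as in \emph{Corollary 1}.
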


\begin{proof}
   {see Appendix \ref{proof-lemma1}}.
\end{proof}

Based on the above definition and lemma, we can show the proof for {\it Theorem \ref{all zero assumption}}.

\begin{proof}
 let $B(\mathbf{v})$ denote the set of received vectors $\mathbf{r}$ that would cause decoding failure when the transmitted codeword $\mathbf{v}$ is transmitted. Then, we need to verify
     \begin{equation}\label{err-equation}
      \sum_{\mathbf{r}\in B(\mathbf{v})}\rm{Pr}[\mathbf{r} \mid \mathbf{v}]=\sum_{\mathbf{r}^{0}\in B(0^{n+\Gamma_{a}})}\rm{Pr}[\mathbf{r}^{0} \mid 0^{n+\Gamma_{a}}].
     \end{equation}
It is obvious that \eqref{err-equation} holds if and only if the following two statements hold.
\begin{itemize}
  \item (a) $\rm{Pr}[\mathbf{r} \mid \mathbf{v}]=\rm{Pr}[\mathbf{r}^{0} \mid 0^{n+\Gamma_{a}}]$.
  \item (b) $\mathbf{r}\in B(\mathbf{v})$ if and only if $\mathbf{r}^{0}\in B(0^{n+\Gamma_{a}})$.
\end{itemize}

Statement (a) is easy to prove according to the one-by-one mapping of $\mathbf{r}$ and $\mathbf{r}^{0}$ when $\mathbf{v}$ and $0^{n+\Gamma_{a}}$ are sent over to the symmetrical channel.
The detailed derivations can be found in \emph{Theorem 6} in \cite{FeldmanLP}.

As for statement (b), by \emph{Lemma \ref{admm-stop}}, we have ${\hat{\mathbf{v}}}^{0} = \mathrm{R}_{\mathbf{v}}(\hat{\mathbf{v}})$, which means that ${{\mathbf{r}}}^0$  is decoded correctly only when ${\mathbf{r}}$ is decoded correctly. Hence, we can conclude that  $\mathbf{r} \in B(\mathbf{v})$ if and only if $\mathbf{r}^{0} \in B(0^{n+\Gamma_{a}})$. Thus, the second statement (b) holds.
\end{proof}

\section{Proof of Lemma 1}\label{proof-lemma1}

Before proving {\it Lemma 1}, we present {\it Definition 2} and {\it Lemma 2}.

In \eqref{QP 2}, we introduce auxiliary variable $\mathbf{z}$ to change  inequality constraints \eqref{QP b} to equality constraints \eqref{QP 2 b}. Observing \eqref{degree-3 linear inequalities}, we can see that four auxiliary variables are required for transforming inequality constraints in \eqref{degree-3 linear inequalities} to equalities. Based on these observations, we define the following \emph{mapping operators} $\mathcal{M}_{\mathbf{v}_{\tau}}(\cdot)$ and $\mathcal{T}_{\mathbf{v}}(\cdot)$ for the introduced auxiliary variables. Both of them can also be extended to the Lagrangian multipliers, which are related to the corresponding equality constraints.

\emph{Definition 2:} let $\mathbf{z}_{\tau}=[z_{\tau_{1}}\ z_{\tau_{2}}\ z_{\tau_{3}}\ z_{\tau_{4}}]^{T}$ be the auxiliary variables which change the inequality constraints corresponding to the $\tau th$ three-variables {parity-check} equation to the equality constraints and denote $\mathbf{v_{\tau}}=\mathbf{Q_{\tau}\mathbf{v}}$ (see eq.(10)), $\tau=1,\dotsb,\Gamma_c$, be the corresponding binary variables.
Then, we define the following \emph{mapping operator} for $\mathbf{z}_{\tau}$
       \begin{equation}\label{relative_vetor}
        \mathcal{M}_{\mathbf{v}_{\tau}}(\mathbf{z}_{\tau}) = \left\{
          \begin{split}
            &[z_{\tau_{2}}\ z_{\tau_{1}}\ z_{\tau_{4}}\ z_{\tau_{3}}]^{T}, ~~~ & \mathbf{v}_{\tau}=[1\ 1\ 0]^{T}, \\
            &[z_{\tau_{3}}\ z_{\tau_{4}}\ z_{\tau_{1}}\ z_{\tau_{2}}]^{T}, ~~~ & \mathbf{v}_{\tau}=[1\ 0\ 1]^{T}, \\
            &[z_{\tau_{4}}\ z_{\tau_{3}}\ z_{\tau_{2}}\ z_{\tau_{1}}]^{T}, ~~~ & \mathbf{v}_{\tau}=[0\ 1\ 1]^{T}, \\
            &\mathbf{z}_{\tau},                                   \hspace{1.2cm} & \mathbf{v}_{\tau}=[0\ 0\ 0]^{T}.
          \end{split}
          \right.
       \end{equation}
Moreover, we also define
     \begin{equation}
      \begin{split}
        &\mathcal{T}_{\mathbf{v}}(\mathbf{z})\!=\![\mathcal{M}_{\mathbf{v}_{1}}(\mathbf{z}_{1});\cdots;\mathcal{M}_{\mathbf{v}_{\tau}}(\mathbf{z}_{\tau});\cdots;\mathcal{M}_{\mathbf{v}_{\Gamma_{c}}}(\mathbf{z}_{\Gamma_{c}})]^T.
      \end{split}
     \end{equation}

\begin{lemma} \label{admm-iter}
 let tuple $\{\mathbf{v}^{k}, \mathbf{z}^{k}, \mathbf{y}^{k}\}$ and ${\{{\mathbf{v}^{0}}^k, {\mathbf{z}^{0}}^k, {\mathbf{y}^{0}}^k}\}$ be the updated variables in the $kth$ iteration of \emph{Algorithm \ref{ADMM-penalized-alg}} when codeword $\mathbf{c}$ and the all-zeros codeword are transmitted over to the channel respectively.
 If ${\mathbf{v}^{0}}^k = {\rm R}_{\mathbf{v}}(\mathbf{v}^{k})$, ${\mathbf{z}^{0}}^k = \mathcal{T}_{\mathbf{v}}(\mathbf{z}^{k})$, and ${\mathbf{y}^{0}}^k = \mathcal{T}_{\mathbf{v}}(\mathbf{y}^{k})$, there exist
 \[
   \begin{split}
   &{\mathbf{v}^{0}}^{k+1} = \mathrm{R}_{\mathbf{v}}(\mathbf{v}^{k+1}),\  {\mathbf{z}^{0}}^{k+1} = \mathcal{T}_{\mathbf{v}}(\mathbf{z}^{k+1}), \ {\mathbf{y}^{0}}^{k+1} = \mathcal{T}_{\mathbf{v}}(\mathbf{y}^{k+1}).
   \end{split}
 \]
\end{lemma}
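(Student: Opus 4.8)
The plan is to prove \emph{Lemma \ref{admm-iter}} by verifying that one sweep of steps S.1--S.3 of \emph{Algorithm \ref{ADMM-penalized-alg}} commutes with the relabelling operators $\mathrm{R}_{\mathbf v}(\cdot)$ and $\mathcal{T}_{\mathbf v}(\cdot)$; this is precisely the inductive step that, together with the trivial base case (at $k=0$ only $\mathbf z^{0}=\mathbf y^{0}=\mathbf 0$ are defined and $\mathcal{T}_{\mathbf v}(\mathbf 0)=\mathbf 0$), yields \emph{Lemma \ref{admm-stop}}. Throughout, write $\mathbf v=[\mathbf c;\mathbf u]$, let $\mathbf D=\textrm{diag}(1-2v_{1},\dots,1-2v_{n+\Gamma_{a}})$ so that $\mathrm{R}_{\mathbf v}(\mathbf x)=\mathbf D\mathbf x+\mathbf v$ (using $v_{i}\in\{0,1\}$), and let $\mathbf P$ be the block-diagonal matrix whose $\tau$-th $4\times4$ block is the permutation matrix $\mathbf P_{\mathbf v_{\tau}}$ realizing $\mathcal{M}_{\mathbf v_{\tau}}$ of \emph{Definition 2}, so that $\mathcal{T}_{\mathbf v}(\cdot)=\mathbf P(\cdot)$; note each $\mathbf P_{\mathbf v_{\tau}}$ is a product of disjoint transpositions, hence symmetric and involutive, and $\mathbf D$ is likewise a symmetric involution. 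Denote by $\mathbf v_{0}^{k},\mathbf z_{0}^{k},\mathbf y_{0}^{k}$ the $k$-th iterates of the all-zeros run.

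I would first collect three algebraic facts. (i) \emph{Channel symmetry:} for an output-symmetric channel the log-likelihood vector of the all-zeros run equals $\mathbf D\mathbf q$; this is the standard fact already used to establish statement (a) in the proof of \emph{Theorem \ref{all zero assumption}} (cf.\ \cite{FeldmanLP}), and the auxiliary coordinates cause no trouble because $\mathbf q$ vanishes there. (ii) \emph{Per-check identity:} for each admissible parity pattern $\mathbf p\in\{[0\ 0\ 0]^{T},[1\ 1\ 0]^{T},[1\ 0\ 1]^{T},[0\ 1\ 1]^{T}\}$,
\[
  \mathbf T\,\mathrm{R}_{\mathbf p}(\mathbf x)-\mathbf w=\mathbf P_{\mathbf p}\bigl(\mathbf T\mathbf x-\mathbf w\bigr),\qquad \mathbf x\in\mathbb R^{3},
\]
with $\mathbf T,\mathbf w$ as in \eqref{t F matrix}; this reduces to the two equalities $\mathbf T\,\textrm{diag}(1-2p_{1},1-2p_{2},1-2p_{3})=\mathbf P_{\mathbf p}\mathbf T$ and $\mathbf T\mathbf p=\mathbf w-\mathbf P_{\mathbf p}\mathbf w$, each checked by inspection over the four patterns, and this is also where I would confirm that the permutation forced by the computation is exactly the one in \emph{Definition 2}. (iii) \emph{Globalization:} combining (ii) with $\mathbf Q_{\tau}\mathrm{R}_{\mathbf v}(\mathbf x)=\mathrm{R}_{\mathbf v_{\tau}}(\mathbf Q_{\tau}\mathbf x)$ and stacking the check blocks of $\mathbf A$ and $\mathbf b$ gives $\mathbf A\mathbf D=\mathbf P\mathbf A$ and, evaluating at $\mathbf x=\mathbf 0$ (recall $\mathrm{R}_{\mathbf v}(\mathbf 0)=\mathbf v$), $\mathbf A\mathbf v=\mathbf b-\mathbf P\mathbf b$. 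Transposing the first relation and using the symmetry of $\mathbf D,\mathbf P$ yields $\hat{\mathbf a}_{i}^{T}\mathbf P=(1-2v_{i})\hat{\mathbf a}_{i}^{T}$, and with the column-orthogonality $\mathbf A^{T}\mathbf A=\textrm{diag}(\mathbf e)$ also $\hat{\mathbf a}_{i}^{T}\mathbf A\mathbf v=e_{i}v_{i}$.

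Under the inductive hypotheses $\mathbf v_{0}^{k}=\mathrm{R}_{\mathbf v}(\mathbf v^{k})$, $\mathbf z_{0}^{k}=\mathbf P\mathbf z^{k}$, $\mathbf y_{0}^{k}=\mathbf P\mathbf y^{k}$, each update is then a short substitution. In S.1, plugging these together with $q_{i}\mapsto(1-2v_{i})q_{i}$ and $\mathbf b=\mathbf P\mathbf b+\mathbf A\mathbf v$ into \eqref{vi-update-2fanshu} (the quantities $e_{i},\theta_{i},\hat{\mathbf a}_{i}$ are unchanged), then using $\hat{\mathbf a}_{i}^{T}\mathbf P=(1-2v_{i})\hat{\mathbf a}_{i}^{T}$ and $\hat{\mathbf a}_{i}^{T}\mathbf A\mathbf v=e_{i}v_{i}$, reduces the pre-projection value for the all-zeros run to $\bigl(e_{i}v_{i}+(1-2v_{i})\,\hat{\mathbf a}_{i}^{T}(\mathbf b-\mathbf z^{k}-\tfrac{\mathbf y^{k}}{\mu})-\tfrac{2(1-2v_{i})q_{i}+\alpha}{2\mu}\bigr)/\theta_{i}$; a two-line case split shows this equals $\overline{v}_{i}^{k+1}$ when $v_{i}=0$ and $1-\overline{v}_{i}^{k+1}$ when $v_{i}=1$, and since $\underset{[0,1]}\Pi(1-t)=1-\underset{[0,1]}\Pi(t)$ we obtain $\mathbf v_{0}^{k+1}=\mathrm{R}_{\mathbf v}(\mathbf v^{k+1})$. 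In S.2, substituting $\mathbf v_{0}^{k+1}=\mathbf D\mathbf v^{k+1}+\mathbf v$ and $\mathbf y_{0}^{k}=\mathbf P\mathbf y^{k}$ into \eqref{zi update} and using $\mathbf A\mathbf D=\mathbf P\mathbf A$, $\mathbf A\mathbf v=\mathbf b-\mathbf P\mathbf b$ rewrites the argument of the projection as $\mathbf P(\mathbf b-\mathbf A\mathbf v^{k+1}-\tfrac{\mathbf y^{k}}{\mu})$; since the coordinatewise projection onto $[0,+\infty)$ commutes with $\mathbf P$, this gives $\mathbf z_{0}^{k+1}=\mathbf P\mathbf z^{k+1}=\mathcal{T}_{\mathbf v}(\mathbf z^{k+1})$. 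In S.3, the same substitutions in \eqref{lamda-update-ori-penalized} (equivalently \eqref{y-update-simplify}) collapse $\mathbf y_{0}^{k}+\mu(\mathbf A\mathbf v_{0}^{k+1}+\mathbf z_{0}^{k+1}-\mathbf b)$ to $\mathbf P\bigl(\mathbf y^{k}+\mu(\mathbf A\mathbf v^{k+1}+\mathbf z^{k+1}-\mathbf b)\bigr)$, i.e.\ $\mathbf y_{0}^{k+1}=\mathcal{T}_{\mathbf v}(\mathbf y^{k+1})$.

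The conceptual content is light; I expect the real work to be the bookkeeping in fact (ii) --- exhausting the four admissible parity patterns and confirming that the forced permutation coincides exactly with $\mathcal{M}_{\mathbf v_{\tau}}$ of \emph{Definition 2} --- together with being careful that the "don't-care" auxiliary coordinates of $\mathbf v$, where $\mathbf q$ vanishes, are handled consistently, and that the induction is correctly seeded as noted above. Everything else follows from the column-orthogonality of $\mathbf A$, its $0/\pm1$ entries, and the $\pm1/0$ structure of $\mathbf T$.
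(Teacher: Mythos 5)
Your proposal is correct and follows essentially the same route as the paper's proof: your per-check identity (fact (ii)) is precisely the content of the paper's \emph{Corollary 1} together with the explicit block computations for $\mathbf{z}_{\tau}$ and $\mathbf{y}_{\tau}$, and your case split on $v_{i}$ for the $\mathbf{v}$-update mirrors the paper's treatment of the $v_i=1$ and $v_i=0$ cases via channel symmetry and the identity $\hat{\mathbf{a}}_i^T\mathbf{b}=e_i/2$. The only difference is presentational: by packaging the relabellings as the involutions $\mathbf{D}$ and $\mathbf{P}$ and deriving $\mathbf{A}\mathbf{D}=\mathbf{P}\mathbf{A}$ and $\mathbf{A}\mathbf{v}=\mathbf{b}-\mathbf{P}\mathbf{b}$ once, you dispatch the $\mathbf{z}$- and $\mathbf{y}$-updates in one line each, where the paper instead verifies the permutation pattern-by-pattern on each $4\times 1$ block.
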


\begin{proof}
   {see Appendix \ref{proof-lemma2}}.
\end{proof}

Now it is ready to prove {\it Lemma \ref{admm-stop}}.

\begin{proof}
 let $\{{\mathbf{z}^0}^0, {\mathbf{y}^0}^0\}$ and $\{{\mathbf{z}^0}, \mathbf{y}^0\}$ be the initial values when we decode the all-zeros codeword and general codeword $\mathbf{c}$ using {\it Algorithm \ref{ADMM-penalized-alg}}. We set them to satisfy ${\mathbf{z}^{0}}^{0}=\mathcal{T}_{\mathbf{v}}(\mathbf{z}^{0})$ and ${\mathbf{y}^{0}}^{0}=\mathcal{T}_{\mathbf{v}}(\mathbf{y}^{0})$.
 Then, in the first iteration, by {\it Lemma \ref{admm-iter}}, we have  ${\mathbf{v}^{0}}^{1} = \rm{R}_{\mathbf{v}}(\mathbf{v}^{1})$, ${\mathbf{z}^{0}}^{1}=\mathcal{T}_{\mathbf{v}}(\mathbf{z}^{1})$ and ${\mathbf{y}^{0}}^{1}=\mathcal{T}_{\mathbf{v}}(\mathbf{y}^{1})$. Continuing with the iterations, we can get ${\mathbf{z}^{0}}^{k} = \rm{R}_{\mathbf{v}}(\mathbf{v}^{k})$,  ${\mathbf{z}^{0}}^{k}=\mathcal{T}_{\mathbf{v}}(\mathbf{z}^{k})$ and ${\mathbf{z}^{0}}^{k}=\mathcal{T}_{\mathbf{v}}(\mathbf{y}^{k})$.

Consider the termination criterion in {\it Algorithm \ref{ADMM-penalized-alg}}. We have the following derivations
{\setlength\abovedisplayskip{6pt}
   \setlength\belowdisplayskip{6pt}
    \setlength\jot{2pt}
     \begin{equation}\label{termination1}
      \begin{split}
        \hspace{0.3cm} \parallel \mathbf{Av}^{k}+\mathbf{z}^{k}-\mathbf{b} \parallel_{2}^{2} &= \mathop{\sum}\limits_{\tau =1}^{\Gamma_{c}}\parallel\mathbf{T}\mathbf{v}_{\tau}^{k}+\mathbf{z}_{\tau}^{k}-\mathbf{w}\parallel_{2}^{2} \\
       & = \mathop{\sum}\limits_{\tau =1}^{\Gamma_{c}}\parallel \mathbf{T}{\mathbf{v}_{\tau}^{0}}^k+{\mathbf{z}_{\tau}^{0}}^k-\mathbf{w}\parallel_{2}^{2} \\ &=  \parallel \mathbf{A}{\mathbf{v}^{0}}^k+{\mathbf{z}^{0}}^k-\mathbf{b} \parallel_{2}^{2},
      \end{split}
     \end{equation}
where} the second equality holds since ${\mathbf{v}^{0}}^{k}=\mathrm{R}_{\mathbf{v}}(\mathbf{v}^{k})$ and ${\mathbf{z}_{\tau}^{0}}^k=\mathcal{M}_{\mathbf{v}_{\tau}}(\mathbf{z}_{\tau}^k)$ and we can find that vector $\mathbf{T}\mathbf{v}_{\tau}^{k}+\mathbf{z}_{\tau}^{k}-\mathbf{w}$ can be obtained from the permutation of vector $\mathbf{T}{\mathbf{v}_{\tau}^{0}}^k+{\mathbf{z}_{\tau}^{0}}^k-\mathbf{w}$.
It means that both of the decoding procedures in {\it Algorithm \ref{ADMM-penalized-alg}}  for $\mathbf{r}$ and $\mathbf{r}^{0}$ should always be terminated at the same time. Therefore, we can conclude $\hat{\mathbf{v}}^{0} = \mathrm{R}_{\mathbf{v}}(\hat{\mathbf{v}})$.
\end{proof}

\section{Proof of Lemma \ref{admm-iter}}\label{proof-lemma2}
Before proving \emph{Lemma \ref{admm-iter}}, we give the following corollary based on \emph{Definition 2} in appendix \ref{proof-lemma1}.

\emph{Corollary 1:} the $\tau th$ three-variables {parity-check} equation, $\mathbf{z}_{\tau}$ and its \emph{mapping operator} $\mathcal{M}_{\mathbf{v}_{\tau}}(\mathbf{z}_{\tau})$ have the following property
{\setlength\abovedisplayskip{5pt}
   \setlength\belowdisplayskip{5pt}
    \setlength\jot{3pt}
     \begin{equation*}\label{tz-tz0}
       \mathbf{t}_{\ell}^{T}\mathcal{M}_{\mathbf{v}_{\tau}}(\mathbf{z}_{\tau}) = \left\{
        \begin{split}
          &~~~\mathbf{t}_{\ell}^{T}\mathbf{z}_{\tau}, ~~~~~ & v_{\tau_{\ell}}=0, \\
          &-\mathbf{t}_{\ell}^{T}\mathbf{z}_{\tau}, ~~~ & v_{\tau_{\ell}}=1, \\
        \end{split}
        \right.
     \end{equation*}
where} $\mathbf{t}_{\ell}$, $\ell=1,2,3$, denotes the $\ell th$ column of matrix $\mathbf{T}$.
\begin{proof}
   {see Appendix \ref{proof-corollary1}}.
\end{proof}

Now we present the proof of \emph{Lemma \ref{admm-iter}}.

\begin{proof}
to simplify the derivations, we denote \eqref{vi-update-2fanshu} as $v_{i}^{k+1} = \underset{[0,1]}\Pi(\overline{v}_{i}^{k+1})$, where
{\setlength\abovedisplayskip{6pt}
   \setlength\belowdisplayskip{6pt}
    \setlength\jot{2pt}
     \begin{equation}
       \overline{v}_{i}^{k+1} = \frac{1}{e_{i}-\frac{\alpha}{\mu}}\bigg(\mathbf{a}_{i}^{T}\Big(\mathbf{b}-\mathbf{z}^{k}-\frac{\mathbf{y}^{k}}{\mu}\Big)-\frac{2q_{i}+ \alpha}{2\mu} \bigg).
     \end{equation}

We} first consider the case that $v_i=1$ is transmitted. Then, there exist
{\setlength\abovedisplayskip{6pt}
   \setlength\belowdisplayskip{6pt}
    \setlength\jot{6pt}
      \begin{equation}\label{v-component-update2}
       \begin{split}
        1-\overline{v}_{i}^{k+1} & \overset{a}= \frac{1}{e_{i}-\frac{\alpha}{\mu}}\bigg( \mathbf{a}_{i}^{T}\Big(\mathbf{b}+\mathbf{z}^{k}+\frac{\mathbf{y}^{k}}{\mu}\Big)+\frac{2q_{i}- \alpha}{2\mu}\bigg) \\
        & \overset{b}= \frac{1}{e_{i}\!-\!\frac{\alpha}{\mu}}\bigg(\mathbf{a}_{i}^{T}\Big(\mathbf{b}\!-\!{\mathbf{z}^{0}}^k\!-\!\frac{{\mathbf{y}^{0}}^k}{\mu}\Big)\!-\!\frac{2q_{i}^{0}\!+\! \alpha}{2\mu}\bigg),
       \end{split}
      \end{equation}
where} $q_{i}^0\in \mathbf{q}^{0}=[\pmb{\gamma}^{0}; \mathbf{0}]$ and $\pmb{\gamma}^{0}$ is the LLR vector when the all-zeros codeword is transmitted.

In \eqref{v-component-update2}, the equality ``$\overset{a}=$'' follows from $\mathbf{a}_{i}^{T}\mathbf{b}=2d_{i}$ and $e_{i}=4d_{i}$, where $d_{i}$ denotes the nonzero number in the $ith$ column of the {parity-check} matrix $\mathbf{H}$. The equality ``$\overset{b}=$'' holds if $q_{i}^{0}=-q_{i}$, $\mathbf{a}_i^T\mathbf{z}^k=-\mathbf{a}_i^T{\mathbf{z}^0}^k$, and $\mathbf{a}_i^T\mathbf{y}^{k}=-\mathbf{a}_i^T{\mathbf{y}^{0}}^k$. It is obvious that $q_{i}^{0}=-q_{i}$ when $v_{i}=1$ since the channel is assumed to be symmetrical.
Now we need to verify $\mathbf{a}_i^T\mathbf{z}^k=-\mathbf{a}_i^T{\mathbf{z}^0}^k$, and $\mathbf{a}_i^T\mathbf{y}^{k}=-\mathbf{a}_i^T{\mathbf{y}^{0}}^k$.
Observing \eqref{Abq}, one can find that there exist $\lfloor \mathbf{t}_{\ell}\rfloor$ sub-vectors same as $\mathbf{t}_{\ell}$ in $\mathbf{a}_{i}$, where $\lfloor \pmb{\theta} \rfloor$ is the number of the vectors same as $\pmb{\theta}$. Moreover, we denote $\mathbf{z}_{s_{\ell}}$ as the length-4 auxiliary variables corresponding to the $s_{\ell}th$, $s_{\ell}=1,\cdots,\lfloor\mathbf{t}_{\ell}\rfloor$, sub-vector as the same as $\mathbf{t}_{\ell}$. Based on the above statements, when $v_{i}=1$, we have
{\setlength\abovedisplayskip{6pt}
   \setlength\belowdisplayskip{6pt}
    \setlength\jot{6pt}
     \begin{equation}\label{tz-tz0-1}
      \begin{split}
        & \hspace{0cm} \mathbf{a}_i^T\mathbf{z}^k = \sum_{s_{1}=1}^{\lfloor \mathbf{t}_{1}\rfloor} \mathbf{t}_{1}^{T}\mathbf{z}_{s_{1}}^{k}+\sum_{s_{2}=1}^{\lfloor \mathbf{t}_{2}\rfloor} \mathbf{t}_{2}^{T}\mathbf{z}_{s_{2}}^{k}+\sum_{s_{3}=1}^{\lfloor \mathbf{t}_{3}\rfloor} \mathbf{t}_{3}^{T}\mathbf{z}_{s_{3}}^{k} \\
        & \hspace{0.8cm} = -\sum_{s_{1}=1}^{\lfloor \mathbf{t}_{1}\rfloor} \mathbf{t}_{1}^{T}\mathbf{z}_{s_{1}}^{0^{k}}-\sum_{s_{2}=1}^{\lfloor \mathbf{t}_{2}\rfloor} \mathbf{t}_{2}^{T}\mathbf{z}_{s_{2}}^{0^{k}}-\sum_{s_{3}=1}^{\lfloor \mathbf{t}_{3}\rfloor} \mathbf{t}_{3}^{T}\mathbf{z}_{s_{3}}^{0^{k}} \\
        & \hspace{0.8cm} = -\mathbf{a}_i^T\mathbf{z}^{0^{k}}.
      \end{split}
     \end{equation}
In} \eqref{tz-tz0-1}, the second equality holds since $\mathbf{t}_{\ell}^{T}\mathbf{z}_{s_{\ell}}^{k} = -\mathbf{t}_{\ell}^{T}\mathbf{z}_{s_{\ell}}^{0^{k}}$, $\ell=1,2,3$, when $v_{i}=1$ based on \emph{Corollary 1}.
With similar derivation of \eqref{tz-tz0-1}, we can also obtain that $\mathbf{a}_i^T\mathbf{y}^{k}=-\mathbf{a}_i^T{\mathbf{y}^{0}}^k$ when $v_{i}=1$. Thus, equation \eqref{v-component-update2} holds, which means that ${\overline{v}_{i}^{0}}^{k+1} = 1-\overline{v}_{i}^{k+1}$, when $v_{i}=1$.
Moreover, since $\underset{[0,1]}\Pi({\overline{v}_{i}^{0}}^{k+1})=1-\underset{[0,1]}\Pi(\overline{v}_{i}^{k+1})$, we have
{\setlength\abovedisplayskip{2pt}
   \setlength\belowdisplayskip{2pt}
    \setlength\jot{2pt}
\begin{equation}\label{v-update-result}
{v_{i}^{0}}^{k+1} = 1-v_{i}^{k+1}.
\end{equation}

For} the case when $v_i=0$ is transmitted over to the channel, there exist
{\setlength\abovedisplayskip{10pt}
   \setlength\belowdisplayskip{10pt}
    \setlength\jot{6pt}
     \begin{equation}\label{v-component-update3}
       \overline{v}_{i}^{k+1} = \frac{1}{e_{i}-\frac{\alpha}{\mu}}\bigg(\mathbf{a}_{i}^{T}\Big(\mathbf{b}-{\mathbf{z}^{0}}^k-\frac{{\mathbf{y}^{0}}^k}{\mu}\Big)-\frac{2q_{i}^{0}+ \alpha}{2\mu}\bigg),
     \end{equation}
where} the equality holds because $q_{i}^{0}=q_{i}$ when $v_{i}=0$ under the symmetrical channel, and based on \emph{Corollary 1}, we have
{\setlength\abovedisplayskip{6pt}
   \setlength\belowdisplayskip{6pt}
    \setlength\jot{6pt}
     \begin{equation}\label{tz-tz0-2}
       \begin{split}
         & \hspace{0cm} \mathbf{a}_i^T\mathbf{z}^k = \sum_{s_{1}=1}^{\lfloor \mathbf{t}_{1}\rfloor} \mathbf{t}_{1}^{T}\mathbf{z}_{s_{1}}^{k}+\sum_{s_{2}=1}^{\lfloor \mathbf{t}_{2}\rfloor} \mathbf{t}_{2}^{T}\mathbf{z}_{s_{2}}^{k}+\sum_{s_{3}=1}^{\lfloor \mathbf{t}_{3}\rfloor} \mathbf{t}_{3}^{T}\mathbf{z}_{s_{3}}^{k} \\
         & \hspace{0.8cm} = \sum_{s_{1}=1}^{\lfloor \mathbf{t}_{1}\rfloor} \mathbf{t}_{1}^{T}\mathbf{z}_{s_{1}}^{0^{k}}+\sum_{s_{2}=1}^{\lfloor \mathbf{t}_{2}\rfloor} \mathbf{t}_{2}^{T}\mathbf{z}_{s_{2}}^{0^{k}}+\sum_{s_{3}=1}^{\lfloor \mathbf{t}_{3}\rfloor} \mathbf{t}_{3}^{T}\mathbf{z}_{s_{3}}^{0^{k}} \\
         & \hspace{0.8cm} = \mathbf{a}_i^T\mathbf{z}^{0^{k}},
       \end{split}
     \end{equation}
when} $v_{i}=0$. With a similar derivation of \eqref{tz-tz0-2}, we also can get that $\mathbf{a}_i^T\mathbf{y}^{k}=\mathbf{a}_i^T{\mathbf{y}^{0}}^k$ when $v_{i}=0$.

Observing \eqref{v-component-update3}, we can find that, when $v_i=0$,
{\setlength\abovedisplayskip{8pt}
   \setlength\belowdisplayskip{8pt}
    \setlength\jot{3pt}
     \begin{equation}\label{v-update-result0}
      {v_{i}^{0}}^{k+1} = v_{i}^{k+1}.
     \end{equation}
Combining} \eqref{v-update-result} with \eqref{v-update-result0}, we get
{\setlength\abovedisplayskip{8pt}
   \setlength\belowdisplayskip{8pt}
    \setlength\jot{3pt}
     \begin{equation}\label{v-result}
      {\mathbf{v}^{0}}^{k+1} = \mathrm{R}_{\mathbf{v}}(\mathbf{v}^{k+1}).
     \end{equation}

Next,} we verify that ${\mathbf{z}^{0}}^{k+1} = \mathcal{T}_{\mathbf{v}}(\mathbf{z}^{k+1})$.
We first consider the relationship between $\mathbf{z}_{\tau}^{k+1}$ and ${\mathbf{z}_{\tau}^{0}}^{k+1}$.   Since ${\mathbf{y}_{\tau}^{0}}^k = \mathcal{M}_{\mathbf{v}_{\tau}}(\mathbf{y}_{\tau}^{k})$, there exist
${\mathbf{y}_{\tau}^{0}}^{k} = [y_{\tau_{2}}^{k}\ y_{\tau_{1}}^{k}\ y_{\tau_{4}}^{k}\ y_{\tau_{3}}^{k}]^{T}$ when $\mathbf{v}_{\tau}=[1\ 1\ 0]^{T}$.
Thus, when $\mathbf{v}_{\tau}=[1 \ 1\ 0]^T$, by \eqref{z-update-dual} and \eqref{v-result}, we obtain
{\setlength\abovedisplayskip{6pt}
   \setlength\belowdisplayskip{20pt}
    \setlength\jot{6pt}
\begin{equation}\label{z-tau}
\begin{split}
{\mathbf{z}_{\tau}^{0}}^{k+1} & = \underset{{[0,+\infty)}}\Pi\bigg(\mathbf{w}-\mathbf{T}{\mathbf{v}_{\tau}^{0}}^{k+1}-\frac{{\mathbf{y}_{\tau}^{0}}^{k}}{\mu}\bigg) \\
& = \underset{[0,+\infty)}\Pi\left(\mathbf{w}-\mathbf{T}\left[\!\!
                                                                 \begin{array}{c}
                                                                   1-v_{\tau_{1}}^{k+1} \\
                                                                   1-v_{\tau_{2}}^{k+1} \\
                                                                   v_{\tau_{3}}^{k+1} \\
                                                                 \end{array}
                                                               \!\!\right]
-\frac{1}{\mu}\left[\!\!
   \begin{array}{c}
     y_{\tau_{2}}^{k} \\
     y_{\tau_{1}}^{k} \\
     y_{\tau_{4}}^{k} \\
     y_{\tau_{3}}^{k} \\
   \end{array}
 \!\!\right] \right)  \\
& = \underset{[0,+\infty)}\Pi\left[\!\!
                     \begin{array}{c}
                       {v_{\tau_{1}}^{k+1}}-v_{\tau_{2}}^{k+1}+v_{\tau_{3}}^{k+1}-\frac{y_{\tau_{2}}^{k}}{\mu} \\
                       -v_{\tau_{1}}^{k+1}+v_{\tau_{2}}^{k+1}+v_{\tau_{3}}^{k+1}-\frac{y_{\tau_{1}}^{k}}{\mu} \\
                       2-\big(v_{\tau_{1}}^{k+1}+v_{\tau_{2}}^{k+1}+v_{\tau_{3}}^{k+1}\big)-\frac{y_{\tau_{4}}^{k}}{\mu} \\
                       v_{\tau_{1}}^{k+1}+v_{\tau_{2}}^{k+1}-v_{\tau_{3}}^{k+1}-\frac{y_{\tau_{3}}^{k}}{\mu} \\
                     \end{array}
                   \!\!\right].
\end{split}
\end{equation}
See} $\mathbf{T}$ and $\mathbf{w}$ in \eqref{t F matrix}. Then,  ${\mathbf{z}_{\tau}^{0}}^{k+1}$ can be further derived as
{\setlength\abovedisplayskip{10pt}
   \setlength\belowdisplayskip{16pt}
    \setlength\jot{16pt}
\[
\begin{split}
 {\mathbf{z}_{\tau}^{0}}^{k+1} &= \underset{_{[0,+\infty)}}\Pi\left[
                     \begin{array}{c}
                       w_2-\big(-{v_{\tau_{1}}^{k+1}}+v_{\tau_{2}}^{k+1}-v_{\tau_{3}}^{k+1}\big)-\frac{y_{\tau_{2}}^{k}}{\mu} \\
                       w_1-\big(v_{\tau_{1}}^{k+1}-v_{\tau_{2}}^{k+1}-v_{\tau_{3}}^{k+1}\big)-\frac{y_{\tau_{1}}^{k}}{\mu} \\
                       w_4-\big(v_{\tau_{1}}^{k+1}+v_{\tau_{2}}^{k+1}+v_{\tau_{3}}^{k+1}\big)-\frac{y_{\tau_{4}}^{k}}{\mu} \\
                       w_3-\big(-v_{\tau_{1}}^{k+1}-v_{\tau_{2}}^{k+1}+v_{\tau_{3}}^{k+1}\big)-\frac{y_{\tau_{3}}^{k}}{\mu} \end{array}
                   \right] \\
 \hspace{0.9cm} &= \Big[z_{\tau_{2}}^{k+1}~z_{\tau_{1}}^{k+1}~z_{\tau_{4}}^{k+1}~z_{\tau_{3}}^{k+1}\Big], \end{split}
\]
i.e.,}
    \begin{equation}\label{z0-z-k-1}
     \mathbf{z}_{\tau}^{{0}^{k+1}} = \mathcal{M}_{\mathbf{v}_{\tau}}(\mathbf{z}_{\tau}^{k+1}).
    \end{equation}

Through the above similar derivations, we can see \eqref{z0-z-k-1} also holds when $\mathbf{v}_{\tau}=[1 \ 0\ 1]^T$, $[0 \ 1\ 1]^T$, and $[0 \ 0\ 0]^T$. Since $\mathbf{z}^{k+1}$ is cascaded by $\mathbf{z}_{\tau}^{k+1}$, $\tau=1,\cdots, \Gamma_{c}$, and $\mathbf{z}^{{0}^{k+1}}$ is also cascaded by ${\mathbf{z}_{\tau}^{0}}^{k+1}$ in the same way, we get
     \begin{equation}\label{z_results}
      \mathbf{z}^{{0}^{k+1}} = \mathcal{T}_{\mathbf{v}}(\mathbf{z}^{k+1}).
     \end{equation}

Finally, it remains to verify that ${\mathbf{y}^{0}}^{k+1} = \mathcal{T}_{\mathbf{v}}(\mathbf{y}^{k+1})$.
We first consider the relationship between $\mathbf{y}_{\tau}^{k+1}$ and ${\mathbf{y}_{\tau}^{0}}^{k+1}$.
Since ${\mathbf{z}_{\tau}^{0}}^{k+1} = \mathcal{M}_{\mathbf{v}_{\tau}}(\mathbf{z}_{\tau}^{k+1})$ based on \eqref{z_results}, there exist
${\mathbf{z}_{\tau}^{0}}^{k+1} = [z_{\tau_{2}}^{k+1}\ z_{\tau_{1}}^{k+1}\ z_{\tau_{4}}^{k+1}\ z_{\tau_{3}}^{k+1}]^{T}$ when $\mathbf{v}_{\tau}=[1\ 1\ 0]^{T}$. Thus, when $\mathbf{v}_{\tau}=[1 \ 1\ 0]^T$, from \eqref{ADMM update_QP} and \eqref{v-result} we have
{\setlength\abovedisplayskip{10pt}
 \setlength\belowdisplayskip{15pt}
  \setlength\jot{15pt}
\begin{equation}\label{y-tau}
\begin{split}
{\mathbf{y}_{\tau}^{0}}^{k+1} & = {\mathbf{y}_{\tau}^{0}}^{k}+\mu\left(\mathbf{T}{\mathbf{v}_{\tau}^{0}}^{k+1}+\mathbf{z}_{\tau}^{{0}^{k+1}}-\mathbf{w}\right) \\
& = \left[\!\!
   \begin{array}{c}
     y_{\tau_{2}}^{k} \\
     y_{\tau_{1}}^{k} \\
     y_{\tau_{4}}^{k} \\
     y_{\tau_{3}}^{k} \\
   \end{array}
\!\! \right]
\!\! +\!\!\mu \left(\!\!\mathbf{T}\left[\!\!
      \begin{array}{c}
      1-v_{\tau_{1}}^{k+1} \\
      1-v_{\tau_{2}}^{k+1} \\
      v_{\tau_{3}}^{k+1} \\
      \end{array}
     \!\! \right]
\!\! +\!\! \left[\!\!
     \begin{array}{c}
       z_{\tau_{2}}^{k+1} \\
       z_{\tau_{1}}^{k+1} \\
       z_{\tau_{4}}^{k+1} \\
       z_{\tau_{3}}^{k+1} \\
     \end{array}
   \!\!\right]
 \!\!-\!\! \mathbf{w} \!\!\right)\\
 & =\left[\!\!\!\!
    \begin{array}{c}
      y_{\tau_{2}}^{k}\!+ \!\mu  \Big(\!-{v_{\tau_{1}}^{k+1}}\!+\!v_{\tau_{2}}^{k+1}\!-\!v_{\tau_{3}}^{k+1}\!+\!z_{\tau_{2}}^{k+1}\Big) \\
      y_{\tau_{1}}^{k}\!+\! \mu  \Big({v_{\tau_{1}}^{k+1}}\!-\!v_{\tau_{2}}^{k+1}\!-\!v_{\tau_{3}}^{k+1}\!+\!z_{\tau_{1}}^{k+1}\Big) \\
      y_{\tau_{4}}^{k}\!+\! \mu  \Big({v_{\tau_{1}}^{k+1}}\!+\!v_{\tau_{2}}^{k+1}\!+\!v_{\tau_{3}}^{k+1}\!+\!z_{\tau_{4}}^{k+1}\!-\!2\Big) \\
      y_{\tau_{3}}^{k}\!+ \!\mu  \Big(\!-\!{v_{\tau_{1}}^{k+1}}\!-\!v_{\tau_{2}}^{k+1}\!+\!v_{\tau_{3}}^{k+1}\big)\!+\!z_{\tau_{3}}^{k+1}\Big) \\
    \end{array}
  \!\!\!\!\right].
\end{split}
\end{equation}
Since} $\mathbf{w}=[0\ 0\ 0\ 2]^T$, ${\mathbf{y}_{\tau}^{0}}^{k+1}$ can be further derived as
{\setlength\abovedisplayskip{6pt}
 \setlength\belowdisplayskip{6pt}
  \setlength\jot{2pt}
\begin{equation}\label{y-tau}
\begin{split}
 \hspace{0.0cm} {\mathbf{y}_{\tau}^{0}}^{k+1} &\!\!=\!\!\left[\!\!\!\!
    \begin{array}{c}
      y_{\tau_{2}}^{k}\!\!+\! \mu  \Big(\!\big(\!-\!{v_{\tau_{1}}^{k+1}}\!+\!v_{\tau_{2}}^{k+1}\!-\!v_{\tau_{3}}^{k+1}\big)\!+\!z_{\tau_{2}}^{k+1}\!\!-\!w_{2}\Big) \\
      y_{\tau_{1}}^{k}\!\!+\! \mu  \Big(\!\big({v_{\tau_{1}}^{k+1}}\!-\!v_{\tau_{2}}^{k+1}\!-\!v_{\tau_{3}}^{k+1}\big)\!+\!z_{\tau_{1}}^{k+1}\!\!-\!w_{1}\Big) \\
      y_{\tau_{4}}^{k}\!\!+\! \mu  \Big(\!\big({v_{\tau_{1}}^{k+1}}\!+\!v_{\tau_{2}}^{k+1}\!+\!v_{\tau_{3}}^{k+1}\big)\!+\!z_{\tau_{4}}^{k+1}\!\!-\!w_{4}\Big) \\
      y_{\tau_{3}}^{k}\!\!+\! \mu  \Big(\!\big(\!-\!{v_{\tau_{1}}^{k+1}}\!-\!v_{\tau_{2}}^{k+1}\!+\!v_{\tau_{3}}^{k+1}\big)\!+\!z_{\tau_{3}}^{k+1}\!\!-\!w_{3}\Big) \\
    \end{array}
  \!\!\!\!\right]  \\
  &  =\!\! \left[y_{\tau_{2}}^{k+1}~y_{\tau_{1}}^{k+1}~y_{\tau_{4}}^{k+1}~y_{\tau_{3}}^{k+1}\right]^{T},
\end{split}
\end{equation}
i.e.,}
     \begin{equation}\label{y0-y-k-1}
       \mathbf{y}_{\tau}^{{0}^{k+1}} = \mathcal{M}_{\mathbf{v}_{\tau}}(\mathbf{y}_{\tau}^{k+1}),
     \end{equation}
when $\mathbf{v}_{\tau}=[1 \ 1\ 0]^T$.

With the above similar derivations, we can see \eqref{y0-y-k-1} also holds when $\mathbf{v}_{\tau}=[1 \ 0\ 1]^T$, $[0 \ 1\ 1]^T$, and $[0 \ 0\ 0]^T$. Since $\mathbf{y}^{k+1}$ and $\mathbf{y}^{{0}^{k+1}}$,$\tau=1,\cdots, \Gamma_{c}$, are cascaded by $\mathbf{z}_{\tau}^{k+1}$ and ${\mathbf{z}_{\tau}^{0}}^{k+1}$ respectively, we can get
\begin{equation}\label{y_results}
  \mathbf{y}^{{0}^{k+1}} = \mathcal{T}_{\mathbf{v}}(\mathbf{y}^{k+1}).
\end{equation}

This ends the proof of \emph{Lemma \ref{admm-iter}}.
\end{proof}

\section{Proof of Corollary 1} \label{proof-corollary1}

\begin{proof}
based on \emph{Definition 2}, there exist four cases for the relationship between $\mathbf{z}_{\tau}$ and its \emph{mapping operator} $\mathcal{M}_{\mathbf{v}_{\tau}}(\mathbf{z}_{\tau})$.
When $\mathbf{v}_{\tau}=[1 \ 1\ 0]^T$, we have $\mathcal{M}_{\mathbf{v}_{\tau}}(\mathbf{z}_{\tau})=[z_{\tau_{2}}\ z_{\tau_{1}}\ z_{\tau_{4}}\ z_{\tau_{3}}]^{T}$.
Thus, when $\mathbf{v}_{\tau_{1}}=1$, $\mathbf{v}_{\tau_{2}}=1$, and $\mathbf{v}_{\tau_{3}}=0$ respectively, we can obtain \eqref{case1-Mz-z-1}-\eqref{case1-Mz-z-3} as follows:
     \begin{equation}\label{case1-Mz-z-1}
      \begin{split}
       \mathbf{t}_{1}^{T}\mathcal{M}_{\mathbf{v}_{\tau}}(\mathbf{z}_{\tau})  & = -z_{\tau_{1}}+z_{\tau_{2}}+z_{\tau_{3}}-z_{\tau_{4}} \\
       & = -\mathbf{t}_{1}^{T}\mathbf{z}_{\tau},
      \end{split}
     \end{equation}
      \begin{equation}\label{case1-Mz-z-2}
       \begin{split}
          \mathbf{t}_{2}^{T}\mathcal{M}_{\mathbf{v}_{\tau}}(\mathbf{z}_{\tau}) & = z_{\tau_{1}}-z_{\tau_{2}}+z_{\tau_{3}}-z_{\tau_{4}} \\
          &  = -\mathbf{t}_{2}^{T}\mathbf{z}_{\tau},
       \end{split}
      \end{equation}
and
     \begin{equation}\label{case1-Mz-z-3}
      \begin{split}
       \mathbf{t}_{3}^{T}\mathcal{M}_{\mathbf{v}_{\tau}}(\mathbf{z}_{\tau}) & = -z_{\tau_{1}}-z_{\tau_{2}}+z_{\tau_{3}}+z_{\tau_{4}} \\
        & = \mathbf{t}_{3}^{T}\mathbf{z}_{\tau}.
      \end{split}
     \end{equation}
which can be further written as
     \begin{equation}\label{tz-tz0}
       \mathbf{t}_{\ell}^{T}\mathcal{M}_{\mathbf{v}_{\tau}}(\mathbf{z}_{\tau}) = \left\{
        \begin{split}
         &~~~\mathbf{t}_{\ell}^{T}\mathbf{z}_{\tau}, ~~~~~ & v_{\tau_{\ell}}=0, \\
         &-\mathbf{t}_{\ell}^{T}\mathbf{z}_{\tau}, ~~~ & v_{\tau_{\ell}}=1, \\
        \end{split}
        \right.
     \end{equation}
where $\ell=1,2,3$.

Moreover, through similar derivations, we can prove that \eqref{tz-tz0} also holds when $\mathbf{v}_{\tau}=[1 \ 0\ 1]^T$, $[0 \ 1\ 1]^T$, and $[0 \ 0\ 0]^T$. This ends the proof.
\end{proof}

%
%

\ifCLASSOPTIONcaptionsoff
  \newpage
\fi



%
%
%




\end{document}